\def\llncs{0}
\def\fullpage{1}
\def\anonymous{0}
\def\authnote{0}
\def\notxfont{0}
\def\submission{0}
\def\reply{0}
\def\cameraready{0}
\def\noaux{0}
\def\showlabels{0}
\ifnum\submission=1
\def\anonymous{1}
\def\llncs{1}
\else
\fi

\ifnum\cameraready=1
\def\llncs{1}
\def\anonymous{0}
\def\authnote{0}
\else
\fi

\ifnum\anonymous=1
\def\authnote{0}
\else
\fi

\ifnum\llncs=1
	\documentclass[envcountsect,a4paper,runningheads]{llncs}
\else
	\documentclass[letterpaper,hmargin=1.05in,vmargin=1.05in,usenames]{article}
			\ifnum\fullpage=1
		\usepackage{fullpage}
		\fi
\fi

\ifnum\reply=1
\usepackage{ulem}
\renewcommand{\emph}{\textit}
\else
\fi




\def\makeuppercase#1{
\expandafter\newcommand\csname sf#1\endcsname{\mathsf{#1}}
\expandafter\newcommand\csname frak#1\endcsname{\mathfrak{#1}}
\expandafter\newcommand\csname bb#1\endcsname{\mathbb{#1}}
\expandafter\newcommand\csname bf#1\endcsname{\textbf{#1}}
}

\def\makelowercase#1{
\expandafter\newcommand\csname frak#1\endcsname{\mathfrak{#1}}
\expandafter\newcommand\csname bf#1\endcsname{\textbf{#1}}
}

\newcounter{char}
\setcounter{char}{1}

\loop
	\edef\letter{\alph{char}}
	\edef\Letter{\Alph{char}}
	\expandafter\makelowercase\letter
	\expandafter\makeuppercase\Letter
	\stepcounter{char}
	\unless\ifnum\thechar>26
\repeat



\usepackage[%
pdfpagelabels=true,
linktocpage=false,
bookmarks=true,bookmarksnumbered=false,bookmarkstype=toc,
pagebackref=true,
colorlinks=true,linkcolor=darkblue,urlcolor=darkblue,citecolor=darkviolet
]
{hyperref}

\usepackage{amsmath, amsfonts, amssymb, mathtools,amscd}

\usepackage{amsthm}

\usepackage{lmodern}
\usepackage[T1]{fontenc}
\usepackage[utf8]{inputenc}
\usepackage{etex}
\usepackage{arydshln} 
\usepackage{url}
\usepackage{ifthen}
\usepackage{bm}
\usepackage{multirow}
\usepackage[dvips]{graphicx}
\usepackage{xcolor,colortbl} 
\usepackage{tikz}
\usetikzlibrary{matrix,arrows}
\usetikzlibrary{positioning}
\usetikzlibrary{decorations,decorations.text}
\usetikzlibrary{decorations.pathmorphing}
\usetikzlibrary{shapes}
\usepackage{threeparttable}
\usepackage{centernot}
\usepackage{dashbox}
\usepackage{comment}
\usepackage{paralist,verbatim}
\usepackage{cases}
\usepackage{booktabs}
\usepackage{fancybox}
\usepackage{braket}
\usepackage{cancel} 
\usepackage{ascmac} 
\usepackage{framed}
\usepackage{authblk}
\usepackage{pifont}
\usepackage{physics}

\definecolor{darkblue}{rgb}{0,0,0.6}
\definecolor{darkgreen}{rgb}{0,0.5,0}
\definecolor{maroon}{rgb}{0.5,0.1,0.1}
\definecolor{dpurple}{rgb}{0.2,0,0.65}
\definecolor{chocolate}{rgb}{0.8,0.4,0.1}
\definecolor{darkviolet}{RGB}{130,95,141}

\usepackage[capitalise,noabbrev]{cleveref}
\usepackage[absolute]{textpos}
\usepackage[final]{microtype}
\usepackage[absolute]{textpos}
\usepackage{everypage}
\usepackage{autonum}

\usepackage{dsfont}
\DeclareMathAlphabet{\mathpzc}{OT1}{pzc}{m}{it}

\ifnum\submission=1
\renewcommand*{\backref}[1]{}
\def\notxfont{1}
\pagestyle{plain}
\else
\fi

\ifnum\llncs=1
\renewcommand{\subparagraph}{\paragraph}
\else
\ifnum\notxfont=1
\else
\usepackage{mathpazo}
\usepackage{newtxtext}
\usepackage{helvet}
\fi
\fi

\ifnum\showlabels=1
\usepackage{showkeys}
\else
\fi

\usepackage{fontawesome}

\newtheoremstyle{thicktheorem}%
{\topsep}
{\topsep}
{\itshape}{}%
{\bfseries}%
{.}
{ }%
{\thmname{#1}\thmnumber{ #2}%
		\thmnote{ (#3)}%
}

\newtheoremstyle{remark}
{\topsep}
{\topsep}
	{}
	{}
	{}
	{.}
	{ }
	{\textit{\thmname{#1}}\thmnumber{ #2}
			\thmnote{ (#3)}%
	}

\ifnum\llncs=0
	\theoremstyle{thicktheorem}
	\newtheorem{theorem}{Theorem}[section]
	\newtheorem{lemma}[theorem]{Lemma}
	\newtheorem{corollary}[theorem]{Corollary}
	
	\newtheorem{definition}[theorem]{Definition}

	\theoremstyle{remark}
	
	\newtheorem{remark}[theorem]{Remark}

\fi

	\crefname{theorem}{Theorem}{Theorems}
	\crefname{assumption}{Assumption}{Assumptions}
	\crefname{construction}{Construction}{Constructions}
	\crefname{corollary}{Corollary}{Corollaries}
	\crefname{conjecture}{Conjecture}{Conjectures}
	\crefname{definition}{Definition}{Definitions}
	\crefname{exmaple}{Example}{Examples}
	\crefname{experiment}{Experiment}{Experiments}
	\crefname{counterexample}{Counterexample}{Counterexamples}
	\crefname{lemma}{Lemma}{Lemmata}
	\crefname{observation}{Observation}{Observations}
	\crefname{proposition}{Proposition}{Propositions}
	\crefname{remark}{Remark}{Remarks}
	\crefname{claim}{Claim}{Claims}
	\crefname{fact}{Fact}{Facts}
	\crefname{note}{Note}{Notes}

\ifnum\llncs=1
 \crefname{appendix}{App.}{Appendices}
 \crefname{section}{Sec.}{Sections}
\else
\fi

\ifnum\llncs=1
\pagestyle{plain}
\renewcommand*{\backref}[1]{}
\else
	\renewcommand*{\backref}[1]{(Cited on page~#1.)}
	\ifnum\notxfont=1
	\else
		\usepackage{newtxtext}
	\fi
\fi

\newcommand{\maskval}{y_{\mathsf{mask}}}
\newcommand{\maskeduk}{\ct_{\uk}}
\newcommand{\comrand}{y_{\mathsf{comr}}}
\newcommand{\uekeylen}{\ell_{\mathsf{uk}}}
\newcommand{\comrandlen}{\ell_{\mathsf{comr}}}
\newcommand{\pfinplen}{\ell_{\mathsf{inp}}}

\newcommand{\PFone}{\calP\calF_{\pfinplen}^1}
\newcommand{\PFm}{\calP\calF^{\mathsf{mlt}}}
\newcommand{\PFs}{\calP\calF^1}
\newcommand{\Resamp}{\mathtt{Resamp}}

\newcommand{\Dresamp}[1]{D_{#1\textrm{-}\mathsf{resamp}}}

\newcommand{\Com}{\mathsf{Com}}
\newcommand{\EqSetup}{\mathsf{EqSetup}}
\newcommand{\Open}{\mathsf{Open}}

\newcommand{\numa}{n}

\newcommand{\CPABE}{\mathsf{CPABE}}

\newcommand{\UPE}{\mathsf{UPE}}
\newcommand{\upe}{\mathsf{upe}}

\newcommand{\qABESim}{\qalgo{ABESim}}
\newcommand{\qSimEnc}{\qalgo{SimEnc}}
\newcommand{\qSimKG}{\qalgo{SimKG}}

\newcommand{\ptriv}{p^{\mathtt{triv}}}

\newcommand{\CCObf}{\mathsf{CC}.\mathsf{Obf}}

\newcommand{\lock}{\mathsf{lock}}

\newcommand{\CC}{\mathsf{CC}}

\newcommand{\expo}{\mathsf{exp}}
\newcommand{\MoEBB}{\mathsf{MoEBB84}}
\newcommand{\MoEBBLeak}{\mathsf{MoEBB84Leak}}

\newcommand{\QFHE}{\mathsf{QFHE}}
\newcommand{\qEval}{\qalgo{Eval}}

\newcommand{\uk}{\mathsf{uk}}
\newcommand{\CP}{\mathsf{CP}}
\newcommand{\qCopyProtect}{\qalgo{CopyProtect}}
\newcommand{\qSim}{\qalgo{Sim}}

\newcommand{\qSDKG}{\SDE.\qalgo{KG}}
\newcommand{\SDEnc}{\SDE.\algo{Enc}}
\newcommand{\qSDDec}{\SDE.\qalgo{Dec}}
\newcommand{\sdct}{\mathsf{sdct}}

\newcommand{\BadDec}{\mathsf{BadDec}}

\newcommand{\CPFE}{\mathsf{CPFE}}

\newcommand{\HE}{\mathsf{HE}}
\newcommand{\HKG}{\mathsf{HKG}}
\newcommand{\HEnc}{\mathsf{HEnc}}
\newcommand{\HDec}{\mathsf{HDec}}
\newcommand{\hct}{\mathsf{hct}}

\newcommand{\UE}{\mathsf{UE}}

\newcommand{\Leak}{\mathsf{Leak}}

\newcommand{\ue}{\keys{ue}}

\newcommand{\qfhe}{\mathsf{qfhe}}


\newcommand{\Drev}{D^{\mathtt{rev}}}
\newcommand{\cPrev}{\cP^{\mathtt{rev}}}

\newcommand{\pp}{\mathsf{pp}}


\newcommand{\evalct}{\mathsf{evct}}

\newcommand{\SDE}{\algo{SDE}}

\newcommand{\pirateD}{\qalgo{D}}

\newcommand{\sde}{\algo{sde}}
\newcommand{\qdk}{\qstate{dk}}
\newcommand{\fe}{\algo{fe}}

\newcommand{\share}{s}

\newcommand{\qKG}{\qalgo{KG}}
\newcommand{\qDec}{\qalgo{Dec}}

\newcommand{\qencrypt}{\qalgo{Enc}}

\newcommand{\qdecrypt}{\qalgo{Dec}}
\newcommand{\qct}{\qalgo{ct}}


\newcommand{\cnc}[2]{\mathbf{CC}[#1,#2]}







\newcommand{\qB}{\qalgo{B}}
\newcommand{\qC}{\qalgo{C}}
\newcommand{\qD}{\qalgo{D}}
\newcommand{\inplen}{n}

\newcommand{\msglen}{k}



\newcommand{\Oracle}[1]{O_{\mathtt{#1}}}

\makeatletter
\newcounter{game}
\def\newGames#1#2#3{%
  \xdef\gameNS{#1}\xdef\gamePrefix{#2}\setcounter{game}{#3}\addtocounter{game}{-1}%
  \immediate\write\@auxout{\string\expandafter\string\xdef\noexpand\csname game-prefix-#1\string\endcsname{#2}}%
}
\def\newGame#1{%
  \xdef\prevGame{\gamePrefix\arabic{game}}\stepcounter{game}\xdef\thisGame{\gamePrefix\arabic{game}}%
  \immediate\write\@auxout{\string\expandafter\string\xdef\noexpand\csname game-\gameNS-#1\string\endcsname{\arabic{game}}}%
}
\makeatother
\def\safecsname#1{\expandafter\ifx\csname#1\endcsname\relax\else\csname#1\endcsname\fi}

\renewcommand{\Game}[1][]{\mathcmd{\textrm{Game\if!#1!\else~\ensuremath{#1}\fi}}}


\newcommand{\td}{\mathsf{td}}


\usepackage[font=small,
format=plain,
labelformat=simple, 
singlelinecheck=false,
labelfont=bf,
up
]{caption}
\DeclareCaptionFormat{myformat}{#1#2#3\hrulefill}
\captionsetup{labelfont={color=black,bf},format=myformat}




\newcommand{\qA}{\qalgo{A}}
\newcommand{\qS}{\qalgo{S}}

\DeclareFontFamily{U}{skulls}{}
\DeclareFontShape{U}{skulls}{m}{n}{ <-> skull }{}

\newcommand{\qstateq}{\qstate{q}}


\usepackage{mathtools}

\newcommand{\chosen}{\leftarrow}

\renewcommand{\gets}{\leftarrow}

\newcommand{\la}{\leftarrow}
\newcommand{\ra}{\rightarrow}

\newcommand{\seteq}{\coloneqq}

\newcommand{\tensor}{\otimes}
\newcommand{\concat}{\|}

\newcommand{\setbk}[1]{\{#1\}}



\newcommand{\cC}{\mathcal{C}}
\newcommand{\cD}{\mathcal{D}}
\newcommand{\cE}{\mathcal{E}}

\newcommand{\cH}{\mathcal{H}}
\newcommand{\cI}{\mathcal{I}}
\newcommand{\cK}{\mathcal{K}}

\newcommand{\cM}{\mathcal{M}}
\newcommand{\cN}{\mathcal{N}}

\newcommand{\cP}{\mathcal{P}}

\newcommand{\cR}{\mathcal{R}}
\newcommand{\cS}{\mathcal{S}}

\newcommand{\cX}{\mathcal{X}}

\newcommand{\tlr}{\widetilde{r}}

\newcommand{\tlP}{\tilde{P}}

\newcommand{\tlC}{\widetilde{C}}
\newcommand{\tlD}{\widetilde{D}}

\newcommand{\tlp}{\widetilde{p}}


\newcommand{\R}{\mathbb{R}}





\newcommand{\Ms}{\mathcal{M}}

\newcommand{\coin}{\keys{coin}}



\newcommand{\Params}{\algo{\Params}}


\newcommand{\secp}{\lambda}

\newcommand{\ctlen}{{\ell_{\mathsf{ct}}}}



\newcommand{\IND}{\textrm{IND}}

\newcommand{\ind}{\mathsf{ind}}






\newcommand{\adva}[2]{\mathsf{Adv}_{#1}^{\mathsf{#2}}}
\newcommand{\advb}[3]{\mathsf{Adv}_{#1}^{\mathsf{#2} \mbox{-} \mathsf{#3}}}
\newcommand{\advc}[4]{\mathsf{Adv}_{#1}^{\mathsf{#2} \mbox{-} \mathsf{#3} \mbox{-} \mathsf{#4}}}
\newcommand{\advd}[5]{\mathsf{Adv}_{#1}^{\mathsf{#2} \mbox{-} \mathsf{#3} \mbox{-} \mathsf{#4} \mbox{-} \mathsf{#5}}}

\newcommand{\expt}[2]{\mathsf{Expt}_{#1}^{\mathsf{#2}}}
\newcommand{\expa}[2]{\mathsf{Expt}_{#1}^{\mathsf{#2}}}
\newcommand{\expb}[3]{\mathsf{Exp}_{#1}^{ \mathsf{#2} \mbox{-} \mathsf{#3}}}
\newcommand{\expc}[4]{\mathsf{Exp}_{#1}^{ \mathsf{#2} \mbox{-} \mathsf{#3} \mbox{-} \mathsf{#4}}}
\newcommand{\expd}[5]{\mathsf{Exp}_{#1}^{\mathsf{#2} \mbox{-} \mathsf{#3} \mbox{-} \mathsf{#4} \mbox{-} \mathsf{#5}}}

\newcommand{\hybi}[1]{\mathsf{Hyb}_{#1}}
\newcommand{\hybij}[2]{\mathsf{Hyb}_{#1}^{#2}}

\newcommand*{\sk}{\keys{sk}}
\newcommand*{\pk}{\keys{pk}}

\newcommand*{\msk}{\keys{msk}}
\newcommand*{\mpk}{\keys{mpk}}

\newcommand*{\MPK}{\keys{MPK}}

\newcommand*{\ck}{\keys{ck}}
\newcommand*{\ek}{\keys{ek}}

\newcommand*{\hk}{\keys{hk}}

\newcommand*{\ct}{\keys{ct}}

\newcommand*{\msg}{\keys{m}}

\newcommand*{\keys}[1]{\mathsf{#1}}
\newcommand*{\qstate}[1]{\mathpzc{#1}}
\newcommand*{\qreg}[1]{{\color{gray}{\mathsf{#1}}}}

\newcommand*{\algo}[1]{\ensuremath{\mathsf{#1}}}
\newcommand*{\qalgo}[1]{\ensuremath{\mathpzc{#1}}}




\newenvironment{boxfig}[2]{\begin{figure}[#1]\fbox{\begin{minipage}{0.97\linewidth}
                        \vspace{0.2em}
                        \makebox[0.025\linewidth]{}
                        \begin{minipage}{0.95\linewidth}
            {{
                        #2 }}
                        \end{minipage}
                        \vspace{0.2em}
                        \end{minipage}}}{\end{figure}}

\newcommand{\pprotocol}[4]{
\begin{boxfig}{t!}{\footnotesize 
\centering{\textbf{#1}}
    #4
} \caption{#2}
\label{#3}
\end{boxfig}
}

\newcommand{\protocol}[4]{
\pprotocol{#1}{#2}{#3}{#4} }

 \newcounter{expitem}[table]



\newcommand{\bit}{\{0,1\}}


\newcommand{\mat}[1]{\boldsymbol{#1}}








\newcommand{\Setup}{\algo{Setup}}

\newcommand{\KG}{\algo{KG}}
\newcommand{\Enc}{\algo{Enc}}
\newcommand{\Dec}{\algo{Dec}}

\newcommand{\qEnc}{\qalgo{Enc}}

\newcommand\FE{\algo{FE}}

\newcommand{\GC}{\algo{GC}}
\newcommand{\Garble}{\algo{Grbl}}





\newcommand{\PRG}{\algo{PRG}}

\newcommand{\Eval}{\algo{Eval}}







\newcommand{\qExtract}{\qalgo{Extract}}
\newcommand{\API}{\qalgo{API}}

\newcommand{\shiftdis}[1]{\Delta_{\mathsf{Shift}}^{#1}}

\newcommand{\projimp}{\algo{ProjImp}}

\newcommand{\BadExt}{\mathsf{BadExt}}


\newcommand{\Commit}{\algo{Com}}

\newcommand{\com}{\keys{com}}





\newcommand{\cind}{\stackrel{\mathsf{c}}{\approx}}

\newcommand{\negl}{{\mathsf{negl}}}






\newcommand{\poly}{{\mathrm{poly}}}

\newcommand{\zo}[1]{\{0,1\}^{#1}}
\newcommand{\bin}{\{0,1\}}

\newcommand{\xor}{\oplus}

\newcommand{\class}[1]{\mathsf{#1}}
\newcommand{\classPpoly}{\class{P}/\class{poly}}

\newcommand{\Ppoly}{\classPpoly}








\newcommand{\calC}{\mathcal{C}}

\newcommand{\calF}{\mathcal{F}}

\newcommand{\calP}{\mathcal{P}}






\newcommand{\SUC}{{\tt SUC}}



\newcommand{\Sim}{\algo{Sim}}

\newcommand{\SKE}{\algo{SKE}}

\newcommand{\E}{\algo{E}}
\newcommand{\D}{\algo{D}}







\newcommand{\lbl}{\mathsf{label}}

\newcommand{\Hash}{\algo{Hash}}

\newcommand{\state}{\mathsf{st}}


\newcommand{\authornote}[3]{\textcolor{#3}{[\textbf{#1:} {#2}]}}
\ifnum\authnote=1
\newcommand{\fuyuki}[1]{\authornote{fuyuki}{#1}{red}}
\newcommand{\ryo}[1]{\authornote{ryo}{#1}{darkblue}}
\else
\newcommand{\fuyuki}[1]{}
\newcommand{\ryo}[1]{}
\fi

\ifnum\llncs=1
\let\oldvec\vec
\let\vec\oldvec
%
%
\setcounter{tocdepth}{2}
\makeatletter
\renewcommand*\l@author[2]{}
\renewcommand*\l@title[2]{}
\makeatletter
\fi    

\theoremstyle{remark}


\ifnum\submission=1
\title{
\textbf{One-out-of-Many Unclonable Cryptography:\\
Definitions, Constructions, and More}\thanks{{\color{red}{\emph{We attached the full version of this paper as a supplementary material}}}.}
}
\else
\title{
\textbf{One-out-of-Many Unclonable Cryptography:\\
Definitions, Constructions, and More}
}
\fi

\begin{document}

\ifnum\anonymous=1
\author{\empty}
\ifnum\llncs=1
\institute{\empty}
\else
\fi
\else
%
%
\ifnum\llncs=1
\author{
	Fuyuki Kitagawa\inst{1} \and Ryo Nishimaki\inst{1}
}
\institute{
	NTT Social Informatics Laboratories
}
\else
%
%
\author[$\dagger$]{Fuyuki Kitagawa}
\author[$\dagger$]{Ryo Nishimaki}
\affil[$\dagger$]{{\small NTT Social Informatics Laboratories, Tokyo, Japan}\authorcr{\small \{fuyuki.kitagawa.yh,ryo.nishimaki.zk\}@hco.ntt.co.jp}}
\renewcommand\Authands{, }
\fi 
\fi

\ifnum\llncs=1
\date{}
\else
\date{\today}
\fi

\maketitle

\begin{abstract}
 The no-cloning principle of quantum mechanics enables us to achieve amazing unclonable cryptographic primitives, which is impossible in classical cryptography.
However, the security definitions for unclonable cryptography are tricky.
Achieving desirable security notions for unclonability is a challenging task.
In particular, there is no indistinguishable-secure unclonable encryption and quantum copy-protection for single-bit output point functions in the standard model. To tackle this problem, we introduce and study relaxed but meaningful security notions for unclonable cryptography in this work. We call the new security notion \emph{one-out-of-many} unclonable security.

 We obtain the following results.
\begin{itemize}
\item We show that one-time strong anti-piracy secure secret key single-decryptor encryption (SDE) implies one-out-of-many indistinguishable-secure unclonable encryption.
\item We construct a one-time strong anti-piracy secure secret key SDE scheme in the standard model from the LWE assumption.
\item We construct one-out-of-many copy-protection for single-bit output point functions from one-out-of-many indistinguishable-secure unclonable encryption and the LWE assumption.
\item We construct one-out-of-many unclonable predicate encryption (PE) from one-out-of-many indistinguishable-secure unclonable encryption and the LWE assumption.
\end{itemize}
Thus, we obtain one-out-of-many indistinguishable-secure unclonable encryption, one-out-of-many copy-protection for single-bit output point functions, and one-out-of-many unclonable PE in the standard model from the LWE assumption.
In addition, our one-time SDE scheme is the first SDE scheme that does not rely on any oracle heuristics and strong assumptions such as indistinguishability obfuscation and witness encryption.
\end{abstract}

\ifnum\llncs=1
\else
\newpage
\setcounter{tocdepth}{2}
\tableofcontents

\newpage
\fi

\newcommand{\qaux}{\qstate{aux}}
\renewcommand{\share}{r}
\newcommand{\Can}[1]{\mathsf{Can}_{#1}}

\renewcommand{\msglen}{\ell}

\renewcommand{\Commit}{\mathsf{Commit}}

\newcommand{\subEvs}{\class{subEVS}}
\newcommand{\CandC}{\textrm{C\&C}}

\section{Introduction}\label{sec:intro}

\subsection{Background}\label{sec:background}
\paragraph{Unclonable encryption and quantum copy-protection.}
Quantum information enables us to achieve new cryptographic primitives beyond classical cryptography.
Especially the no-cloning principle of quantum information has given rise to amazing unclonable cryptographic primitives.
This includes quantum money~\cite{SIGACT:Wiesner83}, quantum copy-protection~\cite{CCC:Aaronson09}, unclonable encryption~\cite{TQC:BroLor20}, one-shot signatures \cite{STOC:AGKZ20}, single-decryptor encryption~\cite{EPRINT:GeoZha20,C:CLLZ21}, and many more.
In this work, we mainly focus on unclonable encryption and quantum copy-protection.

Broadbent and Lord~\cite{TQC:BroLor20} introduced unclonable encryption.
Unclonable encryption is a one-time secure secret key encryption where a plaintext is encoded into a quantum ciphertext that is impossible to clone.
More specifically, an unclonable encryption scheme encrypts a plaintext $\msg$ into a quantum ciphertext $\qct$.
The user who has the secret key can recover $\msg$ from $\qct$.
The security notion of unclonable encryption ensures that it is impossible to convert $\qct$ into possibly entangled bipartite states $\qct_1$ and $\qct_2$, both of which can be used to recover $\msg$ when the secret key is given.
Ananth and Kaleoglu~\cite{TCC:AnaKal21} later introduced unclonable public key encryption.
Unclonable encryption has interesting applications, such as preventing cloud storage attacks where an adversary steals ciphertexts from cloud storage with the hope that they can be decrypted if the secret key is leaked later.

Quantum copy-protection~\cite{CCC:Aaronson09} is a cryptographic primitive that prevents users from creating pirated copies of a program.
More specifically, a quantum copy-protection scheme transforms a classical program $C$ into a quantum program $\rho$ that is impossible to copy.
We can compute $C(x)$ for any input $x$ using $\rho$.
The security notion of copy-protection ensures that it is impossible to convert $\rho$ into possibly entangled bipartite states $\rho_1$ and $\rho_2$, both of which can be used to compute $C$.
As shown by Ananth and La Placa~\cite{EC:AnaLaP21}, it is impossible to have quantum copy-protection for general unlearnable functions.
For this reason, recent works have been studying quantum copy-protection for a simple class of functions such as point functions~\cite{ARXIV:ColMajPor20,TCC:AnaKal21,C:AKLLZ22,EPRINT:AnaKal22}.\footnote{
Some lines of works~\cite{C:CLLZ21,myTCC:LLQZ22} studied quantum copy-protection for cryptographic functionalities that are not captured by $\CandC$ programs.
Quantum copy-protections for cryptographic functionalities have different names, such as unclonable decryption or single decryptor encryption.
In this work, unless stated otherwise, we use the term quantum copy-protection to indicate quantum copy-protection for point functions.
For the previous works on quantum copy-protection for cryptographic functionalities, see \cref{sec:more_related_work}.
}
Moreover, Coladangelo, Majenz, and Poremba~\cite{ARXIV:ColMajPor20} show that quantum copy-protection for point functions can be transformed into quantum copy-protection for a more general class of compute-and-compare programs ($\CandC$ programs) that includes conjunctions with wildcards, affine testers, plaintext testers, and so on.
We focus on quantum copy-protection for point functions in this work.

\paragraph{Definition of unclonability: one-wayness and indistinguishability.}
To describe our research questions and contributions, we first explain a general template for unclonable security games played by a tuple of three adversaries $(\qA_0,\qA_1,\qA_2)$. The template is common to unclonable encryption and quantum copy-protection.
In the first stage, the challenger sends a challenge copy-protected object (such as a quantum ciphertext in unclonable encryption and copy-protected program in quantum copy-protection) to an adversary $\qA_0$.
Then, $\qA_0$ generates possibly entangled bipartite states $\rho_1$ and $\rho_2$ and sends $\rho_\alpha$ to $\qA_\alpha$ for $\alpha \in \setbk{1,2}$. In the second phase, the challenger sends extra information (such as secret keys in unclonable encryption and some inputs for the program in quantum copy-protection) to $\qA_1$ and $\qA_2$, and they try to compute target information about the copy-protected object (such as plaintexts in unclonable encryption and computation results on the given inputs in quantum copy-protection). Here, $\qA_1$ and $\qA_2$ are not allowed to communicate. If both $\qA_1$ and $\qA_2$ succeed in computing target information, the adversaries win. Note that if $\qA_\alpha$ has the original objects, computing target information is easy.

Using the above game, we can define both one-wayness-based notion and indistinguishability-based notion depending on which one the task of $\qA_1$ and $\qA_2$ is, recovering entire bits of high min-entropy information or detecting $1$-bit information.
Similarly to standard security notions, indistinguishability-based one is more general and enables us to have a wide range of applications.
Also, indistinguishability-based unclonability usually implies standard cryptographic security notions, but one-wayness-based unclonability does not necessarily imply them.\footnote{
For example, indistinguishability-based unclonability for unclonable encryption implies (one-time) IND-CPA security, but one-wayness-based unclonability does not.
}
In this work, we focus on indistinguishability-based unclonability notions.

\paragraph{Toward indistinguishability-based unclonability in the standard model.}
Unclonable encryption and quantum copy-protection have been studied actively and there are many constructions.
Although we have constructions with one-wayness-based unclonability from standard assumptions in the standard model~\cite{TQC:BroLor20,TCC:AnaKal21}, we have constructions with indistinguishability-based unclonability \emph{only in the oracle model}, in both unclonable encryption and quantum copy-protection~\cite{C:AKLLZ22}.
Ananth, Kaleoglu, Li, Liu, and Zhandry~\cite{C:AKLLZ22} proposed the only indistinguishability-based secure unclonable encryption and quantum copy-protection schemes.
Their proof technique is highly specific to the oracle model.
Thus, it still remains elusive to achieve unclonable encryption and quantum copy-protection with indistinguishability-based unclonability in the standard model.

Given the above situation, it is natural and reasonable to explore relaxed but meaningful indistinguishability-based unclonability and ask whether the notion can be achieved in the standard model.
Such a standard model construction with a relaxed notion would provide new insights toward achieving full-fledged indistinguishability-based unclonability in the standard model.

\subsection{Our Result}\label{sec:result}
Our contributions are proposing new definitions for unclonability and constructions satisfying them under the LWE assumption in the standard model.
\paragraph{New definitions: one-out-of-many unclonability notions.}
We introduce a relaxed indistinguishability-based unclonability for unclonable encryption and copy-protection, called \emph{one-out-of-many} unclonable security.
This notion captures a meaningful unclonability, as we argue below.
It guarantees that no adversary can generate $\numa$ copies with probability significantly better than $\frac{1}{\numa}$ for any $\numa$.
Thus, roughly speaking, it guarantees that the expected number of successful target objects generated by any copying adversary is less than $1$.
We define one-out-of-many unclonability by extending the unclonable game played by a tuple of three adversaries into a game played by $\numa+1$ adversaries, where $\numa \ge 2$ is arbitrary.

Although one-out-of-many unclonable security looks weaker than existing unclonable security, it is useful in some applications. For example, suppose we publish many quantum objects, say $\ell$ objects. Then, one-out-of-many security guarantees that no matter what copying attacks are applied to those objects, there are expected to be only $\ell$ objects on average in this world.
Another nice property of one-out-of-many security is that it implies standard cryptographic security notions.
For example, one-out-of-many unclonability for unclonable encryption implies (one-time) IND-CPA security.
This result contrasts one-wayness-based unclonability notions that do not necessarily imply standard indistinguishability notions.

\paragraph{Unclonable encryption in the standard model via single decryptor encryption.} 
We provide unclonable encryption satisfying one-out-of-many unclonability under the LWE assumption in the standard model.
We obtain this result as follows.

We first define one-out-of-many unclonability for (one-time) single decryptor encryption (SDE)~\cite{EPRINT:GeoZha20,C:CLLZ21}.
One-time SDE is a dual of unclonable encryption in the sense that one-time SDE is a one-time secret key encryption scheme where a secret key is encoded into a quantum state, and its security notion guarantees that any adversary cannot copy the quantum secret key.
Under appropriate definitions, it is possible to back and forth between unclonable encryption and one-time SDE, as shown by Georgiou and Zhandry~\cite{EPRINT:GeoZha20}.
We show that we can transform any one-time SDE with one-out-of-many unclonability to unclonable encryption with one-out-of-many unclonability.

We then show that we can obtain one-time SDE with one-out-of-many unclonability from the LWE assumption.
More specifically, assuming the LWE assumption, we construct one-time SDE satisfying strong anti-piracy introduced by Coladangelo, Liu, Liu, and Zhandry~\cite{C:CLLZ21}, and show that strong anti-piracy implies one-out-of-many unclonability.
Combining this result with the above transformation, we obtain unclonable encryption with one-out-of-many security under the LWE assumption.
\begin{theorem}[informal]
Assuming the LWE assumption holds, there exists strong anti-piracy secure one-time SDE.
\end{theorem}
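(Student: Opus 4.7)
The plan is to construct a one-time SDE whose quantum secret key is a BB84-type unclonable state, and to reduce strong anti-piracy to an information-theoretic monogamy-of-entanglement game for BB84 states with leakage. LWE enters through the classical components: a public-key encryption layer used to wrap the payload, and (possibly) $\QFHE$ to enable a coherent search-to-decision reduction against two non-communicating pirates at once. The template is inspired by prior coset-state-based SDE constructions, but by restricting to the one-time setting we avoid the need for indistinguishability obfuscation or witness encryption and can rely on BB84 monogamy rather than subspace-coset monogamy.

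First I would specify the scheme. The quantum key generation samples uniform $x,\theta\in\{0,1\}^{\lambda}$ and outputs the quantum key $|x\rangle_{\theta}$, keeping $(x,\theta)$ as classical master secret. To encrypt $m\in\{0,1\}$, the encryption algorithm samples a Goldreich--Levin-style hardcore predicate seed $r$ and outputs $\theta$ together with $(r,\langle r,x\rangle \oplus m)$, all wrapped inside an LWE-based public-key ciphertext bound to the master secret so that a pirate without the appropriate quantum state learns nothing about $m$. Decryption measures $|x\rangle_{\theta}$ in basis $\theta$, recovers $x$, and unmasks $m$.

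Second, I would prove strong anti-piracy. In the strong anti-piracy experiment, $\qA_0$ receives $|x\rangle_{\theta}$ and produces a bipartite state $\rho_{12}$; the two halves are handed to non-communicating pirates $\qA_1,\qA_2$, each of whom receives an independently sampled challenge ciphertext and must distinguish encryptions of $m_0$ from $m_1$ with non-negligible advantage. Since the ciphertext reveals $\theta$ and a Goldreich--Levin mask, I would argue by a two-sided search-to-decision reduction that both $\qA_1$ and $\qA_2$ can be converted into extractors that each recover $x$ from their share of $\rho_{12}$ together with $\theta$; this would contradict the BB84 monogamy-of-entanglement game, formalized as $\ITSM$ or $\MoEBBLeak$ in the paper's macros. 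Semantic security of the LWE-based outer encryption is used in intermediate hybrids to swap the masked payload for a uniformly random value, so that the reduction's simulated view is indistinguishable from the real strong anti-piracy game.

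The main obstacle, I expect, is executing the Goldreich--Levin reduction \emph{simultaneously and coherently} against two non-communicating quantum adversaries, since the classical argument rewinds a single predictor and the naive quantum lift does not respect the split structure of $\rho_{12}$. The leakage variant $\MoEBBLeak$ is the right tool: it lets the reduction absorb the classical auxiliary information produced on each side (the Goldreich--Levin seed and an LWE trapdoor) into the game's leakage, provided each party's extractor acts only on its own share of $\rho_{12}$. Arguing that the two extractors can be realized in this local fashion, with LWE-based hiding covering any residual global consistency that the reduction would otherwise need, is the technical heart of the proof.
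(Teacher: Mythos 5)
Your high-level template is right---BB84 states as the quantum key, reduction to a monogamy-of-entanglement game with bounded leakage, and semantic security of an LWE primitive to massage the hybrids---but the construction you sketch has a gap that the paper's key ingredient (single-key ciphertext-policy FE with a succinct key, built from hash encryption) is specifically designed to close. First, correctness: you wrap the Goldreich--Levin mask inside an LWE public-key ciphertext ``bound to the master secret,'' but the only thing the honest decryptor possesses is $\ket{x^\theta}$, so it needs some way to strip that outer layer. The paper resolves this by using $\CPFE$ with succinct key in which $\sk_x=x$ itself: the decryptor measures $\ket{x^\theta}$ in basis $\theta$ to recover $x$ and uses it as the functional decryption key, while $\MPK=(\hk,h)$ with $h=\Hash(\hk,x)$ is only $\secp$ bits. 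Your proposal has no analogue of this, so the outer PKE either blocks decryption or is a no-op. Second, and more fundamentally, the extraction: recovering $\innerp{e_i}{x}=x[i]$ from a pirate decryptor requires feeding it queries whose index $r=e_i$ is not uniform, and an adversarial decoder can simply refuse to answer on structured $r$. What makes the paper's extraction go through is that the $i$-th test circuit $C^*[b,m_0,m_1,i]$ (which outputs $m_{b\oplus x[i]}$) is $\CPFE$-encrypted, and $1$-bounded $\CPFE$ security makes the resulting ciphertext computationally indistinguishable from an encryption of the honest constant circuit $C[m_{b\oplus x[i]}]$---so the decoder cannot detect that it is being probed. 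Combined with $\API$ and the almost-projective property, the bits of $x$ are read off one at a time from shift-distance estimates, with no rewinding or list-decoding. Without some layer providing that indistinguishability, the Goldreich--Levin reduction fails at exactly the point you flag as ``the technical heart''; the paper's answer to your open point is the succinct-key $\CPFE$, not $\QFHE$ (which plays no role in the SDE construction).

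Two smaller misalignments. You attribute the leakage in $\MoEBBLeak$ to ``the Goldreich--Levin seed and an LWE trapdoor,'' but GL seeds are public coins uncorrelated with $x$, and the reduction never needs to hand the pirates a trapdoor; in the paper the leakage is precisely $\Leak(x)=\Hash(\hk,x)$, i.e., the $\secp$-bit hash appearing in $\MPK$, which is short compared to $\abs{x}=n$ so that the $2^{\secp}/\expo(n)$ bound stays negligible. And your scheme is intrinsically single-bit (Goldreich--Levin masks one bit), whereas the paper's $\CPFE$-based scheme handles $\bit^{\msglen}$ directly---the paper explicitly notes that multi-bit unclonable primitives do not follow from single-bit ones by parallel repetition, so this is not a cosmetic difference.
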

\begin{theorem}[informal]
Assuming the LWE assumption holds, there exists one-out-of-many indistinguishable-secure unclonable encryption.
\end{theorem}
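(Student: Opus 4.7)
The plan is to chain together three ingredients the paper has already developed: the LWE-based construction of strong anti-piracy one-time SDE from the preceding theorem, the implication that strong anti-piracy implies one-out-of-many unclonable security of the SDE, and the SDE-to-UE duality that preserves one-out-of-many security. In short, the pipeline is LWE $\Rightarrow$ strong anti-piracy one-time SDE $\Rightarrow$ one-out-of-many unclonable one-time SDE $\Rightarrow$ one-out-of-many indistinguishable-secure unclonable encryption.

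First I would invoke the preceding theorem to obtain, under the LWE assumption, a one-time SDE scheme $\SDE$ satisfying the strong anti-piracy of Coladangelo, Liu, Liu, and Zhandry. Next I would argue that strong anti-piracy implies one-out-of-many unclonable security for $\SDE$ via a direct reduction: given any $(\numa{+}1)$-tuple $(\qA_0, \qA_1, \dots, \qA_\numa)$ that wins the one-out-of-many SDE game with probability noticeably above $1/\numa$, run $\qA_0$ on the challenge quantum decryption key, take the $\numa$ possibly entangled output pieces, and feed the $i$-th piece to $\qA_i$ together with an independently sampled strong anti-piracy challenge ciphertext. Because strong anti-piracy already models independent per-decryptor challenges, the reduction faithfully simulates the one-out-of-many game, and a noticeable gap above $1/\numa$ on average contradicts strong anti-piracy. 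Finally I would apply the Georgiou--Zhandry-style duality: the UE quantum ciphertext of a message $\msg$ is the quantum SDE decryption key under a freshly sampled classical SDE key $k$, and the UE classical secret key is the SDE-encryption of $\msg$ under $k$. A one-out-of-many UE adversary is simulated by a one-out-of-many SDE adversary that samples $k$ internally, forwards the quantum decryption key to the first-stage party, and hands each post-split party a fresh SDE ciphertext of the challenge message in the second phase; the threshold $1/\numa$ is inherited.

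The main obstacle will be verifying that this final duality step faithfully transports the one-out-of-many game without dilution of the $1/\numa$ threshold, because the second-phase information differs between the two games: in UE the same classical secret key is revealed to every post-split adversary, whereas in SDE each decryptor receives its own fresh ciphertext. Confirming that the two distributions coincide (rather than absorbing a factor of $\numa$ through a naive hybrid over the decryptors, which would collapse the gap above $1/\numa$) is precisely where the \emph{strong}, ciphertext-independent, per-decryptor form of anti-piracy is indispensable, and this is the step that will require the most careful definitional bookkeeping.
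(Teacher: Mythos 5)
Your pipeline --- LWE $\Rightarrow$ strong anti-piracy one-time SDE $\Rightarrow$ one-out-of-many anti-piracy SDE $\Rightarrow$ one-out-of-many unclonable-indistinguishable SKUE --- is exactly what the paper does, so the overall route is right. However, two of your details are off in ways that matter.

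First, your description of the Georgiou--Zhandry duality does not yield a syntactically valid UE scheme. You say the UE secret key is ``the SDE-encryption of $\msg$ under $k$,'' but the UE key generation algorithm $\UE.\KG$ must produce the key \emph{before} the plaintext is chosen, so a key that depends on $\msg$ is ill-typed. The paper instead samples the UE key $\uk$ as a uniform string of the same length as an SDE ciphertext, and sets the UE quantum ciphertext to $(\sde.\ct \oplus \uk,\ \sde.\qdk)$ where $(\sde.\ek, \sde.\qdk)$ are freshly generated and $\sde.\ct = \SDE.\Enc(\sde.\ek, \msg)$. This XOR one-time pad is exactly what makes the reduction work: the reduction samples a uniform string $\uk$ playing the role of $\sde.\ct \oplus \uk'$, hands $(\uk, \sde.\qdk)$ to $\qA_0$ as the UE ciphertext, and has the decryptor $\pirateD_i$ reconstruct the UE key as $\uk' = \sde.\ct \oplus \uk$ once the SDE challenger supplies $\sde.\ct$ in phase two. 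Without the XOR, the reduction cannot decouple the UE key from the SDE ciphertext it does not yet know.

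Second, the ``main obstacle'' you flag at the end is not actually present. In the one-out-of-many game only a single randomly chosen $\qA_\alpha$ receives any second-phase information, so there is never a simultaneous release of one UE key to every post-split party to compare against per-decryptor fresh SDE ciphertexts; that tension is real for the standard three-adversary GZ-versus-CLLZ comparison (see \cref{rem:SDE_def}), but is moot here. Relatedly, your account of the strong-anti-piracy $\Rightarrow$ one-out-of-many step is also off: the strong anti-piracy game never ``feeds'' ciphertexts to the pirates; it measures goodness via Zhandry's projective implementation. The paper's argument rewrites the one-out-of-many game so the challenger applies $\projimp(\mat{P}_{i,D})$ to every register to obtain per-decryptor success probabilities $p_1,\dots,p_\numa$ and outputs $1$ with probability $p_\alpha$; strong anti-piracy, applied to each pair of indices, forces at most one $p_i$ to exceed $\tfrac12 + \gamma$ with overwhelming probability, giving the bound $\tfrac1\numa\cdot 1 + \tfrac{\numa-1}{\numa}\bigl(\tfrac12+\gamma\bigr)$ directly. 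Also note the target threshold for the one-out-of-many game is $\tfrac12 + \tfrac1{2\numa}$, not ``$1/\numa$'' as you wrote.
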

To achieve one-time SDE satisfying strong anti-piracy, we develop a technique enabling us to use a BB84~\cite{TCS:BenBra14} state as a copy-protected secret key. Our crucial tool is single-key ciphertext-policy functional encryption (CPFE) with a succinct key, which we introduce in this work. We instantiate it with hash encryption (HE)~\cite{PKC:DGHM18} implied by the LWE assumption. The technique of the post-quantum watermarking by Kitagawa and Nishimaki~\cite{EC:KitNis22} inspired our proof technique.
We emphasize that \emph{our one-time SDE scheme is the first SDE scheme that does not require either oracle heuristic or strong assumptions such as indistinguishability obfuscation and witness encryption.}

\paragraph{Quantum copy-protection in the standard model via unclonable encryption.}

We propose quantum copy protection for single-bit output point functions based on unclonable encryption and the LWE assumption.
Known constructions from unclonable encryption~\cite{ARXIV:ColMajPor20,TCC:AnaKal21} support only multi-bit output point functions.\footnote{
One might think that copy protection for multi-bit output point functions implies that for single-bit output point functions.
However, this is not the case.
This is because the security of copy protection for multi-bit output point functions usually relies on the high min-entropy of the multi-bit output string, and it is broken if the output string does not have enough entropy as in the case of single-bit output.
Realizing copy protection for single-bit output point function is challenging in the sense that we have to achieve the security without relying on the entropy of the output string.
}
Although we formally prove this result with our new one-out-of-many security notion, our construction also works under standard indistinguishability-based unclonability definitions defined using three adversaries $(\qA_0,\qA_1,\qA_2)$.
\begin{theorem}[informal]
Assuming (resp. one-out-of-many) indistinguishable-secure unclonable encryption and the LWE assumption holds, there exits (resp. one-out-of-many) copy-protection for single-bit output point functions.
\end{theorem}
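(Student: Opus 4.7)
The plan is to build the copy-protected program for the point function $P_y$ as a pair $(\qct,\tlP)$, combining an unclonable ciphertext with an \LWE-based lockable obfuscation. Concretely, sample a uniformly random lock $r\in\zo{\lambda}$, set $\qct\gets\UE.\Enc(y,r)$ using the point $y$ itself as the one-time $\UE$ secret key, and compile a lockable obfuscation $\tlP$ whose function is the identity and whose lock value is $r$ with unlock output $1$. Evaluation on input $x$ runs $r'\gets\UE.\Dec(x,\qct)$ and returns $1$ if $\tlP(r')$ unlocks and $0$ otherwise. Correctness for $x=y$ is immediate from $\UE$ correctness; correctness for $x\ne y$ follows either from a wrong-key decryption property of the underlying $\UE$ scheme or, failing that, by prepending a short random tag to $r$ before encryption so that $\UE.\Dec(x,\qct)\ne r$ with overwhelming probability.

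The plan for security is a two-step hybrid leading to a reduction to one-out-of-many unclonable security of $\UE$. First, replace $\tlP$ with a simulator-generated obfuscation that always outputs $\bot$. This step uses the simulation security of the \LWE-based lockable obfuscation, whose hypothesis is that the lock $r$ is computationally unpredictable given the rest of the view; unclonable indistinguishability of $\UE$ (applied with the hidden-plaintext pair being $r$ and an independent random $r'$) supplies precisely this unpredictability, since the only $r$-dependent object in the view before the second phase is $\qct$. Second, in the resulting hybrid the obfuscation component is independent of $y$, so the only handle that any of the $\numa$ post-split adversaries has on $y$ is $\qct$ itself; a direct reduction then relays each CP-split's guess about whether its challenge input $x^*_i$ equals $y$ to the corresponding split in the one-out-of-many $\UE$ game, bounding the joint winning probability by $1/\numa+\negl(\lambda)$.

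The hard part will be the first hybrid. In the multi-bit-output setting of prior copy-protection constructions from unclonable encryption, the lock can be taken to be (or derived from) the hidden output string itself, which is sampled from a high min-entropy distribution by assumption, so the lockable-obfuscation simulation security applies for free. The single-bit case offers no such output entropy, which is why the construction must introduce a fresh auxiliary $r$ whose computational unpredictability is extracted entirely from $\UE$'s indistinguishable unclonable security; this is exactly the step where the weaker one-wayness flavor of $\UE$ would not suffice, motivating the use of (one-out-of-many) indistinguishable-secure $\UE$. A secondary subtlety is verifying that the quantum-auxiliary-input version of lockable-obfuscation simulation security composes cleanly with the $(\numa+1)$-party split, so that the final reduction can thread each CP-adversary split through to a matching $\UE$-adversary split. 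Once these two points are settled, the remainder is a straightforward composition, and the theorem follows by instantiating the $\UE$ component with the one-out-of-many scheme from \LWE established earlier in the paper.
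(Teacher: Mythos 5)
Your construction sets $\qct \gets \UE.\Enc(y,r)$, i.e., uses the hidden point $y$ itself as the one-time $\UE$ secret key. This is precisely where the argument breaks, and it is the reason the paper does \emph{not} simply re-run the Coladangelo--Majenz--Poremba construction with a lockable obfuscation bolted on. After your two hybrids (simulating $\tlP$, and switching the lock out of the plaintext slot via one-time IND-CPA/unclonable-IND security of $\UE$), what remains is $\qct = \UE.\Enc(y,r')$ together with an obfuscation that no longer carries any information. You then need to argue that handing the chosen split $\qA_\alpha$ the input $x=y$ versus a resampled $x$ changes its output by at most $1/\numa$. But that step is a \emph{key-recognition} question --- given a $\UE$ ciphertext under key $y$ and a candidate string, decide whether the candidate equals the key --- and it is not bounded by any form of one-out-of-many unclonable-indistinguishability, which speaks only to secrecy of the encrypted \emph{message} once the key is handed over, not to anonymity or unrecognizability of the key. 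In fact, the reduction you sketch does not typecheck: the one-out-of-many $\UE$ experiment gives $\qA_\alpha$ the actual key $\uk$ and asks it to guess the challenge \emph{bit}, while your CP hybrid asks $\qA_\alpha$ to decide whether a \emph{given candidate input} is the key; there is no way to relay one to the other. A concrete obstruction: if $\UE$ is modified to append any key-dependent classical tag to its ciphertext that enables verification of a candidate key (a modification perfectly consistent with one-out-of-many unclonable-indistinguishability, since message secrecy is untouched), then $\qA_0$ can classically broadcast that tag to every split and each $\qA_\alpha$ will answer the point-function query correctly, driving the joint advantage to essentially $1$.

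The paper's construction is engineered around exactly this issue. It draws a \emph{fresh} $\uk \gets \UE.\KG(1^\secp)$ independent of $y$, forms $\ue.\qct = \UE.\qEnc(\uk, \lock)$ (so the unclonable ciphertext has no intrinsic dependence on $y$), and connects $\uk$ to $y = \maskval\|\comrand$ only through the mask $\maskeduk = \maskval \oplus \uk$ and the injective commitment $\com = \Commit(\ck,\maskval;\comrand)$, both of which are sealed inside the QFHE ciphertext. The equivocal mode of the commitment is what lets the reduction present $x = (\maskval\oplus\uk)\|\comrand$ to $\qA_\alpha$ once the $\UE$ challenger reveals $\uk$; the QFHE layer is what lets the reduction subsequently replace $(\com,\maskeduk)$ with garbage after the CCObf simulation step, so that the final published state is statistically independent of $y$ and swapping $x=y$ for a resampled $x$ costs nothing. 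Your construction has no analogue of this ``untying'' mechanism.

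A secondary but real issue is the ordering of your hybrids. The LWE-based compute-and-compare/lockable obfuscation in \cref{def:CCObf} requires the lock to be sampled \emph{uniformly and independently} of the auxiliary quantum input $\qaux$, not merely computationally unpredictable given $\qaux$. In the order you propose, at the moment you invoke the obfuscation simulator the lock $r$ is still the plaintext inside $\qct$, so the hypothesis is not met. This piece is fixable by swapping the order (first switch the encrypted message to an independent value using unclonable-indistinguishability, then invoke the obfuscation simulator, as the paper does between $\hybi{4}$ and $\hybi{5}$), but the key-leakage problem above is not.
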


\paragraph{Unclonable predicate encryption.}
Using the technique proposed by Ananth and Kaleoglu~\cite{TCC:AnaKal21}, we can convert our one-out-of-many secure unclonable encryption into one-out-of-many secure unclonable public-key encryption.
We construct a one-out-of-many unclonable predicate encryption (PE) scheme from one-out-of-many unclonable encryption and the LWE assumption.
PE is a stronger variant of attribute-based encryption (ABE).
ABE is an advanced public-key encryption system where we can generate a user secret key for an attribute $x$ and a ciphertext of a message $\msg$ under a policy $P$.\footnote{We focus on ciphertext-policy ABE in this work.}
We can decrypt a ciphertext and obtain $\msg$ if $P(x)=1$. In PE, ciphertexts hide not only plaintexts but also policies.
\begin{theorem}[informal]
Assuming (resp. one-out-of-many) indistinguishable-secure unclonable encryption and the LWE assumption holds, there exits (resp. one-out-of-many) unclonable PE.
\end{theorem}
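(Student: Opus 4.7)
The plan is to build unclonable PE by hybrid encryption, combining a standard LWE-based (classical) predicate encryption scheme with the one-out-of-many unclonable symmetric-key encryption obtained earlier in the paper. The classical PE layer handles access control and policy hiding, while the UE layer supplies the quantum unclonability of the ciphertext. Since LWE already gives PE for a rich policy class, the main work is in designing the composition so that one-out-of-many security transports cleanly from the UE layer to the whole scheme.

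To encrypt a plaintext $\msg$ under a policy $P$, the encryption algorithm samples a fresh UE key $k$, computes a classical $\CT_{\PE} \gets \PE.\Enc(\mpk, P, k)$, and outputs $(\CT_{\PE}, \qct)$ with $\qct \gets \UE.\Enc(k, \msg)$. A user holding $\sk_x$ satisfying $P(x)=1$ first decrypts $\CT_{\PE}$ to recover $k$, then applies $\UE.\Dec(k, \qct)$ to recover $\msg$. Correctness is immediate; policy hiding and standard (single-copy) indistinguishability against unauthorized attackers reduce to the security of the underlying LWE-based PE together with one-time IND-CPA of UE, which, as already observed, is implied by one-out-of-many unclonability.

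For one-out-of-many unclonable security, the key observation is that since $\CT_{\PE}$ is classical and freely copyable, every phase-2 party may as well be given a copy of it; upon receiving its authorized $\sk_x$, such a party locally recovers $k$ from $\CT_{\PE}$, so the residual task of guessing the challenge bit from a quantum share of $\qct \gets \UE.\Enc(k, \msg_b)$ is exactly the one-out-of-many UE unclonable game. I plan to formalize this as a reduction that, given any adversary tuple $(\qA_0, \qA_1, \dots, \qA_{\numa})$ with advantage non-negligibly above $1/\numa$ in the unclonable PE game, embeds the UE challenge as $\qct^*$, simulates the PE layer and the authorized keys for $\qA_0$ and each $\qA_i$, and inherits essentially the same advantage in the UE game. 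The main obstacle is a timing gap: the reduction must produce $\CT_{\PE}^*$ in phase 1, before the UE challenger reveals $k^*$, yet the authorized keys handed out in phase 2 must decrypt $\CT_{\PE}^*$ to exactly this $k^*$. My plan to close the gap is to use an equivocable, receiver-non-committing flavor of the LWE-based PE: the simulator produces a fake $\CT_{\PE}^*$ in phase 1 independent of $k^*$, and once $k^*$ is revealed a simulation trapdoor is used to program each queried $\sk_x$ so that $\PE.\Dec(\sk_x, \CT_{\PE}^*) = k^*$. Such a receiver-non-committing PE can be instantiated under LWE by composing standard LWE-based PE with existing receiver-non-committing encryption techniques. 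With this simulator in place the reduction is tight and the parameter $\numa$ is preserved, so one-out-of-many unclonability of UE transfers to that of our PE construction; the three-adversary variant of the theorem follows by the same argument with $\numa = 2$.
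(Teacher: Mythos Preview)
Your non-committing-KEM idea is exactly the right insight, and it is what the paper uses: their ``receiver-non-committing PE'' is the AD-SIM secure CP-ABE introduced in \cref{def:adsim_cpabe_security} (built from AD-IND ABE plus compute-and-compare obfuscation in \cref{sec:ADSIM_CPABE}). The timing gap you identify and your equivocation fix are precisely the reason this primitive is needed, and the reduction you sketch is essentially the transition $\hybi{2}\to\hybi{3}$ in the paper's proof of \cref{thm:UPE_from_copy-protection_onesided}.

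Where your proposal falls short is policy hiding. The paper's target notion is \emph{simulation-based} (\cref{def:adp_unclonable_sim_pe,def:om_adp_unclonable_sim_pe}): the simulator $\qSim(1^\lambda,|C|,|\msg|)$ must produce a ciphertext that is independent of $C$, yet the second-stage adversary $\qA_\alpha$ receives \emph{real} keys for attributes satisfying $C$. In your hybrid scheme the classical component $\CT_{\PE}=\PE.\Enc(\mpk,C,k)$ sits directly in the ciphertext, and the AD-SIM simulator for the KEM still takes $C$ as input (as in the paper's \cref{def:adsim_cpabe_security}); there is no hybrid that erases $C$ from the view while keeping phase-2 keys real. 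Standard LWE-based PE gives attribute hiding only against \emph{non}-satisfying keys, so it does not rescue you here. Your construction therefore yields what the paper calls unclonable ABE via KEM--DEM (which they mention in passing), or an indistinguishability-style unclonable PE where $C$ is fixed across the two challenges, but not the simulation-based unclonable PE that the theorem refers to.

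The paper closes this gap with extra structure: the ABE ciphertext (carrying $C$) is \emph{encrypted under QFHE}, the UE encrypts a random $\lock$ rather than $\msg$, and a compute-and-compare obfuscation of $\QFHE.\Dec(\sk,\cdot)$ with target $\lock$ and payload $\msg$ lets a decryptor extract $\msg$ without ever seeing the QFHE secret key. After the one-out-of-many UE hybrid, the QFHE ciphertext can be switched to an encryption of zero and the CC program to a simulated one, at which point the view is independent of $C$ and $\msg$ as required.
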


\subsection{Concurrent and Independent Work}
Ananth, Kaleoglu, and Liu~\cite{ARXIV:AnaKalLiu23} introduce a new framework called cloning games to study unclonable cryptography. They obtain many implications to unclonable cryptography thanks to the framework. In particular, they obtain information-theoretically secure one-time SDE in the standard model. The scheme is \emph{single-bit} encryption while our computationally secure scheme is multi-bit encryption. \emph{Note that we do not know how to obtain multi-bit encryption from single-bit one via parallel repetition and the standard hybrid argument in unclonable cryptography.} 
Thus, their results on SDE is incomparable with ours.
We also note that it is not clear whether we can obtain one-out-of-many indistinguishable-secure unclonable encryption from their SDE, since it seems that we need strong anti-piracy secure one-time SDE to obtain one-out-of-many secure one, but their scheme is only proved to satisfy (non-strong) indistinguishability-based security.
For the detailed reason on the need of strong anti-piracy for the implication, see the ``SDE from LWE'' paragraph in \cref{sec:tech_overview}.


\subsection{Technical Overview}\label{sec:tech_overview}
We provide a high-level overview of our techniques in this subsection.
In this paper, standard math or sans serif font stands for classical algorithms and classical variables. The calligraphic font stands for quantum algorithms and the calligraphic font and/or the bracket notation for (mixed) quantum states.

\paragraph{Relaxed definition of unclonable cryptography.}
We introduce relaxed security notions for unclonable cryptography called \emph{one-out-of-many} unclonable security that roughly guarantees that no adversary can generate $\numa$ copies with a probability significantly better than $\frac{1}{\numa}$ for any $\numa$.
The one-out-of-many unclonability is defined by extending the unclonable game played by $(\qA_0,\qA_1,\qA_2)$ explained in~\cref{sec:background}.
The one-out-of-many unclonability game is played by a tuple of $\numa+1$ adversaries $(\qA_0,\qA_1,\cdots,\qA_\numa)$, where $\numa$ is arbitrary.
At the first stage of the game, $\qA_0$ is given a single quantum object, generates possibly entangled $\numa$-partite states $\rho_1,\ldots,\rho_\numa$, and sends $\rho_k$ to $\qA_k$ for $k\in \setbk{1,\ldots,\numa}$. At the second stage, the challenger selects one of $(\qA_1,\ldots,\qA_\numa)$ by a random $\alpha\chosen \setbk{1,\ldots, \numa}$ and sends additional information \emph{only to $\qA_\alpha$}, and only $\qA_\alpha$ tries to detect the target $1$-bit information.
Recall that we focus on indistinguishability-based setting.
The one-out-of-many unclonability guarantees that the adversary cannot win this game with probability significantly better than the trivial winning probability $\frac{1}{\numa}\cdot 1+\frac{\numa-1}{\numa}\cdot\frac{1}{2}=\frac{1}{2}+\frac{1}{2\numa}$.
The definition captures the above intuition because if $\qA_0$ could make $\numa$ copies with probability $\frac{1}{\numa}+\delta$ for some noticeable $\delta$, the adversary would win the game with probability at least $(\frac{1}{\numa}+\delta)\cdot 1+(1-\frac{1}{\numa}-\delta)\cdot\frac{1}{2}=\frac{1}{2}+\frac{1}{2\numa}+\frac{\delta}{2}$.

We consider one-out-of-many unclonable security for the following security notions in this work:
(1) (one-time) unclonable-indistinguishable security for unclonable encryption,
(2) copy-protection security for single-bit output point functions,
(3) (one-time) indistinguishability-based security for one-time SDE,
(4) unclonable-simulation security for PE, which is introduced in this work.

The nice property of one-out-of-many security are as follows. 
One-out-of-many security implies standard cryptographic security notions.
For example, one-out-of-many unclonability for unclonable encryption implies (one-time) IND-CPA security, and one for copy-protection implies distributional indistinguishability as virtual black-box obfuscation.
Moreover, we can use one-out-of-many secure unclonable cryptographic primitives as a drop-in-replacement of standard indistinguishability-based unclonable cryptographic primitives if our goal is constructing a one-out-of-many secure unclonable cryptographic primitive (and vice versa).
For example, the transformation from unclonable encryption to unclonable public-key encryption proposed by Ananth and Kaleoglu~\cite{TCC:AnaKal21} works also in the one-out-of-many setting.
Moreover, all of the generic constructions from an unclonable primitive to another unclonable primitive that we propose work in both standard (three adversary style) setting and one-out-of-many setting.

In addition to the above nice properties, we can prove that \emph{one-out-of-many indistinguishability-based secure one-time SDE is equivalent to one-out-of-many unclonable-indistinguishable secure unclonable encryption}.
In this work, we first obtain one-out-of-many indistinguishability-based secure one-time SDE from the LWE assumption, and using this equivalence, we obtain one-out-of-many secure unclonable encryption, copy protection for single-bit output point functions, and unclonable PE.

Before our work, Georgiou and Zhandry~\cite{EPRINT:GeoZha20} showed a transformation from one-time SDE to unclonable encryption under the standard three adversary style setting.
Informed readers may think that by combining their result with the result by Coladangelo et al.~\cite{C:CLLZ21} (and the result by Culf and Vidick~\cite{Quantum:CulVid22}), we can obtain indistinguishability-based unclonable encryption in the standard model based on indistinguishability obfuscation.
However, this is not the case due to the fact that those two works used different definitions of the indistinguishability-based security for SDE, and we do not know any relation between them.
For more details, see \cref{rem:SDE_def}.

\paragraph{SDE from LWE.}
We next explain how to obtain one-out-of-many indistinguishability-based secure one-time SDE.
In fact, we obtain one-time SDE satisfying much stronger security notion called strong anti-piracy~\cite{C:CLLZ21} from the LWE assumption, and prove that \emph{strong anti-piracy implies one-out-of-many indistinguishability-based security}.
Below, we first introduce the definition of strong anti-piracy for one-time SDE, briefly explain the intuition of the implication, and finally present the high level ideas on how to realize strong anti-piracy secure one-time SDE from the LWE assumption.

Recall the general template of the security game for unclonability played by three adversaries $(\qA_0,\qA_1,\qA_2)$ explained in \cref{sec:background}.
This template also captures the security game of strong anti-piracy for SDE.
In strong anti-piracy security for SDE, $\qA_0$ receives a copy-protected decryption key $\qdk$ in the first stage.
In the second stage, $\qA_1$ and $\qA_2$ outputs quantum decryptors $\qD_1$ and $\qD_2$, respectively, and the challenger tests whether both $\qD_1$ and $\qD_2$ are ``good'' (or ``live'') quantum decryptors~\cite{TCC:Zhandry20,C:CLLZ21}. Intuitively, good quantum decryptors can distinguish encryption of $m_0$ from that of $m_1$ with probability $\frac{1}{2}+\frac{1}{\poly(\secp)}$, and do not lose the decryption capability even after its goodness was tested. Strong anti-piracy guarantees that the probability that both $\qD_1$ and $\qD_2$ are tested as good is negligible. See~\cref{def:otsanti_piracy} for the precise definition.

The intuition behind the implication from strong anti-piracy to one-out-of-many security is as follows.
The one-out-of-many security game for a one-time SDE scheme played by a tuple of $\numa+1$ adversaries $(\qA_0,\cdots,\qA_\numa)$ is defined as follows.
The first stage adversary $\qA_0$ is given a quantum decryption key $\qdk$, generates possibly entangled $\numa$-partite states $\rho_1,\cdots\rho_\numa$, and sends $\rho_k$ to the second stage adversary $\qA_k$ for every $k\in\{1,\cdots,\numa\}$.
In the second stage, only randomly chosen single second stage adversary $\qA_\alpha$ is given the challenge ciphertext and required to guess the challenge bit.
The $\numa$-partite state $(\rho_1,\cdots,\rho_\numa)$ generated by $\qA_0$ can be regarded as a tuple of $\numa$ quantum decryptors.
If the one-time SDE scheme is strong anti-piracy secure, all $\numa$ quantum decryptors \emph{except one} must have success probabilities at most $1/2 +\negl(\secp)$. Hence, the success probability of $(\qA_0,\cdots,\qA_{\numa+1})$ in the one-out-of-many security game is at most $1/n \cdot 1 + (n-1)/n \cdot (1/2+\negl(\secp))=1/2+1/2n +\negl(\secp)$, which proves the one-out-of-many security. 
It seems that ``strong'' anti piracy is required for this argument and it is difficult to prove a similar implication from (non-strong) indistinguishability-based security defined by Coladangelo et al.~\cite{C:CLLZ21}.
To formally prove the implication, we have to construct a reduction algorithm that finds two ``good'' decryptors from $\numa$ decryptors output by $\qA$.
If the reduction attacks strong anti-piracy, it is sufficient to randomly pick two decryptors out of $\numa$ since the reduction's goal is to output two ``good'' decryptors with inverse polynomial probability.
However, if the reduction attacks (non-strong) indistinguishability-based security, the reduction cannot use such random guessing and needs to detect whether each decryptor is ``good'' since the reduction's goal is to make a distinguishing gap.
We are considering the one-time setting where the adversaries are not given the encryption key. Thus, it seems difficult to perform such detection of ``good'' decryptors.

We next explain how to achieve strong anti-piracy secure one-time SDE based on the LWE assumption in the standard model.
We use the monogamy of entanglement property of BB84 states~\cite{NJP:TFKW13} differently from the previous work on SDE~\cite{C:CLLZ21} that used the monogamy of entanglement property of coset states.

We combine BB84 states and ciphertext-policy FE with succinct key to achieve strong anti-piracy.
We first explain the definition of single-key CPFE.
A single-key CPFE scheme $\CPFE$ consists of three algorithms $(\FE.\Setup,\FE.\Enc,\FE.\Dec)$.
$\FE.\Setup$ takes as input a string $x$ and outputs a public key $\pk$ and a decryption key $\sk_x$.\footnote{We omit the security parameter for simplicity in this overview. The same is applied to other cryptographic primitives.} Here, we assume that $x$ itself works as a decryption key $\sk_x$ for $x$, thus $\FE.\Setup$ outputs only $\pk$. We can achieve such a CPFE (we will explain later).
$\FE.\Enc$ takes as input $\pk$ and a circuit $C$, and outputs a ciphertext $\ct$.
We can decrypt $\ct$ with $\sk_x$ using $\FE.\Dec$, and obtain $C(x)$.
The single-key security of CPFE guarantees that $\FE.\Enc(\pk,C_0)$ and $\FE.\Enc(\pk,C_1)$ are computationally indistinguishable for an adversary who has a decryption key $\sk_x$ for $x$ of its choice as long as $C_0(x)=C_1(x)$ holds.

Let $s[i]$ is the $i$-th bit of a string $s\in\zo{n}$.
Our one-time SDE scheme is as follows. The key generation algorithm generate a BB84 state $\ket{x^\theta} \seteq H^{\theta[1]}\ket{x[1]} \tensor \cdots \tensor H^{\theta[n]}\ket{x[n]}$, where $H$ is the Hadamard gate, and a public key $\pk \gets \FE.\Setup(x)$ of $\CPFE$. It outputs an encryption key $\ek \seteq (\theta,\pk)$ and decryption key $\qdk \seteq \ket{x^\theta}$. Note that although our one-time SDE scheme is secret key encryption, an encryption key and a decryption key are different.
The encryption algorithm takes as input the encryption key and a plaintext $\msg$, and generates a ciphertext $\fe.\ct \gets \FE.\Enc(\pk,C[\msg])$, where $C[\msg]$ is a constant circuit that outputs $\msg$ for all inputs. It outputs a ciphertext $(\theta,\fe.\ct)$.
We can decrypt $\fe.\ct$ and obtain $\msg$ by recovering $x$ from $\ket{x^\theta}$ and $\theta$ since $x$ works as a decryption key of $\CPFE$ as we assumed.
Intuitively, it is hard to copy $\qdk = \ket{x^\theta}$ by the monogamy of entanglement property.
The monogamy of entanglement property can be explained by the template of unclonable cryptography.
In the first stage, $\qA_0$ is given $\ket{x^\theta}$. In the second stage, $\qA_1$ and $\qA_2$ receive $\theta$ and try to output $x$.
It is proved that the winning probability of the adversaries is exponentially small without any assumptions~\cite{NJP:TFKW13}.

To prove the strong anti-piracy security, we need to extract $x$ both from good decryptors $\qD_1$ and $\qD_2$ respectively output by $\qA_1$ and $\qA_2$ to reduce the SDE security to the monogamy of entanglement property.
The idea for the extraction is as follows.
Let $\tlC[b,\msg_0,\msg_1,i]$ be a circuit that takes as input $x$ and outputs $\msg_{b \xor x[i]}$.
We estimate the probability that a good decryptor outputs the correct $b$ when we feed $\FE.\Enc(\pk,\tlC[b,\msg_0,\msg_1,i])$ to it.
The security of $\CPFE$ guarantees that $\FE.\Enc(\pk,\tlC[b,\msg_0,\msg_1,i])$ is indistinguishable from $\FE.\Enc(\pk,C[\msg_{b \xor x[i]}])$ since $\tlC[b,\msg_0,\msg_1,i](x)=\msg_{b\xor x[i]}=C[\msg_{b\xor x[i]}](x)$.
Hence, we can analyze the probability as follows.
\begin{itemize}
\item If $x[i]=0$, the distinguishing probability should be greater than $\frac{1}{2}$ since a good decryptor receives $\FE.\Enc(\pk,C[\msg_b])$ in its view and correctly guesses $b$ with probability $\frac{1}{2}+\frac{1}{\poly(\secp)}$.
\item If $x[i]=1$, the distinguishing probability should be smaller than $\frac{1}{2}$ since a good decryptor receives $\FE.\Enc(\pk,C[\msg_{1\xor b}])$ in its view and outputs the flipped bit $1\xor b$ with probability $\frac{1}{2}+\frac{1}{\poly(\secp)}$.
\end{itemize}
This means that we can decide $x[i]=0$ or $x[i]=1$ by estimating the success probability of a good decryptor that receives $\FE.\Enc(\pk,\tlC[b,\msg_0,\msg_1,i])$. Thus, we can extract $x$ from good decryptors.
This extraction technique is based on the post-quantum watermarking extraction technique by Kitagawa and Nishimaki~\cite{EC:KitNis22}.
Hence, the extraction succeeds without collapsing good quantum decryptors $\qD_1$ and $\qD_2$. See~\cref{sec:ot_SDE_HE} for the detail.

There is one subtle issue in the argument above. Since $\pk$ depends on $x$, we need leakage information about $x$ to simulate $\pk$ in the reduction. More specifically, let $\Leak(\cdot)$ be a leakage function and the reduction needs $\Leak(x)$ to simulate $\pk$ of $\CPFE$. We can consider such a leakage variant of the monogamy of entanglement game, where $\qA_0$ receives $\ket{x^\theta}$ and $\Leak(x)$ in the first stage. The variant holds if $\abs{\Leak(x)}=\secp$ that is short enough compared to $n = \abs{x}$ since we can simply guess $\Leak(x)$ with probability $\frac{1}{2^\secp}$. Although the bound is degraded to $\frac{2^\secp}{\exp(n)}$, it is still negligible by setting $n$ appropriately. Hence, we use single-key CPFE with \emph{succinct key} to ensure that $\Leak(x)$ does not have much information about $x$.

A single-key CPFE scheme has a succinct key if it satisfies the following properties.
$\FE.\Setup$ consists of two algorithms $(\HKG,\Hash)$, computes a hash key $\hk \gets \HKG(1^{\abs{x}})$ and a hash value $h \gets \Hash(\hk,x)$ from $x$, and outputs a public key $\pk \seteq (\hk,h)$ and a decryption key $\sk_x \seteq x$. The length of $h$ should be the same as the security parameter (no matter how large $x$ is).
These properties are crucial for our construction since we consider $\Leak(x)\seteq \Hash(\hk,\cdot)$ and a hash value $h$ does not have much information about $x$.

We can achieve single-key CPFE with succinct key from hash encryption (HE)~\cite{PKC:DGHM18}, which can be achieved from the LWE. We use HE instead of plain PKE in the well-known single-key FE scheme based on PKE~\cite{CCS:SahSey10}. Thanks to the compression property of hash encryption, we can achieve the succinct key property. A decryption key of HE is a pre-image of a hash. Hence, we can use $x$ as a decryption key $\sk_x$.

\paragraph{One-out-of-many unclonable encryption.}
Georgiou and Zhandry~\cite{EPRINT:GeoZha20} showed that under appropriate definitions, it is possible to transform one-time SDE to unclonable encryption.
We show that by using the same transformation, we can transform any one-out-of-many secure SDE to one-out-of-many secure unclonable encryption.
By combining this transformation with the above one-out-of-many SDE, we can obtain one-out-of-many secure unclonable encryption based on the LWE assumption.

\paragraph{Copy protection for single-bit output point functions.}
A point function $f_{y,\msg}$ is a function that outputs $\msg$ on input $y$ and outputs $0^\abs{\msg}$ otherwise.
When we say single-bit output, we set $\msg=1$.
When we say multi-bit output, we set $\msg$ as a multi-bit string sampled from some high min-entropy distribution.
We denote the family of single-bit output point functions and multi-bit output point functions as $\PFs$ and $\PFm$, respectively.

We introduce a simplified security game for copy protection for point functions.
It follows the template given in \cref{sec:background}.
The first stage adversary is given a copy protected program $\rho$ of a randomly generated point function $f_{y,\msg}$.
In the second stage, $\qA_1$ and $\qA_2$ are given a challenge input $x$ sampled from some distribution and try to output $f_{y,\msg}(x)$ simultaneously.
The copy protection security guarantees that the success probability of the adversary is bounded by the trivial winning probability.

Coladangelo et al.~\cite{ARXIV:ColMajPor20} proposed a generic construction of copy protection for $\PFm$ using unclonable encryption.\footnote{
 Ananth and Kaleoglu~\cite{TCC:AnaKal21} also proposed a similar construction.
}
The construction is as follows.
To copy protect a multi-bit output point function $f_{y,\msg}$, it generates a quantum ciphertext of $\msg$ of an unclonable encryption scheme $\UE$ under the key $y$, that is $\qct\la\UE.\qEnc(y,\msg)$.
For simplicity, we assume that given a key $y^\prime$ and a ciphertext $\qct^\prime$ of $\UE$, we can efficiently check whether $\qct^\prime$ is generated under the key $y^\prime$ or not.
Then, to evaluate this copy protected program with input $x$, it first checks if $x$ and $\qct$ match or not, and if so, just output the decryption result of $\qct$ under the key $x$.
We see that the construction satisfies the correctness.
Coladangelo et al. also show that if $\UE$ satisfies one-wayness-based unclonability, the construction satisfies copy protection security.

In this work, we propose a generic construction of copy protection for $\PFs$ using unclonable-indistinguishable secure unclonable encryption.
The above simple construction by Coladangelo et al. does not work if our goal is copy protection for $\PFs$, even if the underlying unclonable encryption is unclonable-indistinguishable secure.
The above construction crucially relies on the fact that $\msg$ is sampled from high min-entropy distribution in $\PFm$.
In fact, if $\msg$ is fixed as the case of $\PFs$, the construction is completely insecure under the above condition that we can efficiently check the correspondence between a key and a ciphertext of $\UE$, which is required to achieve correctness.

To fix this issue, our construction uses quantum FHE~\cite{FOCS:Mahadev18b} and obfuscation for $\CandC$ programs~\cite{FOCS:WicZir17,FOCS:GoyKopWat17}, both of which can be realized from the LWE assumption.
Roughly speaking, in our construction, the above UE-based copy protected program is encrypted by QFHE.
The evaluation of the new copy protected program is done by the homomorphic evaluation of QFHE, and we obtain the evaluation result from the QFHE ciphertext by using decryption circuit of QFHE obfuscated by obfuscation for $\CandC$ programs.
Our construction works in both standard three adversary style setting explained above and one-out-of-many setting.

\paragraph{Unclonable PE.}
We also define and construct unclonable PE.
Our security definition of unclonable PE is simulation-based.
It also can be seen as an extension of simulation-based security notion for (not unclonable) PE defined by Gorbunov et al.~\cite{C:GorVaiWee15}.

Our construction of unclonable PE is an extension of ABE-to-PE transformation based on obfuscation for $\CandC$ programs proposed in classical cryptography~\cite{FOCS:WicZir17,FOCS:GoyKopWat17}.
The above construction of copy protection for $\PFs$ can be extended to copy protection for $\CandC$ programs by encrypting a $\CandC$ program together with the ciphertext of unclonable encryption into the QFHE ciphertext.
At a high level, we show that by replacing obfuscation for $\CandC$ programs with this copy protection for $\CandC$ programs in the ABE-to-PE transformation, we can obtain unclonable PE.
To achieve hiding of policies in PE, the construction crucially uses the security notions of the underlying QFHE and obfuscation for $\CandC$ programs.
Thus, we do not use the abstraction of copy protection for $\CandC$ programs, and present our construction directly using ABE and the building blocks of our copy protection construction.
Our construction works in both standard three adversary style setting and one-out-of-many setting.

In the above transformation, we use simulation-based secure ABE instead on indistinguishability-based one.
As far as we know, simulation-based secure ABE was not studied before and there is no existing construction.
Thus, we construct simulation-based secure ABE by ourselves.
The construction is based on indistinguishability-based secure ABE and obfuscation for $\CandC$ programs, both of which can be based on the LWE assumption.
Interestingly, we can also use our simulation-based secure ABE to convert unclonable encryption into the first unclonable ABE via the standard KEM-DEM framework.


\ifnum\submission=1

\subsection{Organization}
In \cref{sec:one-out-of-many}, we introduce the notion of one-out-of-many unclonable security and prove some implications.
In \cref{sec:ot_SDE_HE}, we show how to construct secret key one-time SDE based on the LWE assumption.
In \cref{sec:CP}, we show how to realize one-out-of-many copy protection for single-bit output point functions. 
In \cref{sec:UPE}, we show how to realize one-out-of-many unclonable PE.

Due to the space limitation, we present more related works in~\cref{sec:more_related_work} and provide preliminaries, including notations in \cref{sec:prelim}.
Also, some security proofs are presented in the supplemental materials.

\else


\ifnum\submission=0
\subsection{More on Related Work}\label{sec:more_related_work}
\else
\section{More on Related Work}\label{sec:more_related_work}
\fi

\paragraph{Copy-protection for $\CandC$ programs.}
Aaronson proposed candidate constructions of copy-protection for point functions~\cite{CCC:Aaronson09}. However, he did not provide reduction-based proofs.
Coladangelo, Majenz, and Poremba proposed copy-protection for $\CandC$ programs in the QROM and copy-protection for multi-bit output point functions based on one-way-secure unclonable encryption~\cite{ARXIV:ColMajPor20}. They also show that we can convert copy-protection for point functions into copy-protection for $\CandC$ programs. Ananth and Kaleoglu proposed copy-protected point functions based on indistinguishable-secure unclonable encryption~\cite{TCC:AnaKal21}. Ananth et al.~\cite{C:AKLLZ22} proposed indistinguishable-secure unclonable encryption and copy-protection for single-bit output point functions in the QROM.
Ananth and La Placa~\cite{EC:AnaLaP21} show that there exists a class of functions that we cannot achieve copy-protection in the plain model.
Ananth and Kaleoglu~\cite{EPRINT:AnaKal22} extend the impossibility result by Ananth and La Placa~\cite{EC:AnaLaP21} and show that there exists a class of functions that we cannot achieve copy-protection in the classical-accessible random oracle model (CAROM). CAROM is a model where both constructions and adversaries can only classically access the random oracle.

\paragraph{Unclonable encryption.}

Broadbent and Lord~\cite{TQC:BroLor20} proposed the notion of unclonable encryption based on the idea by Gottesman~\cite{QIC:Gottesman03}.\footnote{The notion of unclonable encryption by Gottesman is slightly diffrent from the one in this paper. His definition focuses on tamper detection.} They considered two security definitions for unclonable encryption. One is one-wayness against cloning attacks (one-way-secure unclonable encryption) and they achieve information-theoretic one-wayness by using BB84 states. The other is indistinguishability against cloning attacks (indistinguishable-secure unclonable encryption). However, they did not achieve it. They constructed indistinguishable-secure unclonable encryption only in a very restricted model by using PRFs. Ananth and Kaleoglu~\cite{TCC:AnaKal21} proposed a transformation from unclonable encryption to public key unclonable encryption. Ananth et al.~\cite{C:AKLLZ22} proposed the first indistinguishable-secure unclonable encryption in the QROM.

\paragraph{Unclonable decryption.}

Georgiou and Zhandry~\cite{EPRINT:GeoZha20} proposed the notion of SDE and show the equivalence between indistinguishable-secure unclonable encryption and their SDE.\footnote{Selectively secure secret key SDE in the setting of honestly generated keys. See~\cite{EPRINT:GeoZha20} for the detail.}, Coladangelo et al.~\cite{C:CLLZ21} proposed new definitions of SDE and constructed a \emph{public key} SDE scheme that satisfies their definitions from IO and the LWE assumption. Although they needed the strong monogamy of entanglement property conjecture for their constructions, the conjecture was proved without any assumptions by Culf and Vidick~\cite{Quantum:CulVid22}. It is unclear whether SDE under the definitions by Coladangelo et al.~\cite{C:CLLZ21} is equivalent to unclonable encryption. Liu, Liu, Qian, and Zhandry~\cite{myTCC:LLQZ22} achieved bounded collusion-resistant public key SDE, where adversaries can receive many copy-protected decryption keys, from IO and the LWE assumption. They also consider bounded collusion-resistant copy-protection for PRFs and signatures.
Sattath and Wyborski~\cite{ARXIV:SatWyb22} also extend SDE to unclonable decryptors, where we can generate multiple copy-protected decryption keys from a classical decryption key. They constructed a secret key unclonable decryptors scheme from copy-protection for balanced binary functions. However, they need IO or a quantum oracle to instantiate copy-protection for balanced binary functions.

\fi
\ifnum\submission=0

\section{Preliminaries}\label{sec:prelim}


\paragraph{Notations and conventions.}
In this paper, standard math or sans serif font stands for classical algorithms (e.g., $C$ or $\algo{Gen}$) and classical variables (e.g., $x$ or $\keys{pk}$).
Calligraphic font stands for quantum algorithms (e.g., $\qalgo{Gen}$) and calligraphic font and/or the bracket notation for (mixed) quantum states (e.g., $\qstateq$ or $\ket{\psi}$).
For strings $x$ and $y$, $x \concat y$ denotes the concatenation of $x$ and $y$.
Let $[\ell]$ denote the set of integers $\{1, \cdots, \ell \}$, $\secp$ denote a security parameter, and $y \seteq z$ denote that $y$ is set, defined, or substituted by $z$.

In this paper, for a finite set $X$ and a distribution $D$, $x \chosen X$ denotes selecting an element from $X$ uniformly at random, $x \chosen D$ denotes sampling an element $x$ according to $D$. Let $y \gets \algo{A}(x)$ and $y \gets \qalgo{A}(\qstate{x})$ denote assigning to $y$ the output of a probabilistic or deterministic algorithm $\algo{A}$ and a quantum algorithm $\qalgo{A}$ on an input $x$ and $\qstate{x}$, respectively. When we explicitly show that $\algo{A}$ uses randomness $r$, we write $y \gets \algo{A}(x;r)$.
PPT and QPT algorithms stand for probabilistic polynomial-time algorithms and polynomial-time quantum algorithms, respectively.
Let $\negl$ denote a negligible function. Let $\cind$ denote computational indistinguishability.

\subsection{Quantum information}

We review several quantum information concepts.

\paragraph{Basics.}
Let $\cH$ be a finite-dimensional complex Hilbert space. A (pure) quantum state is a vector $\ket{\psi}\in \cH$.
Let $\cS(\cH)$ be the space of Hermitian operators on $\cH$. A density matrix is a Hermitian operator $\qstate{X} \in \cS(\cH)$ with $\Trace(\qstate{X})=1$, which is a probabilistic mixture of pure states.
A quantum state over $\cH=\bbC^2$ is called qubit, which can be represented by the linear combination of the standard basis $\setbk{\ket{0},\ket{1}}$. More generally, a quantum system over $(\bbC^2)^{\tensor n}$ is called an $n$-qubit quantum system for $n \in \bbN \setminus \setbk{0}$.

A Hilbert space is divided into registers $\cH= \cH^{\qreg{R}_1} \tensor \cH^{\qreg{R}_2} \tensor \cdots \tensor \cH^{\qreg{R}_n}$.
We sometimes write $\qstate{X}^{\qreg{R}_i}$ to emphasize that the operator $\qstate{X}$ acts on register $\cH^{\qreg{R}_i}$.\footnote{The superscript parts are gray colored.}
When we apply $\qstate{X}^{\qreg{R}_1}$ to registers $\cH^{\qreg{R}_1}$ and $\cH^{\qreg{R}_2}$, $\qstate{X}^{\qreg{R}_1}$ is identified with $\qstate{X}^{\qreg{R}_1} \tensor \mat{I}^{\qreg{R}_2}$.

A unitary operation is represented by a complex matrix $\mat{U}$ such that $\mat{U}\mat{U}^\dagger = \mat{I}$. The operation $\mat{U}$ transforms $\ket{\psi}$ and $\qstate{X}$ into $\mat{U}\ket{\psi}$ and $\mat{U}\qstate{X}\mat{U}^\dagger$, respectively.
A projector $\mat{P}$ is a Hermitian operator ($\mat{P}^\dagger =\mat{P}$) such that $\mat{P}^2 = \mat{P}$.

For a quantum state $\qstate{X}$ over two registers $\cH^{\qreg{R}_1}$ and $\cH^{\qreg{R}_2}$, we denote the state in $\cH^{\qreg{R}_1}$ as $\qstate{X}[\qreg{R}_1]$, where $\qstate{X}[\qreg{R}_1]= \Trace_2[\qstate{X}]$ is a partial trace of $\qstate{X}$ (trace out $\qreg{R}_2$).


\paragraph{Measurement Implementation}\label{sec:measurement_implementation}

We review some concepts on quantum measurements.

\begin{definition}[Projective Implementation~\cite{TCC:Zhandry20}]\label{def:projective_implementation}
Let:
\begin{itemize}
 \item $\cD$ be a finite set of distributions over an index set $\cI$.
 \item $\cP=\setbk{\mat{P}_i}_{i\in \cI}$ be a POVM
 \item $\cE = \setbk{\mat{E}_D}_{D\in\cD}$ be a projective measurement with index set $\cD$.
 \end{itemize}
 We consider the following measurement procedure.
 \begin{enumerate}
 \item Measure under the projective measurement $\cE$ and obtain a distribution $D$.
 \item Output a random sample from the distribution $D$.
 \end{enumerate}
 We say $\cE$ is the projective implementation of $\cP$, denoted by $\projimp(\cP)$, if the measurement process above is equivalent to $\cP$.
\end{definition}

\begin{theorem}[{\cite[Lemma 1]{TCC:Zhandry20}}]\label{lem:commutative_projective_implementation}
Any binary outcome POVM $\cP=(\mat{P},\mat{I}-\mat{P})$ has a unique projective implementation $\projimp(\cP)$.
\end{theorem}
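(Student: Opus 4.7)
The plan is to construct the projective implementation explicitly via the spectral decomposition of $\mat{P}$, then verify statistical equivalence with $\cP$, and finally deduce uniqueness from the uniqueness of the spectral decomposition of a Hermitian operator. Since $\cP=(\mat{P},\mat{I}-\mat{P})$ is a POVM, $\mat{P}$ is Hermitian with spectrum contained in $[0,1]$, so it admits a spectral decomposition $\mat{P}=\sum_{p}p\cdot\mat{\Pi}_p$ where the sum ranges over the distinct eigenvalues $p\in[0,1]$ of $\mat{P}$ and the $\mat{\Pi}_p$ are pairwise orthogonal projectors satisfying $\sum_p\mat{\Pi}_p=\mat{I}$.

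For a binary outcome POVM, any distribution over $\cI=\setbk{0,1}$ is parameterized by a single number $p\in[0,1]$ (say, the probability of outcome $1$), so I would identify the index set $\cD$ with a subset of $[0,1]$. The natural candidate is then $\cE=\setbk{\mat{\Pi}_p}_p$, where each outcome $p$ is labeled by the Bernoulli distribution $\mathrm{Ber}(p)$. To verify that $\cE=\projimp(\cP)$, I would compute, for an arbitrary density matrix $\qstate{X}$, the probability of obtaining outcome $1$ under the two-step procedure of \cref{def:projective_implementation}: this equals $\sum_{p}p\cdot\mathrm{Tr}(\mat{\Pi}_p\qstate{X})=\mathrm{Tr}(\mat{P}\qstate{X})$, which matches the probability of outcome $1$ under $\cP$. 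Since the outcome space is binary, agreement on one outcome forces agreement on the other, so $\cE$ is indeed a projective implementation of $\cP$.

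For uniqueness, suppose $\cE'=\setbk{\mat{E}_q}_q$ is any projective measurement (indexed by distinct Bernoulli distributions, hence by distinct parameters $q\in[0,1]$) that also implements $\cP$. Computing the expected value of the outcome bit under the two-step procedure yields the operator identity $\sum_{q}q\cdot\mat{E}_q=\mat{P}$, with $\setbk{\mat{E}_q}$ a family of pairwise orthogonal projectors summing to $\mat{I}$ whose labels $q$ are distinct. This is precisely a spectral decomposition of the Hermitian operator $\mat{P}$, and such a decomposition (into distinct eigenvalues together with their eigenprojectors) is unique; hence $\setbk{\mat{E}_q}_q=\setbk{\mat{\Pi}_p}_p$.

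The main subtlety to handle carefully is the requirement that the index set consists of \emph{distinct} distributions: without this, one could split an eigenprojector $\mat{\Pi}_p$ into a sum of sub-projectors all carrying the same label $p$ and obtain a formally different projective measurement that still implements $\cP$. I would therefore be explicit that \cref{def:projective_implementation} requires the projective measurement $\cE$ to have $\cD$ as its index set with each outcome corresponding to a distinct distribution, which is exactly what pins down the coarsest (spectral) decomposition and yields uniqueness.
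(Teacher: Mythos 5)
Your proof is correct and matches the standard argument: this paper does not reprove the statement but simply cites it as Lemma~1 of Zhandry (TCC~2020), and Zhandry's original proof proceeds by exactly the spectral-decomposition route you take (decompose $\mat{P}=\sum_p p\,\mat{\Pi}_p$, label each eigenprojector by the Bernoulli distribution $\mathrm{Ber}(p)$, verify the outcome statistics match, and invoke uniqueness of the spectral decomposition into distinct eigenvalues for uniqueness of $\projimp(\cP)$). Your closing remark about requiring distinct distributions as labels is the right observation and is enforced in \cref{def:projective_implementation} by taking $\cD$ to be a \emph{set}, which rules out artificially splitting an eigenprojector into sub-projectors with identical labels.
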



\begin{definition}[Mixture of Projetive Measurement~\cite{TCC:Zhandry20}]\label{def:mixture_projective_measurement}
Let $D: \cR \ra \cI$ where $\cR$ and $\cI$ are some sets.
Let $\setbk{(\mat{P}_i,\mat{Q}_i)}_{\in \cI}$ be a collection of binary projective measurement.
The mixture of projective measurements associated to $\cR$, $\cI$, $D$, and $\setbk{(\mat{P}_i,\mat{Q}_i)}_{\in \cI}$ is the binary POVM $\cP_D =(\mat{P}_D,\mat{Q}_D)$ defined as follows
\begin{align}
& \mat{P}_D = \sum_{i\in\cI}\Pr[i \chosen D(R)]\mat{P}_i && \mat{Q}_D = \sum_{i\in\cI}\Pr[i \chosen D(R)]\mat{Q}_i,
\end{align}
where $R$ is uniformly distributed in $\cR$.
\end{definition}

\begin{definition}[Shift Distance]\label{def:shift_distance}
For two distributions $D_0,D_1$, the shift distance with parameter $\epsilon$, denoted by $\shiftdis{\epsilon}(D_0,D_1)$, is the smallest quantity $\delta$ such that for all $x \in \R$:
\begin{align}
\Pr[D_0\le x] & \le \Pr[D_1\le x + \epsilon] + \delta,&& \Pr[D_0\ge x]  \le \Pr[D_1\ge x - \epsilon] + \delta,\\
\Pr[D_1\le x] & \le \Pr[D_0\le x + \epsilon] + \delta,&& \Pr[D_1\ge x]  \le \Pr[D_0\ge x - \epsilon] + \delta.
\end{align}
For two real-valued measurements $\cM$ and $\cN$ over the same quantum system, the shift distance between $\cM$ and $\cN$ with parameter $\epsilon$ is
\[
\shiftdis{\epsilon}(\cM,\cN)\seteq \sup_{\ket{\psi}}\shiftdis{\epsilon}(\cM(\ket{\psi}),\cN(\ket{\psi})).
\]
\end{definition}

\begin{theorem}[\cite{TCC:Zhandry20,EC:KitNis22}]\label{thm:api_property}
Let $D$ be any probability distribution and $\cP=\setbk{(\Pi_i,\mat{I} -\Pi_i)}_i$ be a collection of binary outcome projective measurements. For any $0<\epsilon,\delta<1$, there exists an algorithm of measurement $\API_{\cP,\cD}^{\epsilon,\delta}$ that satisfies the following.
\begin{itemize}
\item $\shiftdis{\epsilon}(\API_{\cP,D}^{\epsilon,\delta},\projimp(\cP_D))\le \delta$.
\item $\API_{\cP,D}^{\epsilon,\delta}$ is $(\epsilon,\delta)$-almost projective in the following sense. For any quantum state $\ket{\psi}$, we apply $ \API_{\cP,D}^{\epsilon ,\delta}$ twice in a row to $\ket{\psi}$ and obtain measurement outcomes $x$ and $y$, respectively. Then, $\Pr[\abs{x-y}\le \epsilon]\ge 1-\delta$.
\item $\API_{\cP,D}^{\epsilon,\delta}$ is $(\epsilon,\delta)$-reverse almost projective in the following sense. For any quantum state $\ket{\psi}$, we apply $ \API_{\cP,D}^{\epsilon ,\delta}$ and $\API_{\cPrev,D}^{\epsilon,\delta}$ in a row to $\ket{\psi}$ and obtain measurement outcomes $x$ and $y$, respectively, where $\cPrev=\setbk{(\mat{I} -\Pi_i,\Pi_i)}_i$. Then, $\Pr[\abs{(1-x)-y}\le \epsilon]\ge 1-\delta$.
\item The expected running time of $\API_{\cP,D}^{\epsilon,\delta}$ is $T_{\cP,D}\cdot \poly(1/\epsilon,\log(1/\delta))$ where $T_{\cP,D}$ is the combined running time of $D$, the procedure mapping $i \ra (\mat{P}_i,\mat{I}- \mat{P}_i)$, and the running time of measurement $(\mat{P}_i,\mat{I}-\mat{P}_i)$.
\end{itemize}
\end{theorem}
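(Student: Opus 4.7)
My plan is to construct $\API_{\cP,D}^{\epsilon,\delta}$ via quantum amplitude estimation (QAE) applied to a coherent realization of the mixture POVM $\cP_D$. First I would represent $D$ by a randomness register $\qreg{R}$ prepared in a state whose amplitudes encode $\sqrt{\Pr[i \gets D(R)]}$: concretely, one applies $D$ in superposition into an ancilla, then coherently runs the index-to-measurement map $i \mapsto (\Pi_i, \mat{I} - \Pi_i)$ dilated to a controlled unitary $U_{\cP,D}$ whose ``accept'' projector $\Pi$ realizes $\mat{P}_D = \sum_i \Pr[i \gets D(R)]\,\Pi_i$. On any input $\ket{\psi}$, the acceptance probability of this composite measurement is exactly $p \seteq \bra{\psi}\mat{P}_D\ket{\psi}$, the quantity we need to estimate.

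Next, I would apply QAE to the pair $(U_{\cP,D}, \Pi)$ with precision $\epsilon$ and failure probability $\delta$, obtaining a classical estimate $\tilde p$ of $p$ with $|\tilde p - p|\le \epsilon$ except with probability $\delta$, using $\poly(1/\epsilon,\log(1/\delta))$ applications of $U_{\cP,D}$. The algorithm $\API$ then returns a sample from the distribution $D$ corresponding to the bucket indicated by $\tilde p$. Because the ideal $\projimp(\cP_D)$ returns the true amplitude $p$, the estimate $|\tilde p - p|\le \epsilon$ immediately yields the shift-distance bound $\shiftdis{\epsilon}(\API_{\cP,D}^{\epsilon,\delta},\projimp(\cP_D))\le \delta$, and the running-time bound $T_{\cP,D}\cdot \poly(1/\epsilon,\log(1/\delta))$ is inherited from the cost of QAE.

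The hard part will be the reverse almost-projectivity, which is the new ingredient beyond Zhandry's original construction. Forward almost-projectivity follows from a standard analysis: after QAE with precision $\epsilon$, the post-measurement state is, up to $O(\delta)$ trace error, an eigenvector of the Grover-type reflection about $\Pi$ whose eigenphase is determined by $\tilde p$, so rerunning QAE returns a bucket within $\epsilon$ of $\tilde p$ with probability at least $1-\delta$. For the reverse case, the key observation is that swapping $\Pi_i \leftrightarrow \mat{I}-\Pi_i$ in every branch yields a unitary $U_{\cPrev,D}$ that shares eigenspaces with $U_{\cP,D}$ but maps each amplitude $p$ to $1-p$. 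Hence the approximate eigenvector left behind by the first invocation is also an approximate eigenvector of $U_{\cPrev,D}$ with amplitude $1-\tilde p$, so applying $\API_{\cPrev,D}^{\epsilon,\delta}$ afterwards returns an outcome within $\epsilon$ of $1-\tilde p$ with probability at least $1-\delta$. Formalizing this ``amplitude-flip preserves eigenspaces'' step carefully, while tracking the trace-distance loss accumulated across the two QAE invocations, is where most of the technical work lies.
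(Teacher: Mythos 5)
The paper does not prove this theorem: it is imported as a black box from Zhandry~\cite{TCC:Zhandry20} (the shift-distance, almost-projectivity, and running-time bullets) and from Kitagawa--Nishimaki~\cite{EC:KitNis22} (the reverse almost-projectivity bullet), so there is no in-paper proof to compare against. That said, your sketch does identify the mechanism underlying those works: Jordan's lemma together with quantum amplitude/phase estimation applied to a coherent dilation of the mixture POVM $\mat{P}_D$, so that the reported outcome tracks an eigenvalue of $\mat{P}_D$ and the post-measurement state lands (approximately) in the corresponding Jordan block.

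Two imprecisions are worth fixing before this becomes a proof. First, you write that $\projimp(\cP_D)$ ``returns the true amplitude $p = \bra{\psi}\mat{P}_D\ket{\psi}$'' and that the shift-distance bound is ``immediate'' from $\abs{\tilde p - p}\le\epsilon$. Per \cref{def:projective_implementation}, $\projimp(\cP_D)$ is a projective measurement in the eigenbasis of $\mat{P}_D$ whose outcome is a random \emph{eigenvalue} of $\mat{P}_D$; only for an eigenvector input does it deterministically output a scalar, and in general the distribution of eigenvalues has mean $p$ but is not concentrated at $p$. The shift-distance bound therefore requires decomposing $\ket{\psi}$ over the Jordan blocks of the walk operator and showing that, block by block, phase estimation returns a value within $\epsilon$ of that block's eigenvalue except with probability $\delta$ --- this is exactly Zhandry's analysis, and it is not a one-line consequence of QAE's precision guarantee on the scalar $p$. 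Second, $\API$ outputs the real-valued estimate $\tilde p\in[0,1]$, not ``a sample from the distribution $D$ corresponding to the bucket'': you are conflating the two-step description in the definition of a projective implementation (measure eigenspace, then sample) with what $\projimp$ and $\API$ actually report (the eigenvalue / its estimate). Your argument for the reverse almost-projective property --- that $\mat{I}-\mat{P}_D$ shares eigenvectors with $\mat{P}_D$ while flipping eigenvalues $\lambda\mapsto 1-\lambda$, so the approximate eigenvector produced by the first $\API$ invocation remains an approximate eigenvector of the reversed measurement --- is exactly the observation behind the Kitagawa--Nishimaki extension and is correct as stated.
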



\begin{theorem}[{\cite[Corollary 1]{TCC:Zhandry20}}]\label{cor:cind_sample_api}
Let $\qstateq$ be an efficiently constructible, potentially mixed state, and $D_0,D_1$ efficiently sampleable distributions.
If $D_0$ and $D_1$ are computationally indistinguishable, for any inverse polynomial $\epsilon$ and any function $\delta$, we have $\shiftdis{3\epsilon}(\API_{\cP,D_0}^{\epsilon,\delta},\API_{\cP,D_1}^{\epsilon,\delta}) \le 2\delta + \negl(\secp)$.
\end{theorem}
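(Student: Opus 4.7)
The plan is to reduce the claim to the computational indistinguishability of $D_0$ and $D_1$ via the triangle inequality for shift distance, using the approximate projectivity guarantee from Theorem~\ref{thm:api_property} as a bridge between the efficient measurement $\API$ and the (possibly inefficient) projective implementation $\projimp$.

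First I would record the following three shift-distance bounds. From the first bullet of Theorem~\ref{thm:api_property}, applied to each of $D_0$ and $D_1$, we obtain
\[
\shiftdis{\epsilon}\bigl(\API_{\cP,D_0}^{\epsilon,\delta},\projimp(\cP_{D_0})\bigr)\le\delta,\qquad
\shiftdis{\epsilon}\bigl(\API_{\cP,D_1}^{\epsilon,\delta},\projimp(\cP_{D_1})\bigr)\le\delta.
\]
The remaining piece is the computational bound
\[
\shiftdis{\epsilon}\bigl(\projimp(\cP_{D_0}),\projimp(\cP_{D_1})\bigr)\le\negl(\secp),
\]
after which a straightforward triangle inequality (shift distance satisfies $\shiftdis{a+b+c}(X,Z)\le\shiftdis{a}(X,Y)+\shiftdis{b}(Y,W)+\shiftdis{c}(W,Z)$) on $\API_{\cP,D_0}^{\epsilon,\delta}\to\projimp(\cP_{D_0})\to\projimp(\cP_{D_1})\to\API_{\cP,D_1}^{\epsilon,\delta}$ with parameters $\epsilon,\epsilon,\epsilon$ yields the claimed $3\epsilon$-bound of $2\delta+\negl(\secp)$.

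The main obstacle, and the only place where computational indistinguishability is used, is establishing that middle inequality. The difficulty is that $\projimp(\cP_{D_b})$ is in general not efficiently implementable, so we cannot directly feed a sample $d\chosen D_b$ into a distinguisher and apply $\projimp(\cP_d)$. My plan is to argue by contradiction: suppose there exist $x\in\R$ and a noticeable $\mu$ such that, on some efficiently constructible $\qstateq$,
\[
\Pr\bigl[\projimp(\cP_{D_0})(\qstateq)\le x\bigr]>\Pr\bigl[\projimp(\cP_{D_1})(\qstateq)\le x+\epsilon\bigr]+\mu
\]
(or one of the symmetric variants). I would then build a distinguisher $\cB$ for $D_0$ versus $D_1$ that, on input $d\sample D_b$, prepares $\qstateq$, runs $\API_{\cP,\{d\}}^{\epsilon/3,\mu/10}$ on it (which is efficient because the distribution is the point mass at $d$ and each $\mat{P}_i$ is efficient), and outputs $1$ iff the measurement outcome is $\le x+\epsilon/2$. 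Using the almost-projectivity and the shift-distance guarantee of Theorem~\ref{thm:api_property} applied to the point distributions, $\API_{\cP,\{d\}}$ approximates $\projimp(\cP_{\{d\}})$; averaging over $d\sample D_b$ then approximates $\projimp(\cP_{D_b})$ up to an $O(\epsilon)$ shift and $O(\mu)$ additive error. The hypothesized gap survives this approximation, giving $\cB$ a noticeable advantage and contradicting computational indistinguishability. The key technical checks are that shift distance interacts correctly with mixtures (so that a per-sample approximation upgrades to an approximation of $\projimp(\cP_{D_b})$) and that triangle inequality for shift distance composes the parameters additively; both are spelled out in the original references and can be imported directly.

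Finally I would combine the three bounds by triangle inequality to conclude $\shiftdis{3\epsilon}(\API_{\cP,D_0}^{\epsilon,\delta},\API_{\cP,D_1}^{\epsilon,\delta})\le 2\delta+\negl(\secp)$, completing the proof. I do not expect to gain anything from tightening the parameters beyond what is stated, since the $3\epsilon$/$2\delta+\negl$ form is exactly what drops out of the three-step chain above.
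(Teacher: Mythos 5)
The statement is cited from Zhandry [TCC~2020, Corollary~1] and is not reproved in this paper, so I am evaluating your reconstruction directly. Your scaffolding — two $\delta$-bounds from the first bullet of Theorem~\ref{thm:api_property} bridging each $\API_{\cP,D_b}^{\epsilon,\delta}$ to $\projimp(\cP_{D_b})$, plus an additive triangle inequality for shift distance — is sound and correctly reduces everything to one computational claim about $\shiftdis{\epsilon}(\projimp(\cP_{D_0}),\projimp(\cP_{D_1}))$. But the way you propose to prove that middle claim is broken. For a point mass $\{d\}$, the mixture $\cP_{\{d\}}=(\Pi_d,\mat{I}-\Pi_d)$ is already projective, so $\projimp(\cP_{\{d\}})$ outputs only $0$ or $1$; averaging $\API_{\cP,\{d\}}$ over $d\sample D_b$ therefore recovers, up to the $O(\mu)$ slack, only $\mathbb{E}_{d\sample D_b}\Trace[(\mat{I}-\Pi_d)\qstateq]=1-\Trace[\mat{P}_{D_b}\qstateq]$, i.e.\ the \emph{mean} of $\projimp(\cP_{D_b})(\qstateq)$. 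Shift distance is a CDF-level metric, and $\projimp(\cP_{D_b})$ is supported on the spectrum of $\mat{P}_{D_b}$, which is generally not $\{0,1\}$. A Bernoulli distribution cannot approximate such a spectral distribution in shift distance for any shift below $1$: if $\projimp(\cP_{D_0})(\qstateq)$ is concentrated at $1/2$ while $\projimp(\cP_{D_1})(\qstateq)$ is uniform on $\{0,1\}$, both have the same mean, your distinguisher has essentially zero advantage, yet the quantity you are trying to bound is $\Omega(1)$. So the assertion that "averaging over $d\sample D_b$ then approximates $\projimp(\cP_{D_b})$" is false, and your reduction does not go through.

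The fix is to have the reduction run the \emph{mixture} estimator rather than a point-mass one: given sampling access to $D_b$ (justified by efficient sampleability plus a single-to-many-sample hybrid), the distinguisher prepares $\qstateq$, runs $\API_{\cP,D_b}^{\epsilon',\delta'}$ with $\epsilon'\le\epsilon/2$ and $\delta'$ negligible, and accepts iff the outcome is at most $x+\epsilon/2$. The first bullet of Theorem~\ref{thm:api_property} applied at parameters $(\epsilon',\delta')$ sandwiches the acceptance probability between $\Pr[\projimp(\cP_{D_0})(\qstateq)\le x]-\delta'$ and $\Pr[\projimp(\cP_{D_1})(\qstateq)\le x+\epsilon]+\delta'$, so a noticeable CDF gap at resolution $\epsilon$ yields a noticeable distinguishing advantage, contradicting $D_0\cind D_1$. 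Combined with your two outer $\delta$-bounds and the triangle inequality, this gives the stated $\shiftdis{3\epsilon}(\API_{\cP,D_0}^{\epsilon,\delta},\API_{\cP,D_1}^{\epsilon,\delta})\le 2\delta+\negl(\secp)$.
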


\paragraph{Quantum program with classical inputs and outputs}
We formalize quantum programs whose inputs and outputs are always classical strings.

\begin{definition}[Quantum Program with Classical Inputs and Outputs~\cite{C:ALLZZ21}]\label{def:Q_program_C_IO}
A quantum program with classical inputs is a pair of quantum state $\qstateq$ and unitaries $\setbk{\mat{U}_x}_{x\in[N]}$ where $[N]$ is the domain, such that the state of the program evaluated on input $x$ is equal to $\mat{U}_x \qstateq \mat{U}_x^\dagger$. We measure the first register of $\mat{U}_x \qstateq \mat{U}_x^\dagger$ to obtain an output. We say that $\setbk{\mat{U}_x}_{x\in[N]}$ has a compact classical description $\mat{U}$ when applying $\mat{U}_x$ can be efficiently computed given $\mat{U}$ and $x$.
\end{definition}

\subsection{Standard Cryptographic Tools}\label{sec:crypto_tools}

\paragraph{Commitment.}
We introduce the notion of injective commitment with equivocal mode.
This is an extension of injective commitment introduced by Cohen et al.~\cite{SIAMCOMP:CHNVW18}.
\begin{definition}[Injective Commitment with Equivocal Mode]\label{def:inj_com}
An injective commitment scheme $\Com$ with equivocal mode for the message space $\cM$ and random coin space $\cR$ is a tuple of four algorithms $(\Setup,\Commit,\EqSetup,\Open)$.
\begin{itemize}
\item The setup algorithm $\Setup$ takes as input a security parameter $1^\lambda$, and outputs a commitment key $\ck$.

\item The commitment algorithm $\Commit$ takes as input the commitment key $\ck$, a message $m\in\cM$, and a random coin $r\in\cR$, and outputs a commitment $\com$.

\item The equivocation setup algorithms $\EqSetup$ takes as input a security parameter $1^\secp$, and outputs a commitment key $\ck^*$, a commitment $\com^*$, and a trapdoor $\td$.

\item The open algorithm $\Open$ takes as input the trapdoor $\td$, a message $m\in\cM$, and a commitment $\com^*$, and outputs a random coin $r^*\in\cR$.

\end{itemize}
We say that injective commitment with equivocal mode is secure if it satisfies the following two properties.
\begin{description}
\item[Injectivity:] We require that
\ifnum\llncs=0
\begin{align}
\Pr[\exists (m_1,r_1)\ne(m_2,r_2) \textrm{~~s.t.~~} \Commit(\ck,m_1;r_1)=\Commit(\ck,m_2,r_2)\mid \ck\la\Setup(1^\secp)]=\negl(\secp).
\end{align}
\else
\begin{align}
\Pr\left[
\begin{array}{ll}
\exists (m_1,r_1)\ne(m_2,r_2)\\
\textrm{s.t.~} \Commit(\ck,m_1;r_1)=\Commit(\ck,m_2,r_2)
\end{array}
\ \middle |
\ck\la\Setup(1^\secp)
\right]
=\negl(\secp).
\end{align}
\fi

\item[Trapdoor Equivocality:] For any message $m\in\Ms$, we have
\begin{align}
(\ck,\com,r)\cind(\ck^*,\com^*,r^*),
\end{align}
where $\ck\la\Setup(1^\secp)$, $r\la\cR$, $\com\la\Commit(\ck,m;r)$, $(\ck^*,\com^*,\td)\la\EqSetup(1^\secp)$, and $r^*\la\Open(\td,m,\com^*)$.
\end{description}
We do not explicitly require a hiding property since we do not need it in this work.
\end{definition}

\begin{theorem}\label{thm:injective_Com_equivocation_Naor}
If the LWE assumption holds, there exists a secure injective commitment with equivocal mode.
\end{theorem}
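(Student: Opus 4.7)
The plan is to instantiate the scheme with a coordinate-wise Naor-style commitment built from an injective pseudorandom generator $G:\{0,1\}^\secp\to\{0,1\}^{3\secp}$, which is obtainable from LWE via an injective one-way function (e.g., the LWE function with parameters ensuring injectivity) together with an injectivity-preserving length extension. Concretely, for $\cM=\{0,1\}^\ell$ with $\ell=\poly(\secp)$, $\Setup(1^\secp)$ samples $\ck=(r_1,\ldots,r_\ell)$ with $r_i\gets\{0,1\}^{3\secp}$; $\Commit(\ck,m;s)$ parses $s=(s_1,\ldots,s_\ell)$ and outputs $(G(s_i)\oplus m_i\cdot r_i)_{i\in[\ell]}$; $\EqSetup(1^\secp)$ samples $s_{i,0},s_{i,1}\gets\{0,1\}^\secp$ for each $i$, sets $r_i^{*}:=G(s_{i,0})\oplus G(s_{i,1})$ and $\com_i^{*}:=G(s_{i,0})$, and outputs $\ck^{*}=(r_i^{*})_i$, $\com^{*}=(\com_i^{*})_i$, $\td=(s_{i,0},s_{i,1})_i$; and $\Open(\td,m,\com^{*})$ returns $(s_{i,m_i})_i$. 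Opening validity follows from $G(s_{i,m_i})\oplus m_i\cdot r_i^{*}=G(s_{i,0})=\com_i^{*}$ for both values of $m_i$.

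For injectivity I would argue coordinate-by-coordinate: the map $(m_i,s_i)\mapsto G(s_i)\oplus m_i\cdot r_i$ fails to be injective only if either $G$ has a collision (excluded by injectivity of $G$) or there exist $s,s'$ with $G(s)\oplus G(s')=r_i$, and the latter has probability at most $2^{2\secp}\cdot 2^{-3\secp}=2^{-\secp}$ over uniform $r_i$; a union bound over the $\ell=\poly(\secp)$ coordinates gives total collision probability $\negl(\secp)$. For trapdoor equivocality I would use a hybrid argument over coordinates: fixing $i$ and $m_i$, the equivocal triple $(r_i^{*},\com_i^{*},s_i^{*})=(G(s_{i,0})\oplus G(s_{i,1}),G(s_{i,0}),s_{i,m_i})$ differs from the real triple $(r_i,G(s_i)\oplus m_i\cdot r_i,s_i)$ only in that the ``unused'' seed $s_{i,1-m_i}$ contributes $G(s_{i,1-m_i})$ inside $r_i^{*}$; replacing this PRG image with a uniform string via PRG security makes $r_i^{*}$ uniform and causes the equivocal distribution to coincide with the real one (for $m_i=1$ one additionally rewrites $\com_i^{*}=G(s_{i,1})\oplus r_i^{*}$ to match the real form).

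The main obstacle is that the injectivity requirement in \cref{def:inj_com} forces $G$ to be genuinely injective rather than merely binding in a statistical sense; plain HILL-style PRGs from arbitrary OWFs need not be injective, so the construction has to start from an injective OWF (which LWE provides) and use a length extension that preserves injectivity of the first block of the output (e.g., appending Goldreich--Levin hardcore bits to $f(s)$). Once this is in place, the remaining analysis is routine: injectivity by a union bound and equivocality by a per-coordinate PRG hybrid.
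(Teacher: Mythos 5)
Your construction and analysis are essentially identical to the paper's: both use Naor's coordinate-wise commitment built from an injective PRG $\PRG:\{0,1\}^\secp\to\{0,1\}^{3\secp}$, prove injectivity by the observation that a collision requires either a PRG collision or $\PRG(s)\oplus\PRG(s')=r_i$ for some pair (negligible over random $r_i$ by a union bound), and prove trapdoor equivocality by using pseudorandomness of the unopened PRG output $\PRG(\tlr_{1-m_i,i})$ to replace $r_i^{*}$ with a uniform string coordinate by coordinate. So the core of the proof matches.

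However, your instantiation sketch for the injective PRG has a gap. You propose $G(s)=f(s)\,\|\,(\text{Goldreich--Levin hardcore bits of }s)$ for an injective OWF $f$. This map is indeed injective, but it is \emph{not} a PRG: the first block $f(s)$ is not pseudorandom for a general injective OWF, so an adversary can distinguish $G(s)$ from uniform by inspecting whether the first block lies in the image of $f$, or more simply by checking any efficiently testable structure of $f$'s output. Equivocality in the Naor construction genuinely requires the full output of $\PRG$ to be pseudorandom, since $s_i^{*}=\PRG(\tlr_{0,i})\oplus\PRG(\tlr_{1,i})$ must be indistinguishable from uniform given one of the two preimages. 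The paper sidesteps this by relying on an injective PRG that exists directly from LWE (via the injective-OWF-with-evaluation-key construction in~\cite{EC:KitNis22}); the cleanest LWE route is a function of the form $s\mapsto\lfloor A s\rceil$ (LWR) or $s\mapsto A s + e(s)$ whose entire output is pseudorandom and which is injective with overwhelming probability over the public matrix $A$. If you replace your GL-based length extension with such an instantiation, the rest of your argument goes through as written.
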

We can obtain~\cref{thm:injective_Com_equivocation_Naor} from the construction of injective commitment by Kitagawa and Nishimaki~\cite{EC:KitNis22} since it is Naor's commitment scheme~\cite{JC:Naor91} and it is well known that Naor's commitment has a trapdoor equivocal mode.
See~\cref{sec:Naor_com} for the construction.

\paragraph{Ciphertext-policy attribute-based encryption.} We define ciphertext-policy attribute-base encryption (CP-ABE) and adaptive-indistinguishability (AD-IND security) and adaptive-simulation security (AD-SIM security) for it.
\begin{definition}[CP-ABE (Syntax)]\label{def:cpabe_syntax}
A CP-ABE scheme is a tuple of PPT algorithms $(\Setup,\KG,\Enc,\Dec)$ with plaintext space $\Ms$.
\begin{description}
\item [$\Setup(1^\secp)\ra (\mpk,\msk)$:] The setup algorithm takes as input the security 
    parameter $1^\secp$ and outputs a key pair $(\mpk,\msk)$.
    \item [$\KG(\msk,x)\ra \sk_x$:] The key generation algorithm takes as input the master secret key $\msk$ and an attribute $x$, and outputs a decryption key $\sk_x$.
    \item [$\Enc(\mpk,C,\msg)\ra \ct$:] The encryption algorithm takes as input $\mpk$, policy $C$, and a plaintext $\msg\in\cM$, and outputs a ciphertext $\ct$.
    \item [$\Dec(\sk_x,\ct)\ra \msg^\prime \mbox{ or }\bot$:] The decryption algorithm takes as input $\sk_x$ and $\ct$ and outputs a plaintext $\msg^\prime$ or $\bot$.

\item[Decryption Correctness:] There exists a negligible function $\negl$ such that for any $\msg\in\Ms$, $C$, and $x$ such that $C(x)=1$, we have
\begin{align}
\Pr\left[
\Dec(\sk_x,\ct)= \msg
\ \middle |
\begin{array}{ll}
(\mpk,\msk)\la\Setup(1^\secp)\\
\sk_x\gets \KG(\msk,x)\\
\ct \gets \Enc(\mpk,C,\msg)
\end{array}
\right] 
=1-\negl(\secp).
\end{align}

\end{description}  
\end{definition}

\begin{definition}[AD-IND Security for CP-ABE]\label{def:adind_cpabe_security}
Let $\CPABE=(\Setup,\KG,\allowbreak \Enc,\allowbreak \Dec)$ be a CP-ABE scheme.
We consider the following security experiment $\expb{\CPABE,\qA}{ad}{ind}(\secp,\coin)$ for a QPT adversary $\qA$.

\begin{enumerate}
    \item The challenger computes $(\mpk,\msk) \gets \KG(1^\secp)$ and sends $\mpk$ to $\qA$.
    \item $\qA$ can get access to the following oracle.
            \begin{description}
            \item[$\Oracle{\KG,1}(x)$:] Given $x$, it returns $\sk_x\la\KG(\msk,x)$. 
            \end{description}
    \item $\qA$ sends $C$ and $(\msg_0,\msg_1) \in \Ms^2$ to the challenger, where $C$ satisfies $C(x)=0$ for all $x$ queried by $\qA$ in the previous step. The challenger returns $\ct \gets \Enc(\mpk,C,m_\coin)$ to $\qA$.
    \item$\qA$ can get access to the following oracle.
    \begin{description}
            \item[$\Oracle{\KG,2}(x)$:] Given $x$, if $C(x)=0$, it returns $\sk_x \gets \KG(\msk,x)$. If $C(x)=1$, it returns $\bot$.
            \end{description}
    \item $\qA$ outputs $\coin^\prime\in \bit$.
\end{enumerate}
We say that $\CPABE$ is AD-IND secure if for any QPT adversary $\qA$, it holds that
\begin{align}
\advb{\CPABE,\qA}{ad}{ind}(\secp)\seteq \abs{\Pr[ \expb{\CPABE,\qA}{ad}{ind}(\secp, 0)=1] - \Pr[ \expb{\CPABE,\qA}{ad}{ind}(\secp, 1)=1] }\leq \negl(\secp).
\end{align}
\end{definition}

\begin{remark}
We can consider the selective variant of~\cref{def:adind_cpabe_security}, where $\qA$ declares $C$ at the beginning of the games.
We can also consider the selective variant of~\cref{def:adsim_cpabe_security} introduced below. We denote these notions by SEL-IND and SEL-SIM, respectively.
\end{remark}

\begin{theorem}[\cite{JACM:GorVaiWee15,EC:BGGHNS14}]\label{thm:ABE_from_LWE}
If the LWE assumption holds, there exists SEL-IND secure CP-ABE for all boolean circuits. In addition, if the LWE assumption holds against sub-exponential time algorithms, there exists ADA-IND secure CP-ABE for all boolean circuits.
\end{theorem}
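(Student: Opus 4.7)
The plan is to derive the theorem by combining two well-known ingredients from the ABE literature with a standard universal-circuit swap and complexity leveraging.

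First, I would invoke the KP-ABE constructions of Gorbunov--Vaikuntanathan--Wee \cite{JACM:GorVaiWee15} and Boneh et al.~\cite{EC:BGGHNS14}, which both yield selectively secure key-policy ABE schemes for all polynomial-size boolean circuits under the (polynomial) LWE assumption. In a KP-ABE, the ciphertext is labeled by an attribute and the key by a policy. To flip this into CP-ABE I would apply the standard universal-circuit transformation: fix a universal circuit $U$ such that $U(C,x) = C(x)$ for any policy circuit $C$ in the target class and attribute $x$, and then build $\CPABE.\Enc(\mpk,C,\msg)$ as the KP-ABE encryption of $\msg$ under the attribute $C$, and $\CPABE.\KG(\msk,x)$ as the KP-ABE key generation for the policy $U(\cdot,x)$. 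Selective IND security of the underlying KP-ABE scheme directly gives SEL-IND security of the resulting CP-ABE, since a SEL-IND adversary commits to $C$ upfront, which corresponds to committing to the attribute in the KP-ABE game, and all key queries $x$ with $C(x)=0$ correspond to policy queries $U(\cdot,x)$ that do not satisfy the attribute $C$.

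For the adaptive part, I would use complexity leveraging. Given an AD-IND adversary $\qA$ for the CP-ABE scheme, build a selective reduction $\qB$ that first guesses a policy $C^*$ uniformly at random from the set of all size-$s$ circuits (equivalently, guesses a sufficiently long description string) and commits to it at the start of the SEL-IND game. If $\qA$'s challenge policy coincides with the guess, $\qB$ wins with $\qA$'s advantage; otherwise $\qB$ aborts. This loses a factor $2^{-\poly(\secp)}$ in the reduction, which is compensated by the sub-exponential LWE assumption: choosing the LWE parameters so that any sub-exponential distinguisher has negligible advantage, $\qB$'s overall advantage in breaking selective security remains negligible. Hence $\qA$'s advantage must itself be negligible, giving AD-IND security.

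The main obstacle I expect is a clean handling of the complexity leveraging step, in particular calibrating the sub-exponential hardness parameter to dominate the guessing loss while keeping the scheme's parameters polynomial, and being careful that the universal-circuit wrapping does not blow up the policy size beyond what the leveraging budget can absorb. Aside from this parameter bookkeeping, every step is directly available from \cite{JACM:GorVaiWee15,EC:BGGHNS14} and the standard KP-to-CP universal-circuit reduction, so no new cryptographic machinery is required.
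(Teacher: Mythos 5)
Your proposal is correct and is exactly the standard argument that the paper's citation implies: the paper states \cref{thm:ABE_from_LWE} without proof, citing~\cite{JACM:GorVaiWee15,EC:BGGHNS14}, both of which give selectively secure KP-ABE for (bounded-depth) boolean circuits from LWE, and the universal-circuit swap to obtain CP-ABE together with complexity leveraging under sub-exponential LWE for the adaptive claim are precisely the folklore steps you describe. The parameter-bookkeeping concern you flag (ensuring the universal circuit stays inside the depth class and that the sub-exponential hardness exponent dominates the $2^{-\poly(\secp)}$ guessing loss while keeping the scheme polynomial) is the right one and is routine to resolve.
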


We define AD-SIM security for CP-ABE, which is a natural adaptation of AD-SIM security for functional encryption~\cite{C:GorVaiWee12} to CP-ABE.

\begin{definition}[AD-SIM Security for CP-ABE]\label{def:adsim_cpabe_security}
Let $\CPABE=(\Setup,\KG,\allowbreak \Enc,\allowbreak \Dec)$ be a CP-ABE scheme.
We consider the following security experiment $\expb{\CPABE,\qSim,\qA}{ad}{sim}(\secp,\coin)$ for a QPT simulator $\qSim=(\qSimEnc,\qSimKG)$ and a QPT adversary $\qA$.

\begin{enumerate}
    \item The challenger computes $(\mpk,\msk) \gets \KG(1^\secp)$ and sends $\mpk$ to $\qA$.
    \item $\qA$ can get access to the following oracle.
            \begin{description}
            \item[$\Oracle{\KG,1}(x)$:] Given $x$, it returns $\sk_x\la\KG(\msk,x)$. 
            \end{description}
    \item $\qA$ sends $C$ and $m \in \Ms$ to the challenger, where $C$ satisfies $C(x)=0$ for all $x$ queried by $\qA$ in the previous step. The challenger does the following.
    \begin{itemize}
    \item If $\coin =0$, the challenger generates $\ct \gets \Enc(\mpk,C,m)$ and returns $\ct$ to $\qA$.
    \item If $\coin=1$, the challenger generates $(\ct,\state) \gets \qSimEnc(\mpk,C)$ and returns $\ct$ to $\qA$.
    \end{itemize}
    \item$\qA$ can get access to the following oracle.
    \begin{description}
            \item[$\Oracle{\KG,2}(x)$:] Given $x$, if $C(x)=0$, it returns $\sk_x \gets \KG(\msk,x)$. If $C(x)=1$, it does the following.
                \begin{itemize}
    \item If $\coin=0$, the challenger returns $\sk_x \gets \KG(\msk,x)$ to $\qA$.
    \item If $\coin=1$, the challenger returns $\sk_x \gets \qSimKG(\msk,\state,x,m)$ to $\qA$.
    \end{itemize}
            \end{description}
    \item $\qA$ outputs $\coin^\prime\in \bit$.
\end{enumerate}
We say that $\CPABE$ is AD-SIM secure if there exists $\qSim$ such that for any QPT adversary $\qA$, it holds that
\begin{align}
\advb{\CPABE,\qSim,\qA}{ad}{sim}(\secp)\seteq \abs{\Pr[ \expb{\CPABE,\qSim,\qA}{ad}{sim}(\secp, 0)=1] - \Pr[ \expb{\CPABE,\qSim,\qA}{ad}{sim}(\secp, 1)=1] }\leq \negl(\secp).
\end{align}
\end{definition}

AD-SIM security has been widely studied for functional encryption, but as far as we know, there was no previous work studied it for CP-ABE.
In \cref{sec:ADSIM_CPABE}, we show how to transform any AD-IND secure CP-ABE scheme into AD-SIM secure one by using tools implied by the LWE assumption.

\paragraph{Compute-and-compare obfuscation.}
We define a class of circuits called compute-and-compare circuits for which we study copy protection and secure software leasing in this work.

\begin{definition}[Compute-and-Compare Circuits]\label{def:cc_circuits_searchability}
A compute-and-compare circuit $\cnc{P}{\lock,\msg}$ is of the form
\[
\cnc{P}{\lock,\msg}(x)\left\{
\begin{array}{ll}
\msg&(P(x)=\lock)\\
0&(\text{otherwise})~,
\end{array}
\right.
\]
where $P$ is a circuit, $\lock$ is a string called lock value, and $\msg$ is a message.
\end{definition}

We introduce the definition of compute-and-compare obfuscation.
We assume that a program $P$ has an associated set of parameters $\pp_P$ (input size, output size, circuit size) which we do not need to hide.
\begin{definition}[Compute-and-Compare Obfuscation]\label{def:CCObf}
A PPT algorithm $\CCObf$ is an obfuscator for the family of distributions $D=\{D_\secp\}$ if the following holds:
\begin{description}
\item[Functionality Preserving:] There exists a negligible function $\negl$ such that for all program $P$, all lock value $\lock$, and all message $\msg$, it holds that
\begin{align}
\Pr[\forall x, \tlP(x)=\cnc{P}{\lock,\msg}(x) \mid \tlP\la\CCObf(1^\secp,P,\lock,\msg)]=1-\negl(\secp).
\end{align}
\item[Distributional Indistinguishability:] There exists an efficient simulator $\Sim$ such that for all message $\msg$, we have
\begin{align}
(\CCObf(1^\secp,P,\lock,\msg),\qaux)\cind(\Sim(1^\secp,\pp_P,\abs{\msg}),\qaux),
\end{align}
where $(P,\lock,\qaux)\la D_\secp$.
\end{description}
\end{definition}

\begin{theorem}[\cite{FOCS:GoyKopWat17,FOCS:WicZir17}]
If the LWE assumption holds, there exists compute-and-compare obfuscation for all families of distributions $D=\{D_\secp\}$, where each $D_\secp$ outputs uniformly random lock value $\lock$ independent of $P$ and $\qaux$.
\end{theorem}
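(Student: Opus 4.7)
The plan is to follow the LWE-based constructions of Goyal--Koppula--Waters and Wichs--Zirdelis, which both reduce compute-and-compare obfuscation to a primitive one can build directly from LWE; given that the statement treats the lock value as uniformly random and independent of $(P,\qaux)$, the simulator only needs to leverage the entropy of $\lock$ and the pseudorandomness of LWE samples. First I would fix the parameter set $\pp_P$ (input length, output length, and circuit size of $P$) and describe the obfuscator as emitting two pieces: (i) a circuit description that, on input $x$, homomorphically evaluates $P$ on an LWE-based encoding of $x$ so as to produce an encoding of $P(x)$, and (ii) a ``test-and-release'' gadget that, given an encoding of $y$, outputs $\msg$ iff $y = \lock$ and outputs $0^{|\msg|}$ otherwise. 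The concrete instantiation can be done via a GSW-style FHE encoding composed with a lattice-based equality-testing gadget (as in Wichs--Zirdelis), or via a digital locker on top of a lossy function family derived from LWE (as in Goyal--Koppula--Waters).

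Functionality preservation will follow from the correctness of the underlying homomorphic evaluation and the perfect correctness of the equality test: for any $x$, the evaluator recomputes an encoding of $P(x)$, feeds it to the test gadget, and outputs $\msg$ exactly when $P(x) = \lock$, matching $\cnc{P}{\lock,\msg}$ up to a negligible decryption-error event. I would organize this as a short lemma bounding the accumulated FHE noise by a polynomial in $|P|$ and choosing LWE parameters large enough to absorb it, which is standard.

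The core of the argument is distributional indistinguishability. The simulator $\Sim$ receives only $\pp_P$ and $|\msg|$, so it must output a fake obfuscation whose marginal, even together with the auxiliary quantum state $\qaux$, is indistinguishable from the real one. My plan is a hybrid argument: in the first hybrid, replace the LWE encodings inside the test gadget with uniformly random lattice elements via LWE pseudorandomness (here it is crucial that $\qaux$ does not depend on $\lock$, so the reduction can still simulate $\qaux$ by sampling it independently from $D_\secp$'s marginal after conditioning on randomness of $\lock$). In the second hybrid, use the fact that $\lock$ is now information-theoretically hidden from the view produced so far to statistically replace the ``encryption of $\msg$ under $\lock$'' inside the gadget with an encryption of $0^{|\msg|}$; this is a leftover-hash-lemma / search-to-decision style step using the min-entropy of $\lock$. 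The resulting distribution depends only on $(\pp_P,|\msg|)$, which is exactly what $\Sim$ can produce.

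The main obstacle will be handling the quantum auxiliary input $\qaux$ carefully. Unlike the classical setting, we cannot assume $\qaux$ is efficiently constructible, so the LWE-based indistinguishability step must go through a reduction that treats $\qaux$ as non-uniform quantum advice; I would invoke the post-quantum LWE assumption (distinguisher with quantum advice) and verify that the hybrids can be sampled coherently, so that the distinguishing advantage of any QPT $\qA$ against the hybrids translates into a QPT LWE distinguisher with the same advice $\qaux$. Given that the distribution $D_\secp$ draws $\lock$ uniformly and independently of $(P,\qaux)$, no delicate coupling is needed between $\qaux$ and $\lock$, and the proof should go through essentially as in the classical case with the post-quantum variant of LWE. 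I would then cite \cite{FOCS:GoyKopWat17,FOCS:WicZir17} for the detailed construction and parameter setting.
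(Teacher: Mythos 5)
The paper states this theorem as a bare citation to \cite{FOCS:GoyKopWat17,FOCS:WicZir17} without giving a proof, so there is no in-paper argument to compare against; your sketch is a reasonable reconstruction of what those works do (GSW-style homomorphic evaluation to an encoding of $P(x)$, an LWE-based equality test against $\lock$, and a hybrid argument that first uses LWE pseudorandomness and then the entropy of the uniform, independent $\lock$ to switch to a simulated obfuscation). One small clarification worth making: the cited papers handle classical auxiliary input only, and the extension to quantum $\qaux$ in the restricted ``uniform $\lock$ independent of $(P,\qaux)$'' case is exactly the observation of Coladangelo, Majenz, and Poremba~\cite{ARXIV:ColMajPor20} (which the paper cites elsewhere but not here); you correctly identify this as the delicate step, and the resolution is, as you suspect, that the reduction treats $\qaux$ as quantum advice and invokes post-quantum LWE, while the final switch to a simulated obfuscation is statistical and hence aux-agnostic.
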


\subsection{Quantum Cryptographic Tools}

\paragraph{Unclonable encryption.}
We introduce the definition of secret key unclonable encryption (SKUE)~\cite{TQC:BroLor20} and one-time indistinguishability for it.

\begin{definition}[SKUE (Syntax)]\label{def:unclonable_ske}
A SKUE scheme with the message space $\cM$ is a tuple of quantum algorithms $(\KG,\qencrypt,\qdecrypt)$.
\begin{description}
\item[$\KG(1^\secp)\ra\uk$:] The key generation algorithm takes as input the security parameter $1^\secp$ and outputs a key $\uk$.
    \item[$\qencrypt(\uk,m) \ra \qct$:] The encryption algorithm takes as input $\uk$ and a plaintext $\msg\in\Ms$ and outputs a ciphertext $\qct$.
    \item[$\qdecrypt(\uk,\qct) \ra \msg^\prime$:] The decryption algorithm takes as input $\uk$ and $\qct$ and outputs a plaintext $\msg^\prime \in \Ms$ or $\bot$.

\item[Decryption correctness:] For any $\msg\in\Ms$, it holds that
\begin{align}
\Pr\left[
\qDec(\uk,\qct)= \msg
\ \middle |
\begin{array}{ll}
\uk\gets \KG(1^\secp)\\
\qct \gets \qencrypt(\uk,\msg)
\end{array}
\right] 
=1-\negl(\secp).
\end{align}
\end{description}
\end{definition}

\begin{definition}[One-Time Unclonable-Indistinguishable Security for SKUE]\label{def:ot_unclonable_ind_ske}
Let $\UE=(\qencrypt, \qdecrypt)$ be an SKUE scheme with the key space $\cK$ and the message space $\cM$.
We consider the following security experiment $\expc{\UE,\qA}{ot}{ind}{clone}(\secp)$, where $\qA=(\qA_0,\qA_1,\qA_2)$.

\begin{enumerate}
\item $\qA_0$ sends $(m_0,m_1)$ to the challenger.
    \item The challenger generates $\coin\la\bit$, $\uk \la \KG(1^\secp)$, and $\qct \la \qencrypt(\uk,m_\coin)$, and sends $\qct$ to $\qA_0$.
    \item $\qA_0$ creates a bipartite state $\qstateq$ over registers $\qreg{R}_1$ and $\qreg{R}_2$. $\qA$ sends $\qstateq[\qreg{R}_1]$ and $\qstateq[\qreg{R}_2]$ to $\qA_1$ and $\qA_2$, respectively.
\item The challenger sends $\uk$ to $\qA_1$ and $\qA_2$.
$\qA_1$ and $\qA_2$ respectively output $\coin_1^\prime$ and $\coin_2^\prime$.
If $\coin_i^\prime = \coin$ for $i\in \setbk{1,2}$, the challenger outputs $1$, otherwise outputs $0$.

\end{enumerate}
We say that $\UE$ is one-time unclonable-indistinguishable secure SKUE scheme if for any QPT $\qA$, it holds that
\begin{align}
\advc{\UE,\qA}{ot}{ind}{clone}(\secp)\seteq \Pr[ \expc{\UE,\qA}{ot}{ind}{clone}(\secp)=1]\le \frac{1}{2} + \negl(\secp).
\end{align}
\end{definition}

\paragraph{Single-decryptor encryption.}
We review the definition of single-decryptor encryption (SDE).
We consider a one-time secret key variant of SDE by Coladangelo et al.~\cite{C:CLLZ21} in this work.

\begin{definition}[Secret Key SDE (Syntax)]\label{def:sd_ske}
A secret key SDE scheme $\SDE$ is a tuple of quantum algorithms $(\qKG,\Enc,\qDec)$ with plaintext space $\Ms$.
\begin{description}
    \item[$\qKG (1^\secp) \ra (\ek,\qdk)$:] The key generation algorithm takes as input the security parameter $1^\secp$ and outputs an encryption key $\ek$ and a quantum decryption key $\qdk$.
    \item[$\Enc(\ek,m) \ra \ct$:] The encryption algorithm takes as input $\ek$ and a plaintext $m\in\Ms$ and outputs a ciphertext $\ct$.
    \item[$\qDec(\qdk,\ct) \ra m^\prime$:] The decryption algorithm takes as input $\qdk$ and $\ct$ and outputs a plaintext $m^\prime \in \Ms$ or $\bot$.

\item[Decryption correctness:] There exists a negligible function $\negl$ such that for any $m\in\Ms$, 
\begin{align}
\Pr\left[
\qDec(\qdk,\ct)= m
\ \middle |
\begin{array}{ll}
(\ek,\qdk)\gets \qKG(1^\secp)\\
\ct \gets \Enc(\ek,m)
\end{array}
\right] 
=1-\negl(\secp).
\end{align}

\end{description}
\end{definition}

\begin{definition}[One-Time Strong Anti-Piracy Security for Secret Key SDE]\label{def:otsanti_piracy}
Let $\gamma\in[0,1]$. Let $\SDE=(\qKG,\Enc,\qDec)$ be a secret key SDE scheme. We consider the one-time strong anti-piracy game $\expc{\SDE,\qA}{ot}{santi}{piracy}(\secp,\gamma)$ between the challenger and an adversary $\qA=(\qA_0,\qA_1,\qA_2)$ below.
\begin{enumerate}
\item The challenger generates $(\ek,\qdk) \gets \qKG(1^\secp)$ and sends $\qdk$ to $\qA_0$.
\item $\qA_0$ creates a bipartite state $\qstateq$ over registers $\qreg{R}_1$ and $\qreg{R}_2$. $\qA$ sends $(m_0,m_1)$, $\qstateq[\qreg{R}_1]$, and $\qstateq[\qreg{R}_2]$ to the challenger, $\qA_1$, and $\qA_2$, respectively.

\item $\qA_1$ and $\qA_2$ respectively output $\pirateD_1=(\rho[\qreg{R}_1],\mat{U}_1)$ and $\pirateD_2=(\rho[\qreg{R}_2],\mat{U}_2)$.

\item For $\alpha\in[2]$, let $\mat{P}_{\alpha}=(\mat{P}_{\alpha,b,\ct},\mat{I}-\mat{P}_{\alpha,b,\ct})_{b,\ct}$ be a collection of binary projective measurements, where
\begin{align}
\mat{P}_{\alpha,b,\ct}=\mat{U}_{\alpha,\ct}^\dagger\ket{b}\bra{b}\mat{U}_{\alpha,\ct}.
\end{align}
We also define $D$ as the distribution that generates $b\la\bit$ and $\ct\la\Enc(\ek,m_b)$ and outputs $(b,\ct)$.
Also, for $\alpha\in[2]$, we denote the mixture of $\mat{P}_\alpha$ with respect to $D$ as $\mat{P}_{\alpha,D}$.
Then, for every $\alpha\in[2]$, the challenger applies $\projimp(\mat{P}_{\alpha,D})$ to $\rho[\qreg{R}_\alpha]$ and obtains $p_\alpha$.
If $p_\alpha > \frac{1}{2}+\gamma$ for every $\alpha\in[2]$, the challenger outputs $1$.
Otherwise, the challenger outputs $0$.
\end{enumerate}

We say that $\SDE$ is one-time strong anti-piracy secure if for any $\gamma\in[0,1]$ and QPT adversary $\qA$, it satisfies that
\begin{align}
\advc{\SDE,\qA}{ot}{santi}{piracy}(\secp,\gamma)\seteq\Pr[ \expc{\SDE,\qA}{ot}{santi}{piracy}(\secp, \gamma)=1]= \negl(\secp).
\end{align}
\end{definition}

\begin{remark}
Readers might think the meaning of ``one-time'' is unclear in~\cref{def:otsanti_piracy}.
Here, ``one-time'' means that $\qA_0$ cannot access an encryption oracle that returns a ciphertext for a query $\msg$. This naming might sound strange since $\qA_0$ does not receive any ciphertext. However, we stick to this naming for correspondence with one-time unclonable-indistinguishable security of unclonable encryption in~\cref{def:ot_unclonable_ind_ske}.
\end{remark}

\begin{remark}[On the issue in indistinguishability-based definitions]\label{rem:SDE_def}
There are two indistinguishability-based security notions for SDE.
The first one is defined by Georgiou and Zhandry (denoted by GZ)~\cite{EPRINT:GeoZha20} and the second one is defined by Coladangelo et al. (denoted by CLLZ)\footnote{They call ``CPA-style anti-piracy security''.}~\cite{C:CLLZ21}.
Both of them are defined using a security game similar to that in \cref{def:otsanti_piracy}, except that $\qA_1$ and $\qA_2$ are given the challenge ciphertexts and required to guess the challenge bits.
In the GZ definition, $\qA_1$ and $\qA_2$ receive the same ciphertext $\Enc(\sk,m_\coin)$ for the single challenge bit. However, in the CLLZ definition, $\qA_1$ and $\qA_2$ receive different ciphertexts $\Enc(\sk,m_{\coin_1})$ and $\Enc(\sk,m_{\coin_2})$, respectively, where $\coin_1$ and $\coin_2$ are independent challenge bits.
Currently, the relationship between these two security notions for SDE remains elusive.\footnote{Ananth et al.~\cite{ARXIV:AnaKalLiu23} show the relationship between \emph{one-wayness-based} security with the same ciphertext and one with the different ciphertexts.}
The GZ definition is known to imply unclonable-indistinguishable secure unclonable encryption, but the CLLZ definition is not.
Also, strong anti-piracy security defined in \cref{def:otsanti_piracy} implies the CLLZ definition but not the GZ definition.
\end{remark}

\paragraph{Quantum fully homomorphic encryption.}
We introduce quantum fully homomorphic encryption (QFHE) with classical ciphretexts.

\begin{definition}[QFHE with Classical Ciphertexts~\cite{FOCS:Mahadev18b,C:Brakerski18}]\label{def:QFHE}
A QFHE scheme with classical ciphertext is a tuple of algorithms $(\KG,\Enc,\qEval,\Dec)$.
\begin{description}
\item[$\KG(1^\secp)\ra (\pk,\sk)$:] The key generation algorithm takes as input the security parameter $1^\secp$ and outputs a key pair $(\pk,\sk)$.
\item[$\Enc(\pk,m)\ra \ct$:] The encryption algorithm takes as input a public key $\pk$ and a plaintext $m$, and outputs a ciphertext $\ct$.
Without loss of generality, we can assume that a ciphertext includes the public key $\pk$.
\item[$\qEval(\qC,\rho,\ct_1,\cdots,\ct_\inplen)\ra\evalct$:] The evaluation algorithm takes as input a quantum circuit $\qC$ with classical outputs, quantum state $\rho$, and ciphertexts $\ct_1,\cdots,\ct_\inplen$, and outputs a classical ciphertext $\evalct$.
\item[$\Dec(\sk,\ct)\ra m^\prime$:] The decryption algorithm takes as input a secret key $\sk$ and a ciphertext $\ct$, and outputs a plaintext $m$.
\item[Decryption Correctness:] Let $(\pk,\sk)\gets\KG(1^\secp)$.
Let $(m_1,\cdots,m_\inplen)$ be any $\inplen$ messages.
For any $i\in[\inplen]$, let $\ct_i\gets\Enc(\pk,m_i)$ for every $i\in[\inplen]$.
Let $\qC$ be a quantum circuit that takes a quantum state and $\inplen$ classical input, $\rho$ a quantum state, and $\evalct\gets\qEval(\qC,\rho,\ct_1,\cdots,\ct_\inplen)$.
Then, we have $\Dec(\sk,\evalct)=\qC(\rho,m_1,\cdots,m_\inplen)$.

\item[Semantic Security:] For any two messages of equal length $m_0,m_1$, we have
\begin{align}
(\pk,\Enc(\pk,m_0))\cind(\pk,\Enc(\pk,m_1)),
\end{align}
where $(\pk,\sk)\gets\KG(1^\secp)$.
\end{description}
\end{definition}

The existing QFHE schemes~\cite{FOCS:Mahadev18b,C:Brakerski18} can be seen as QFHE with classical ciphertexts, since they have a property that if the encrypted plaintext is classical, we can make the ciphertext classical. Thus, the following theorem holds.

\begin{theorem}[\cite{FOCS:Mahadev18b,C:Brakerski18}]
If the LWE assumption holds, there exists QFHE with classical ciphertexts.
\end{theorem}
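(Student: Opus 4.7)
The plan is to appeal directly to the QFHE constructions of Mahadev and Brakerski cited in the theorem, and verify that they meet the syntactic and security requirements of our definition. Both works build a QFHE scheme from LWE, and in each the ciphertext format for classical plaintexts is a classical bit string inherited from an underlying LWE-based classical FHE (such as GSW). Since we only need to observe that these constructions already yield \emph{classical} ciphertexts in the regime we care about, no new cryptographic content is required here.

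First, I would instantiate $\KG$ and $\Enc$ using either construction; because classical plaintexts are encrypted exactly as in classical FHE, the resulting ciphertexts are classical, satisfying our syntax. Semantic security of $\Enc$ on equal-length classical messages then follows directly from IND-CPA security of the underlying LWE-based classical FHE. Next, for $\qEval$ on a circuit $\qC$ with classical output, I would decompose $\qC$ into Clifford and non-Clifford (T) layers: Clifford layers are handled homomorphically by updating classical Pauli keys attached to each encrypted qubit, which keeps the bookkeeping classical, while the non-Clifford layers use Mahadev's measurement protocol built from LWE-based trapdoor claw-free functions (and Brakerski's subsequent simplification). Because the final gate of $\qC$ is a computational-basis measurement producing a classical outcome, the evaluated ciphertext $\evalct$ is a classical FHE ciphertext of that outcome, so $\Dec(\sk,\evalct)$ recovers the classical output by ordinary classical decryption.

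The main obstacle, namely evaluating non-Clifford gates on LWE-encrypted quantum data without interaction with the secret-key holder, is precisely what the cited papers resolve, so the only step left for us is to check that the interfaces align with our definition. Once the classical-ciphertext property is observed both for the encryption of classical inputs and for the output of $\qEval$ on classical-output circuits, the theorem follows immediately from the results of Mahadev and Brakerski.
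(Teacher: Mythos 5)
Your proposal takes essentially the same approach as the paper, which simply observes that the Mahadev/Brakerski QFHE schemes already yield classical ciphertexts whenever the encrypted plaintext is classical (the paper defers the interface details to Alagic--Brakerski--Dulek--Schaffner rather than unpacking the Clifford$+$T decomposition as you do). Your additional exposition of the Pauli-key/measurement-protocol mechanism is correct and consistent with what the cited works establish, so the argument goes through.
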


QFHE with classical ciphertexts was previously used in the context of impossibility on copy protection~\cite{EC:AnaLaP21} and quantum obfuscation~\cite{C:ABDS21}.
For the detailed explanation for how to use the existing QFHE schemes as QFHE with classical ciphertexts, please refer to~\cite{C:ABDS21}.

\else\fi

\section{One-out-of-Many Unclonable Security}\label{sec:one-out-of-many}
This section introduces new unclonable security notions that we call one-out-of-many unclonable security.
The definition is roughly as follows.
The one-out-of-many unclonable security game is an indistinguishability-style game played by a tuple of $\numa+1$ adversaries $(\qA_0,\qA_1,\cdots,\qA_\numa)$, where $ 2\le \numa$ is arbitrary.
At the first stage of the game, $\qA_0$ is given a single quantum object (such as ciphertext in unclonable encryption), generates possibly entangled $\numa$-partite states $\rho_1,\ldots,\rho_\numa$, and sends $\rho_k$ to $\qA_k$ for $k\in \setbk{1,\ldots,\numa}$. At the second stage, the challenger selects one of $(\qA_1,\ldots,\qA_\numa)$ by a random $\alpha\chosen \setbk{1,\ldots, \numa}$ and sends additional information (such as a secret key in unclonable encryption) \emph{only to $\qA_\alpha$}, and only $\qA_\alpha$ tries to guess the challenge bit $\coin\in\bit$.
The one-out-of-many unclonable security guarantees that the adversary cannot win this game with a probability significantly better than the trivial winning probability $\frac{1}{2}+\frac{1}{2n}$.\footnote{Suppose $\qA_0$ forwards the given quantum state to $\qA_1$ and nothing to $(\qA_2,\ldots,\qA_\numa)$. If $\alpha=1$ is chosen, the adversaries win with probability $1$ because the additional information, together with the original quantum object, can be used to compute the challenge bit $\coin$ correctly. If one of $(\qA_2,\ldots,\qA_\numa)$ is chosen, the adversaries win with probability $\frac{1}{2}$ by random guess. Hence, the advantage is $\frac{1}{n}\cdot 1+\frac{n-1}{n}\cdot\frac{1}{2}=\frac{1}{2}+\frac{1}{2n}$, which we consider as the trivial advantage.}
.

 The one-out-of-many unclonable security notion guarantees that no adversary can generate $\numa$ copies with a probability significantly better than $\frac{1}{\numa}$ for any $\numa$.
This is because an adversary who can generate $\numa$ copies with probability $\frac{1}{\numa}+\delta$ can win the one-out-of-many game with probability at least $(\frac{1}{\numa}+\delta)\cdot1+(1-\frac{1}{\numa}-\delta)\cdot\frac{1}{2}=\frac{1}{2}+\frac{1}{2\numa}+\frac{\delta}{2}$, which violates to the one-out-of-many unclonable security.
The one-out-of-many unclonable security notion does not rule out a copying process that can generate $\numa$ copies with probability $\frac{1}{\numa}$. However, it guarantees that such a process must completely destroy the original object with probability $1-\frac{1}{\numa}$.
In fact, it guarantees that the expected number of successful copies generated by any copying adversary is at most $1$.

Although one-out-of-many unclonable security looks weaker than existing unclonable security, it still seems useful in some applications. For example, suppose we publish many quantum objects, say $\ell$ objects. Then, the one-out-of-many security guarantees that no matter what copying attacks are applied to those objects, there are expected to be only $\ell$ objects on average in this world.

Below, we define one-out-of-many one-time unclonable-indistinguishable security for SKUE and one-out-of-many one-time anti-piracy for secret key SDE.
We prove that one-out-of-many one-time unclonable-indistinguishable security for SKUE implies one-time IND-CPA security.
We also prove that one-time strong anti-piracy for secret key SDE implies one-out-of-many one-time anti-piracy for secret key SDE.
Then, we show that we can transform secret key SDE with one-out-of-many one-time anti-piracy into one-out-of-many one-time unclonable-indistinguishable secure SKUE.

In \cref{sec:CP}, we define one-out-of-many copy protection security for single-bit output point functions.
In \cref{sec:UPE}, we define simulation-based security for unclonable PE and introduce its one-out-of-many variant.
    
\subsection{One-out-of-Many Security Notions for SKUE and Secret Key SDE}

\begin{definition}[One-out-of-Many One-Time Unclonable-Indistinguishability for SKUE]\label{def:ek_os_ot_unclone_ind}
Let $\UE=(\KG,\qencrypt, \qdecrypt)$ be an SKUE scheme with the message space $\cM$.
We consider one-out-of-many one-time unclonable-indistinguishability game $\expd{\UE,\qA}{om}{ot}{clone}{ind}(\secp,\numa)$ between the challenger and an adversary $\qA=(\qA_0,\qA_1,\allowbreak\cdots,\qA_\numa)$ below.
\begin{enumerate}
\item $\qA_0$ sends $(\msg_0,\msg_1)$ to the challenger.
\item The challenger generates $\coin\la\bit$, $\uk \la \KG(1^\secp)$, and $\qct \la \qencrypt(\uk,\msg_\coin)$, and sends $\qct$ to $\qA_0$.
\item $\qA_0$ creates a quantum state $\qstateq$ over $\numa$ registers $\qreg{R}_1,\cdots,\qreg{R}_\numa$. $\qA_0$ sends $\qstateq[\qreg{R}_i]$ to $\qA_i$ for every $i\in[\numa]$.

\item The challenger generates $\alpha\la[\numa]$, and gives $\uk$ to $\qA_\alpha$. $\qA_\alpha$ outputs $\coin^\prime$.
The challenger outputs $1$ if $\coin^\prime=\coin$ and outputs $0$ otherwise.
\end{enumerate}

We say that $\UE$ is one-out-of-many one-time unclonable-indistinguishable if for any polynomial $\numa=\numa(\secp)$ and QPT adversary $\qA=(\qA_0,\qA_1,\cdots,\qA_\numa)$, it satisfies that
\begin{align}
\advd{\UE,\qA}{om}{ot}{clone}{ind}(\secp,\numa)\seteq\Pr[ \expd{\UE,\qA}{om}{ot}{unclone}{ind}(\secp,\numa)=1]
\le \frac{1}{2}+\frac{1}{2\numa}+\negl(\secp).
\end{align}
\end{definition}

\begin{theorem}\label{thm:omind_to_ind}
Let $\UE=(\KG,\qEnc,\qDec)$ be a one-out-of-many one-time unclonable-indistinguishable secure SKUE scheme with the message space $\cM$.
Then, $\UE$ satisfies one-time IND-CPA security, that is,
$$\qEnc(\uk,\msg_0)\cind\qEnc(\uk,\msg_1)$$
for any $(\msg_0,\msg_1)\in\cM$, where $\uk\la\KG(1^\secp)$.
\end{theorem}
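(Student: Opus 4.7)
The plan is to prove the contrapositive: assuming $\UE$ fails to satisfy the stated indistinguishability, I construct a one-out-of-many adversary that beats the trivial advantage $\frac{1}{2}+\frac{1}{2\numa}$ for a suitably chosen $\numa$.

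Suppose for contradiction that there exist messages $\msg_0,\msg_1\in\cM$ and a QPT algorithm $\qB$ such that, writing $\qct_b\la\qEnc(\uk,\msg_b)$ with $\uk\la\KG(1^\secp)$, the bit $b'\la\qB(\qct_b)$ satisfies $\Pr[b'=b]\geq \tfrac12+\epsilon(\secp)$ for some non-negligible function $\epsilon$. Since $\epsilon$ is non-negligible, there is a polynomial $p$ with $\epsilon(\secp)\geq 1/p(\secp)$ for infinitely many $\secp$. I will fix $\numa(\secp)\seteq 2\,p(\secp)$.

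I would then build $\qA=(\qA_0,\qA_1,\dots,\qA_\numa)$ as follows. The first-stage adversary $\qA_0$ announces $(\msg_0,\msg_1)$, receives $\qct$, runs $\qB$ on $\qct$ to obtain a classical bit $b'$, and prepares the $\numa$-register classical state $|b'\rangle^{\otimes \numa}$, forwarding register $\qreg{R}_i$ to $\qA_i$. Each $\qA_i$, regardless of whether it is handed $\uk$ or not, simply measures its register and outputs the stored bit $b'$. The crucial observation is that the winning probability of this strategy does not depend on the random choice $\alpha\chosen[\numa]$: on either outcome the game outputs $1$ iff $b'=\coin$, so
\begin{align*}
\advd{\UE,\qA}{om}{ot}{clone}{ind}(\secp,\numa) \;=\; \Pr[b'=\coin] \;\geq\; \tfrac12+\epsilon(\secp).
\end{align*}
By the choice $\numa=2p(\secp)$, we have $\frac{1}{2\numa}=\frac{1}{4p(\secp)}\leq \tfrac{\epsilon(\secp)}{2}$ on the infinite set of $\secp$ where $\epsilon(\secp)\geq 1/p(\secp)$, so the advantage exceeds $\tfrac12+\tfrac{1}{2\numa}+\tfrac{\epsilon(\secp)}{2}$, which is not of the form $\tfrac12+\tfrac{1}{2\numa}+\negl(\secp)$. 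This contradicts the one-out-of-many one-time unclonable-indistinguishability of $\UE$ for $\numa$, completing the proof.

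I expect no serious obstacle here; the only subtlety is the quantification trick of choosing the polynomial $\numa$ to depend on the polynomial $p$ witnessing non-negligibility of $\epsilon$, so that the $\tfrac{1}{2\numa}$ slack in the one-out-of-many definition becomes dominated by the IND-CPA advantage. Note that we crucially use the fact that one-out-of-many security is required to hold for \emph{all} polynomial $\numa$, not merely a fixed one.
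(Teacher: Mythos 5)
Your proposal is correct and follows essentially the same route as the paper's proof: run the IND-CPA distinguisher in the first stage, broadcast its output bit to all second-stage adversaries (whose behavior is then independent of which $\alpha$ is chosen), and pick the polynomial $\numa$ large enough that the $\frac{1}{2\numa}$ slack is dominated by the distinguishing advantage. The paper is terser (it sets $\numa=1/p$ directly where $p$ is the inverse-polynomial advantage), while you spell out the quantification over infinitely many $\secp$ more explicitly, but the underlying argument is identical.
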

\begin{proof}
Suppose there exists $\qB$ who can distinguish $\qEnc(\uk,\msg_0)$ from $\qEnc(\uk,\msg_1)$ with probability $\frac{1}{2}+p$ for some $(\msg_0,\msg_1)\in\cM$ and inverse polynomial $p$, where $\uk\la\KG(1^\secp)$.
Consider the following $\numa=\frac{1}{p}$ tuple of adversaries $\qA=(\qA_0,\qA_1,\cdots,\qA_\numa)$ for the one-out-of-many one-time unclonable-indistinguishable security.
On input $\qct$, $\qA_0$ gives them to $\qB$ and obtains $\qB$'s guess $\coin^\prime$, and sends it to $\qA_i$ for every $i\in[\numa]$.
$\qA_i$ just outputs $\coin^\prime$ if $\alpha=i$ is chosen by the challenger.
We have $\advd{\UE,\qA}{om}{ot}{unclone}{ind}(\secp,\numa)=\frac{1}{2}+p$, which contradicts to the one-out-of-many unclonable-indistinguishable security since $p>\frac{1}{2\numa}=\frac{p}{2}$.
\end{proof}

\begin{definition}[One-out-of-Many One-Time Anti-Piracy Security for Secret Key SDE]\label{def:om_anti_piracy}
Let $\SDE=(\qKG,\Enc,\qDec)$ be a secret key SDE scheme. We consider one-out-of-many one-time anti-piracy game $\expc{\SDE,\qA}{om}{otanti}{piracy}(\secp,\numa)$ between the challenger and an adversary $\qA=(\qA_0,\qA_1,\cdots,\qA_\numa)$ below.
\begin{enumerate}
\item The challenger generates $(\ek,\qdk) \gets \qKG(1^\secp)$ and sends $\qdk$ to $\qA_0$.
\item $\qA_0$ creates a quantum state $\qstateq$ over $\numa$ registers $\qreg{R}_1,\cdots,\qreg{R}_\numa$. $\qA_0$ sends $(\msg_0,\msg_1)$ to the challenger. $\qA_0$ also sends $\qstateq[\qreg{R}_i]$ to $\qA_i$ for every $i\in[\numa]$.

\item $\qA_i$ outputs $\pirateD_i=(\rho[\qreg{R}_i],\mat{U}_i)$ for every $i\in[\numa]$.

\item The challenger generates $\alpha\la[\numa]$ and $\coin\la\bit$, and generates $\ct\la\Enc(\ek,\msg_\coin)$. The challenge runs $\pirateD_\alpha$ on input $\ct$ and obtains $\coin^\prime$.
The challenger outputs $1$ if $\coin^\prime=\coin$ and outputs $0$ otherwise.
\end{enumerate}

We say that $\SDE$ is one-out-of-many one-time anti-piracy secure if for any polynomial $\numa=\numa(\secp)$ and QPT adversary $\qA=(\qA_0,\qA_1,\cdots,\qA_\numa)$, it satisfies that
\begin{align}
\advc{\SDE,\qA}{om}{otanti}{piracy}(\secp,\numa)\seteq\Pr[ \expc{\SDE,\qA}{om}{otanti}{piracy}(\secp,\numa)=1]
\le \frac{1}{2}+\frac{1}{2\numa}+\negl(\secp).
\end{align}
\end{definition}

We show that one-time strong anti-piracy security for secret key SDE implies one-ouf-of-many one-time anti-piracy for secret key SDE.
\begin{theorem}\label{thm:santi-piracy_implies_om_anti-piracy}
Let $\SDE$ be a secret key SDE scheme.
If $\SDE$ is one-time strong anti-piracy secure, then $\SDE$ is also one-out-of-many one-time anti-piracy secure.
\end{theorem}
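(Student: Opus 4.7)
The plan is to proceed by contraposition: suppose there is a QPT adversary $\qA=(\qA_0,\qA_1,\ldots,\qA_\numa)$ and an inverse polynomial $\delta$ such that $\qA$'s one-out-of-many winning probability is at least $\frac{1}{2}+\frac{1}{2\numa}+\delta$, and build a reduction $\qB=(\qB_0,\qB_1,\qB_2)$ that breaks strong anti-piracy at parameter $\gamma\seteq \delta/2$. $\qB_0$ on input $\qdk$ runs $\qA_0(\qdk)$, obtains the $\numa$-partite state $\qstateq$ on registers $\qreg{R}_1,\ldots,\qreg{R}_\numa$ together with the message pair $(\msg_0,\msg_1)$, samples an ordered pair $(i,j)$ uniformly from the distinct pairs in $[\numa]$, forwards $(\msg_0,\msg_1)$ to the challenger, and routes $\qstateq[\qreg{R}_i]$ and $\qstateq[\qreg{R}_j]$ to $\qB_1$ and $\qB_2$, which internally simulate $\qA_i$ and $\qA_j$ to produce the quantum decryptors $\pirateD_i$ and $\pirateD_j$ handed to the strong anti-piracy challenger.

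For the analysis, for every $\alpha\in[\numa]$ let $p_\alpha$ denote the hypothetical outcome of applying the projective implementation $\projimp(\cP_{\alpha,D})$ to $\qstateq[\qreg{R}_\alpha]$, where $\cP_{\alpha,D}$ is the mixed POVM built from $\qA_\alpha$'s output unitary and the ciphertext distribution $D$ as in \cref{def:otsanti_piracy}. Because these measurements act on disjoint registers they pairwise commute, so $(p_1,\ldots,p_\numa)$ is a well-defined joint random variable. Moreover the single-shot probability that $\pirateD_\alpha$ correctly guesses the challenge bit equals $\mathbb{E}[p_\alpha]$: by construction $\mat{P}_{\alpha,b,\ct}=\mat{U}_{\alpha,\ct}^\dagger\ket{b}\bra{b}\mat{U}_{\alpha,\ct}$ is the POVM element whose expectation on $\qstateq[\qreg{R}_\alpha]$ equals the output probability of $\pirateD_\alpha$ on $\ct$, and by \cref{lem:commutative_projective_implementation} the projective implementation reproduces the same $\{0,1\}$-marginal as $\cP_{\alpha,D}$. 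Consequently the one-out-of-many winning probability of $\qA$ equals $\mathbb{E}[W]$ where $W\seteq\frac{1}{\numa}\sum_\alpha p_\alpha$.

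Set $S\seteq\{\alpha:p_\alpha>\frac{1}{2}+\gamma\}$. When $|S|\le 1$ the deterministic bound $W\le\frac{1}{\numa}+\frac{\numa-1}{\numa}(\frac{1}{2}+\gamma)=\frac{1}{2}+\frac{1}{2\numa}+\gamma\cdot\frac{\numa-1}{\numa}$ holds, so the hypothesis $\mathbb{E}[W]\ge\frac{1}{2}+\frac{1}{2\numa}+\delta$ forces $\Pr[|S|\ge 2]\ge \delta-\gamma = \delta/2$. Since $\qB_0$ draws $(i,j)$ uniformly from ordered distinct pairs, conditioned on $\qA$'s randomness the probability that $(i,j)\in S\times S$ equals $\frac{|S|(|S|-1)}{\numa(\numa-1)}$, so $\qB$'s overall winning probability is at least $\frac{2}{\numa(\numa-1)}\Pr[|S|\ge 2]\ge \frac{\delta}{\numa(\numa-1)}$. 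This is non-negligible and contradicts strong anti-piracy at the inverse-polynomial parameter $\gamma$.

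The main delicate point is the identification in the second paragraph between the single-shot success probability used in the one-out-of-many game and the expectation of the projective-implementation outcome used in strong anti-piracy; once this is in place the rest is the elementary case analysis on $|S|$ combined with the fact that measurements on disjoint registers commute, so that the reduction's random choice of $(i,j)$ picks two ``good'' decryptors with probability $\Omega(\delta/\numa^2)$ whenever the one-out-of-many advantage is $\delta$. A minor point worth verifying is that $\qB$ correctly selects $\gamma$ to be inverse polynomial (so that the contradiction uses strong anti-piracy at a valid parameter), which is immediate from $\gamma=\delta/2$.
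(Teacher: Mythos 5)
Your proposal is correct and takes essentially the same approach as the paper. Both arguments rest on the joint outcome $(p_1,\ldots,p_\numa)$ of the commuting projective implementations and the observation that strong anti-piracy forces at most one $p_\alpha$ above $\tfrac{1}{2}+\gamma$ except with negligible probability; you phrase this in the contrapositive and spell out the random-pair reduction (with the $\Omega(\delta/\numa^2)$ guessing loss) that the paper's direct argument leaves implicit when it asserts the ``except for a single index $i^*$'' claim.
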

\begin{proof}
$\expc{\SDE,\qA}{om}{anti}{piracy}(\secp,\numa)$ is equivalent to the security game where item 4 is replaced with the following.
\begin{itemize}
\item The challenger generates $\alpha\la[\numa]$. For every ${\alpha^\prime}\in[\numa]$, the challenger applies $\projimp(\mat{P}_{{\alpha^\prime},D})$ to $\rho[\qreg{R}_{\alpha^\prime}]$ and obtains $p_{\alpha^\prime}$.
The challenger outputs $1$ with probability $p_\alpha$.
\end{itemize}
This equivalence follows from the definition of $\projimp$.
Using this version of $\expc{\SDE,\qA}{om}{anti}{piracy}(\secp,\numa)$, we prove that $\advc{\SDE,\qA}{om}{anti}{piracy}(\secp,\numa)\le\frac{1}{2}+\frac{1}{2\numa}+\gamma$ for any inverse polynomial $\gamma$.
Since $\SDE$ is strong anti-piracy secure, except for some single index $i^*$, $p_i$ computed by the challenger is smaller than $\frac{1}{2}+\gamma$, with overwhelming probability.
Thus, we have
\begin{align}
\advc{\SDE,\qA}{om}{anti}{piracy}(\secp,\numa)&\le\frac{1}{\numa}\cdot 1+\frac{\numa-1}{\numa}\cdot(\frac{1}{2}+\gamma) + \negl(\secp)\\
&\le\frac{1}{2}+\frac{1}{2\numa}+\gamma.
\end{align}
This completes the proof.
\end{proof}

\subsection{From Secret-Key SDE to SKUE: One-out-of-Many Setting}\label{sec:SDSKE_to_OTindSKUE}

We present a transformation from a secret key SDE scheme that satisfies~\cref{def:om_anti_piracy} into a SKUE scheme that satisfies~\cref{def:ek_os_ot_unclone_ind}. Georgiou and Zhandry developed this transformation~\cite{EPRINT:GeoZha20}.
Note that they do not consider one-out-of-many security for secret key SDE and SKUE.
We show that their transformation works even in the one-out-of-many setting.

Let $\SDE=(\SDE.\qKG,\SDE.\Enc,\SDE.\qDec)$ be a secret key SDE scheme. We also let $\ell$ be the length of ciphertexts of $\SDE$.
We construct a SKUE scheme $\UE=(\UE.\KG,\UE.\qEnc,\UE.\qDec)$ as follows.

\begin{description}
 \item[$\UE.\KG(1^\secp)$:] $ $
 \begin{itemize}
\item Output $\uk \chosen \zo{\ell}$.
 \end{itemize}
 \item[$\UE.\qEnc(\uk,\msg)$:] $ $
 \begin{itemize}
 \item Generate $(\sde.\ek,\sde.\qdk)\gets \SDE.\qKG(1^\secp)$.
 \item Compute $\sde.\ct \gets \SDE.\Enc(\sde.\ek,\msg)$.
 \item Output $\ue.\qct \seteq (\sde.\ct \xor \uk, \sde.\qdk)$.
 \end{itemize}
\item[$\UE.\qDec(\uk,\ue.\qct)$:] $ $
\begin{itemize}
\item Parse $(\ct_1^\prime, \sde.\qdk) =\ue.\qct$.
\item Compute $\sde.\ct^\prime \seteq \ct_1^\prime \xor \uk$.
\item Output $\msg^\prime \gets \SDE.\qDec(\sde.\qdk,\sde.\ct^\prime)$.
\end{itemize}
\end{description}

\begin{theorem}\label{thm:om_santi_piracy_implies_om_ind-ue}
If $\SDE$ is one-out-of-many one-time anti-piracy secure, $\UE$ is one-out-of-many one-time unclonable-indistinguishable secure.
\end{theorem}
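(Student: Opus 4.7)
The plan is to prove the contrapositive by a direct reduction: given any QPT adversary $\qA = (\qA_0, \qA_1, \ldots, \qA_\numa)$ that wins $\expd{\UE,\qA}{om}{ot}{clone}{ind}(\secp, \numa)$ with probability $\frac{1}{2} + \frac{1}{2\numa} + \delta$ for some non-negligible $\delta$, I will build a QPT adversary $\qB = (\qB_0, \qB_1, \ldots, \qB_\numa)$ that wins $\expc{\SDE,\qB}{om}{otanti}{piracy}(\secp, \numa)$ with the same probability, contradicting the one-out-of-many one-time anti-piracy of $\SDE$. The key observation is that XORing with a uniform $\uk$ acts as a one-time pad, so the first component of the UE ciphertext is uniformly random and independent of $\sde.\ct$. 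This lets the reduction commit to the first-component randomness in the first stage even though the SDE challenge ciphertext is produced only in the last stage of the SDE game.

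The reduction works as follows. Upon receiving $\sde.\qdk$ from the SDE challenger, $\qB_0$ invokes $\qA_0$ to obtain $(\msg_0, \msg_1)$, forwards this pair to the SDE challenger, samples $r \chosen \zo{\ell}$, and feeds $(r, \sde.\qdk)$ to $\qA_0$ in place of a genuine UE ciphertext. After $\qA_0$ produces its $\numa$-partite state $\qstateq$, $\qB_0$ sends the register $\qstateq[\qreg{R}_i]$ together with the classical string $r$ to $\qB_i$ for every $i \in [\numa]$. Each $\qB_i$ then outputs a pirate decryptor $\pirateD_i$ which, on input an SDE ciphertext $\sde.\ct$, computes the key $\uk \seteq r \oplus \sde.\ct$, simulates $\qA_i$ on its stored register with second-stage input $\uk$, and returns $\qA_i$'s output bit.

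The main step in the analysis is to verify that $\qA$'s joint view under this simulation is identically distributed to its view in the real UE game. In the real game, the first ciphertext component is $\sde.\ct \oplus \uk$ with $\uk \chosen \zo{\ell}$, while the second-stage key is $\uk$; these two values are jointly distributed so that the first coordinate is uniform and independent of $\sde.\ct$, and the second coordinate equals the first coordinate XOR $\sde.\ct$. In the simulation, the first component is a uniform $r$ independent of $\sde.\ct$ (which is only freshly sampled later by the SDE challenger for $\msg_\coin$), and the key handed to $\qA_\alpha$ is $r \oplus \sde.\ct$; this is exactly the same joint distribution. Hence the simulation is perfect, and the probability that the SDE challenger outputs $1$ on the challenge bit $\coin$ equals the probability that the UE challenger outputs $1$, giving $\qB$ advantage $\frac{1}{2} + \frac{1}{2\numa} + \delta$.

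No serious obstacle is expected; the argument is a clean one-time-pad reduction. The only point requiring care is the timing: because $\uk$ in the real UE game is revealed only after $\alpha$ is chosen, and the SDE challenge ciphertext is likewise generated only after all pirate decryptors are fixed, the classical randomness $r$ must be hard-coded into each pirate decryptor $\pirateD_i$ (as part of its unitary description) so that the correct $\uk$ can be reconstructed from the SDE ciphertext at decryption time.
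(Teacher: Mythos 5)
Your reduction is correct and is essentially the same as the paper's: both feed $\qA_0$ a simulated UE ciphertext $(r, \sde.\qdk)$ with $r$ uniform, and both have each pirate decryptor $\pirateD_i$ reconstruct the key as $r \oplus \sde.\ct$ upon receiving the SDE challenge ciphertext, with the one-time-pad argument establishing a perfect simulation. The only cosmetic difference is that the paper stores $r$ in the pirate decryptor's state register while you hard-code it into the unitary; both are valid realizations of the same idea.
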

\begin{proof}
Let $\numa$ be any polynomial of $\secp$.
We construct an adversary $\qB =(\qB_0,\qB_1,\cdots,\qB_\numa)$ for $\SDE$ by using the adversary $\qA=(\qA_0,\qA_1,\cdots,\qA_\numa)$ for $\UE$.
\begin{enumerate}
\item $\qB_0$ is given $\sde.\qdk$ from its challenger.
\item $\qB_0$ runs $\qA_0$ and receives $(\msg_0,\msg_1)$, and passes $(\msg_0,\msg_1)$ to its challenger.
\item $\qB_0$ generates $\uk \chosen \zo{\ell}$, sets $\ue.\qct \seteq (\uk,\sde.\qdk)$, and passes $\ue.\qct$ to $\qA_0$. Then, $\qA_0$ create a quantum state $\qstateq_\qA$ over $\numa$ registers $\qreg{C}_1,\cdots,\qreg{C}_\numa$. 
$\qB_0$ receives them.
\item $\qB_0$ creates a quantum state $\qstateq_\qB$ over $\numa$ registers $\qreg{R}_1,\cdots,\qreg{R}_\numa$ such that $\qstateq_\qB[\qreg{R}_i] \seteq (\uk,\qstateq_\qA[\qreg{C}_i])$ for every $i\in[\numa]$, then $\qB_0$ sends $\qstateq_\qB[\qreg{R}_i]$ to $\qB_i$ for every $i\in[\numa]$.
\item For every $i\in[\numa]$, $\qB_i$ outputs $\pirateD_i=(\qstateq_\qB[\qreg{R}_1],\mat{U}_i)$, where $\mat{U}_i$ is a unitary that takes $\sde.\ct$ and $\qstateq_\qB[\qreg{R}_1]$ as inputs, and outputs $\coin^\prime \gets \qA_i (\qstateq_\qA[\qreg{C}_i],\uk \xor \sde.\ct)$.
\end{enumerate}
If the challenger of one-out-of-many security of SDE chooses $\alpha \chosen [\numa]$ and $\coin \chosen \zo{}$, and generates $\sde.\ct \gets \SDE.\Enc(\sde.\ek,\msg_\coin)$, then the challenger runs $\pirateD_\alpha$ on input $\sde.\ct$ and obtains the output $\coin^\prime \gets \qA_\alpha (\qstateq_\qA[\qreg{C}_\alpha],\uk \xor \sde.\ct)$.
If $\sde.\ct = \SDE.\Enc(\sde.\ek,\msg_\coin)$, then $\ue.\qct = (\uk,\sde.\qdk)= \UE.\qEnc(\sde.\ct\xor \uk,\msg_\coin)$. $\qB$ correctly simulates the one-out-of-many security game of SKUE for $\qA$ since $\uk$ is a uniformly random string.
Therefore, the probability that $\qA$ succeeds in breaking $\UE$ is bounded by the probability that $\qB$ succeeds in breaking $\SDE$.
This completes the proof.
\end{proof}


\section{One-Time Secret Key SDE from LWE}\label{sec:ot_SDE_HE}

We construct secret key SDE satisfying one-time strong anti-piracy based on the LWE assumption in this section.

\subsection{Tools}
\paragraph{Ciphertext-policy functional encryption.} We introduce ciphertext-policy functional encryption (CPFE).
Since we consider single-key setting by default, we use a simplified syntax where the setup algorithm takes as input $x$ and outputs a master public key and a functional decryption key for $x$.
There is no key generation algorithm.

\begin{definition}[Single-Key Ciphertext-Policy Functional Encryption]\label{def:cpfe}
A single-key CPFE scheme for the circuit space $\cC$ and the input space $\cX$ is a tuple of algorithms $(\Setup, \Enc, \Dec)$.
\begin{itemize}
\item The setup algorithm $\Setup$ takes as input a security parameter $1^\lambda$ and an input $x \in \cX$, and outputs a master public key $\MPK$ and functional decryption key $\sk_x$.

\item The encryption algorithm $\Enc$ takes as input the master public key $\MPK$ and $C\in\cC$, and outputs a ciphertext $\ct$.

\item The decryption algorithm $\Dec$ takes as input a functional decryption key $\sk_x$ and a ciphertext $\ct$, and outputs $y$.
\end{itemize}

\begin{description}
\item[Decryption Correctness:] We require $\Dec(\sk_x, \Enc(\MPK, C)) \allowbreak = C(x)$ for every $C\in\cC$, $x\in\cX$, and $\left(\MPK,\sk_x \right) \la \Setup(1^\lambda,x)$.
\end{description}

\end{definition}

\begin{definition}[$1$-Bounded Security]\label{def:CPFE_security}
Let $\CPFE$ be a single-key CPFE scheme.  
We define the game $\expt{\CPFE,\qA}{1\textrm{-}bounded}(\secp,\coin)$ as follows.

\begin{enumerate}
\item $\qA$ sends $x\in\cX$ to the challenger.
\item The challenger generates $(\MPK,\sk_x) \la \Setup(1^\lambda,x)$ and sends $(\MPK,\sk_x)$ to $\qA$.

\item $\qA$ outputs $(C_0,C_1)$ such that $C_0(x)=C_1(x)$. The challenger generates $\ct\la\Enc(\MPK,C_\coin)$, and sends $\ct$ to $\qA$.

\item $\qA$ outputs $\coin' \in \bin$.
\end{enumerate}
We say that $\CPFE$ is $1$-bounded secure if for every QPT $\qA$, we have
\begin{align}
\adva{\CPFE,\qA}{1\textrm{-}bounded}(\secp)=
\abs{\Pr[
\expt{\CPFE,\qA}{1\textrm{-}bounded}(\secp,0)=1
] -\Pr[
\expt{\CPFE,\qA}{1\textrm{-}bounded}(\secp,1)=1
]} =\negl(\secp).
\end{align}

\end{definition}

\begin{definition}[Succinct Key]\label{def:succinct_key}
We say that a single-key CPFE scheme satisfies succinct key property if there exist two algorithms $\HKG$ and $\Hash$ satisfying the following conditions.
\begin{itemize}
\item $\Setup(1^\secp,x)$ runs $\hk\la\HKG(1^\secp,1^\abs{x})$, compute $h\la\Hash(\hk,x)$, and outputs $\MPK \seteq (\hk,h)$ and $\sk_x\seteq x$. For the setup of a CPFE scheme with succinct key property, we omit to write $\sk_x \seteq x$ from the output of $\Setup$ and we simply write $\MPK\la\Setup(1^\secp,x)$.
\item The length of $h$ output by $\Hash$ is $\secp$ regardless of the length of the input $x$.
\end{itemize}
\end{definition}

\begin{theorem}\label{thm:single_key_CPFE_succinct_key}
If the LWE or exponentially-hard LPN assumption holds, there exists single-key CPFE with succinct key for $\Ppoly$.
\end{theorem}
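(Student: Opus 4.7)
The plan is to instantiate the Sahai--Seyalioglu single-key FE construction \cite{CCS:SahSey10} using hash encryption (HE) \cite{PKC:DGHM18} in place of plain public-key encryption, so that the public key can be ``compressed'' into a short hash of $x$ and $x$ itself can serve as the functional decryption key. Recall that a hash encryption scheme consists of $(\HKG,\Hash,\HEnc,\HDec)$ with the property that $\Hash(\hk,\cdot)$ compresses its input to a string of length $\secp$; one can encrypt a message under the ``public key'' $(\hk,h,i,b)$ and any $x$ with $\Hash(\hk,x)=h$ and $x[i]=b$ can decrypt via $\HDec(x,\ct)$. Semantic security guarantees that for every $x$ with $x[i]\neq b$, ciphertexts under $(\hk,\Hash(\hk,x),i,b)$ remain indistinguishable from encryptions of $0$ even given $x$. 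HE is known from LWE \cite{PKC:DGHM18} and from exponentially-hard LPN (via the same template instantiated with laconic OT from LPN).

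The construction is as follows. Let $(\Garble,\GCEval,\Sim)$ be a selectively secure garbling scheme for $\Ppoly$ (which exists from one-way functions). Define
\begin{itemize}
\item $\Setup(1^\secp,x)$: run $\hk\la\HKG(1^\secp,1^{|x|})$, set $h\la\Hash(\hk,x)$, and output $\MPK\seteq(\hk,h)$ together with $\sk_x\seteq x$.
\item $\Enc(\MPK,C)$: garble $C$ as $(\tlC,\{\lbl_{i,b}\}_{i\in[|x|],b\in\bit})\la\Garble(1^\secp,C)$, then for every $(i,b)$ compute $\ct_{i,b}\la\HEnc(\hk,(h,i,b),\lbl_{i,b})$, and output $\ct\seteq(\tlC,\{\ct_{i,b}\}_{i,b})$.
\item $\Dec(x,\ct)$: for each $i$, recover $\lbl_{i,x[i]}\la\HDec(x,\ct_{i,x[i]})$, and output $\GCEval(\tlC,\{\lbl_{i,x[i]}\}_i)$.
\end{itemize}
Correctness follows immediately from HE decryption correctness (since $\Hash(\hk,x)=h$ and $x[i]=x[i]$) and from garbled-circuit correctness. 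The succinct-key property is by construction: $\MPK=(\hk,h)$ with $|h|=\secp$ independent of $|x|$, and $\sk_x=x$.

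For $1$-bounded security, I would use a standard two-step hybrid argument. In the first hop, for every $i\in[|x|]$ I replace the ``inactive'' ciphertext $\ct_{i,1-x[i]}=\HEnc(\hk,(h,i,1-x[i]),\lbl_{i,1-x[i]})$ by $\HEnc(\hk,(h,i,1-x[i]),0)$; these hops are indistinguishable by HE semantic security, since the adversary's key $x$ satisfies $x[i]\neq 1-x[i]$ (this is the reason we must let the adversary see $x$ together with $\hk$ and $h$, which is exactly the setting in which HE security is stated). After this step, the only garbled-circuit labels appearing in the adversary's view are $\{\lbl_{i,x[i]}\}_i$. In the second hop, I invoke selective garbled-circuit security to replace $(\tlC,\{\lbl_{i,x[i]}\}_i)$ by $\Sim(1^\secp,1^{|C|},C(x))$. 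Because the adversary's challenge circuits satisfy $C_0(x)=C_1(x)$, the simulated transcripts in the $\coin=0$ and $\coin=1$ worlds are identically distributed, which yields indistinguishability.

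I do not expect a genuine obstacle here; the construction is a direct adaptation of Sahai--Seyalioglu with HE replacing PKE. The one subtle point worth checking is that HE security must be stated in a form strong enough to handle the adversary seeing $\sk_x=x$ together with $\hk$ and $h=\Hash(\hk,x)$ (and not merely a generic one-way-style hiding); this is precisely the ``security given a preimage'' formulation established in \cite{PKC:DGHM18}, so the reduction goes through verbatim. Combining HE from LWE (or exponentially-hard LPN) with garbled circuits from one-way functions yields the theorem.
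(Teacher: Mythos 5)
Your proposal matches the paper's own proof essentially verbatim: instantiate Sahai--Seyalioglu with hash encryption so that $\MPK=(\hk,\Hash(\hk,x))$ and $\sk_x=x$, argue the succinct-key property is immediate, and prove $1$-bounded security via a hybrid that first replaces the inactive ciphertexts $\hct_{i,1\oplus x[i]}$ with encryptions of zeros using HE security (which, as you note, is stated with the adversary knowing the preimage $x$), and then invokes garbled-circuit simulation security together with $C_0(x)=C_1(x)$. This is exactly the construction and hybrid argument the paper uses.
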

See~\cref{sec:succinct_CPFE} for the proof of~\cref{thm:single_key_CPFE_succinct_key}.

\paragraph{Monogamy of entanglement.}
We review the monogamy of entanglement property of of BB84 states~\cite{NJP:TFKW13} and its variant.

\begin{theorem}[Monogamy Property of BB84 States~\cite{NJP:TFKW13}]\label{thm:MoE_BB84}
Consider the following game between a challenger and an adversary $\qA=(\qA_0,\qA_1,\qA_2)$.
\begin{enumerate}
\item The challenger picks a uniformly random strings $x\in\bit^n$ and $\theta\in\bit^n$. It sends $\ket{x^\theta} \seteq H^{\theta[1]}\ket{x[1]}\tensor\cdots \tensor H^{\theta[n]}\ket{x[n]}$ to $\qA_0$.
\item $\qA_0$ creates a bipartite state $\qstateq$ over registers $\qreg{R}_1$ and $\qreg{R}_2$. Then, $\qA_0$ sends register $\qreg{R}_1$ to $\qA_1$ and register $\qreg{R}_2$ to $\qA_2$.
\item $\theta$ is then sent to both $\qA_1$ and $\qA_2$.
$\qA_1$ and $\qA_2$ return respectively $x_1^\prime$ and $x_2^\prime$.
\end{enumerate}
Let $\MoEBB(\qA,\secp)$ be a random variable which takes the value $1$ if $x_1^\prime=x_2^\prime=x$, and takes the value $0$ otherwise. Then, there exists an exponential function $\expo$ such that for any adversary $\qA=(\qA_0,\qA_1,\qA_2)$, it holds that
\begin{align}
\Pr[\MoEBB(\qA,\secp)=1]\leq 1/\exp(n).
\end{align}
\end{theorem}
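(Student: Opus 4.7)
The plan is to establish the quantitative bound $\Pr[\MoEBB(\qA,\secp)=1]\leq (\tfrac{1}{2}+\tfrac{1}{2\sqrt{2}})^n$ of Tomamichel, Fehr, Kaniewski, and Wehner, which immediately gives the claimed $1/\expo(n)$ decay for the exponential function $\expo(n)=\bigl(\tfrac{2\sqrt{2}}{\sqrt{2}+1}\bigr)^n$. The overall strategy is three steps: (i) put the adversary in a canonical form, (ii) rewrite the winning probability as the expectation value of a single Hermitian operator, and (iii) bound the operator norm of that operator by exploiting the complementarity between the computational and Hadamard bases.

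First I would canonicalize the strategy. By Stinespring dilation, $\qA_0$ may be taken to apply an isometry $V$ from the $n$-qubit challenge register into $\mathcal{H}_{\qreg{R}_1}\otimes\mathcal{H}_{\qreg{R}_2}$, and by Naimark's theorem the final measurements of $\qA_1$ and $\qA_2$ may be assumed to be projective, say $\{A^{\theta}_{x}\}_{x}$ on $\qreg{R}_1$ and $\{B^{\theta}_{x}\}_{x}$ on $\qreg{R}_2$. The winning probability then becomes
\[
\Pr[\MoEBB(\qA,\secp)=1] \;=\; \frac{1}{2^{2n}}\sum_{x,\theta}\bra{x^\theta}\,V^\dagger\bigl(A^{\theta}_{x}\otimes B^{\theta}_{x}\bigr)V\,\ket{x^\theta}.
\]
By introducing an auxiliary register that coherently stores $\ket{x,\theta}$ with uniform amplitudes, this can be rewritten as $\bra{\Psi}M\ket{\Psi}$ for a normalized vector $\ket{\Psi}$ derived from $V$ and the Hermitian operator
\[
M \;=\; \frac{1}{2^n}\sum_{\theta}\sum_{x}\,A^{\theta}_{x}\otimes B^{\theta}_{x}\otimes \ket{x^\theta}\!\bra{x^\theta},
\]
so it suffices to show $\|M\|\leq (\tfrac{1}{2}+\tfrac{1}{2\sqrt{2}})^n$.

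Next I would prove that norm bound by induction on $n$, exploiting the product structure $\ket{x^\theta}=\bigotimes_{i=1}^{n}\ket{x_i^{\theta_i}}$. The induction step isolates the first qubit: one factorizes $M_n$ into a partial operator acting on the first-qubit data tensored with $M_{n-1}$ on the rest, and uses submultiplicativity of the operator norm together with the base-case bound. The base case $n=1$ is a finite-dimensional SDP whose value is computed by writing $M_1$ in a $2\times 2$ block form indexed by the two bases and diagonalizing; the inner-product identity $|\braket{0}{+}|=\tfrac{1}{\sqrt{2}}$ (and, equivalently, $\|\,\ket{+}\!\bra{+}-\ket{0}\!\bra{0}\,\|=\tfrac{1}{\sqrt{2}}$) enters to control the off-diagonal cross-basis contributions, yielding the tight constant $\tfrac{1}{2}+\tfrac{1}{2\sqrt{2}}$.

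The main obstacle is precisely this single-qubit base case, because the only immediately available pointwise bound is $A^{\theta}_{x}\otimes B^{\theta}_{x}\leq I$, which merely gives $\|M_1\|\leq 1$ and no exponential decay whatsoever. Extracting a strictly-better-than-one constant forces one to use complementarity: the same measurement outcome cannot be simultaneously well-correlated with $x$ in both the computational and Hadamard bases on each side. The cleanest way I would make this quantitative is to view $M_1$ as a $4\times 4$ Gram-type matrix whose off-diagonal cross-basis entries are controlled by a Cauchy--Schwarz inequality involving $|\braket{x^0}{x'^{1}}|=\tfrac{1}{\sqrt{2}}$, and then maximize the resulting quadratic form analytically. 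Once the single-qubit constant is in hand, the inductive step is mechanical, and plugging $\|M\|\leq (\tfrac{1}{2}+\tfrac{1}{2\sqrt{2}})^n$ into $\Pr[\MoEBB(\qA,\secp)=1]\leq \|M\|$ completes the proof.
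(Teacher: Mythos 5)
The theorem is cited from Tomamichel, Fehr, Kaniewski, and Wehner~\cite{NJP:TFKW13} and not proved in this paper, so your task was to reconstruct the TFKW argument; your first two steps (Stinespring/Naimark canonicalization, then reducing the winning probability to the operator norm of the averaged ``compatibility'' operator $M=\tfrac{1}{2^n}\sum_\theta\sum_x A^\theta_x\otimes B^\theta_x\otimes\ket{x^\theta}\!\bra{x^\theta}$) are correct and match TFKW's setup. The target constant $(\tfrac12+\tfrac1{2\sqrt2})^n$ is also the right one.

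The problem is step (iii). Your plan is to prove $\|M_n\|\le(\tfrac12+\tfrac1{2\sqrt2})^n$ by induction on $n$, ``factorizing $M_n$ into a partial operator acting on the first-qubit data tensored with $M_{n-1}$ on the rest.'' This does not work, because the only register that tensor-factors over the $n$ BB84 positions is the challenge register carrying $\ket{x^\theta}$; the adversaries' measurement operators $A^\theta_x$ and $B^\theta_x$ act on the arbitrary Hilbert spaces $\cH_{\qreg{R}_1}$ and $\cH_{\qreg{R}_2}$, which have no canonical per-qubit tensor decomposition, and an optimal strategy will in general entangle its responses across positions. Thus there is no decomposition $M_n = M_1\otimes M_{n-1}$ (nor any product $M_n = M_1 M_{n-1}$ to which submultiplicativity would apply), and the inductive step as written has no content. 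This is not a presentational gap: a naive per-qubit argument would prove the much stronger claim that the optimal strategy is a product strategy, which is false in general for monogamy games.

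The machinery that actually closes the gap in TFKW13 is their \emph{orthogonal-permutation norm lemma}: for projectors $\{\Pi^\theta\}_{\theta\in\bit^n}$ and mutually orthogonal permutations $\{\pi_k\}$ of $\bit^n$, one has $\|\sum_\theta\Pi^\theta\|\le\sum_k\max_\theta\|\Pi^\theta\Pi^{\pi_k(\theta)}\|$. Taking $\pi_k(\theta)=\theta\oplus k$, the key technical estimate is the pairwise overlap bound $\|\Pi^\theta\Pi^{\theta'}\|\le 2^{-d(\theta,\theta')/2}$, where $d$ is Hamming distance; this is the place where the single-qubit complementarity $\abs{\braket{x^0}{y^1}}=\tfrac1{\sqrt2}$ enters, and where the fact that \emph{both} $\qA_1$ and $\qA_2$ must be simultaneously correct is crucial. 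Summing $\tfrac{1}{2^n}\sum_{d=0}^n\binom{n}{d}2^{-d/2}=(\tfrac12+\tfrac1{2\sqrt2})^n$ then gives the bound. Your ``complementarity gives a $\tfrac1{\sqrt2}$ overlap'' intuition is exactly the engine driving the pairwise-overlap lemma, but without the permutation lemma there is no sound way to aggregate $2^n$ non-commuting projectors into a product-form bound. To repair the proposal you should replace the induction with this lemma and prove the pairwise overlap estimate; as it stands the argument is not valid.
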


We introduce a variant of the monogamy property above where the adversary can select a leakage function $\Leak$ and obtain $\Leak(x)$.
\begin{theorem}[Monogamy Property of BB84 States with Leakage]\label{thm:MoE_BB84_leakage}
Consider the following game between a challenger and an adversary $\qA=(\qA_0,\qA_1,\qA_2)$.
\begin{enumerate}
\item $\qA$ sends a function $\Leak$ whose output length is $\ell$ to the challenger.
\item The challenger picks a uniformly random strings $x\in\bit^n$ and $\theta\in\bit^n$. It sends $\ket{x^\theta} \seteq H^{\theta[1]}\ket{x[1]}\tensor\cdots \tensor H^{\theta[n]}\ket{x[n]}$ and $\Leak(x)$ to $\qA_0$.
\item $\qA_0$ creates a bipartite state $\qstateq$ over registers $\qreg{R}_1$ and $\qreg{R}_2$. Then, $\qA_0$ sends register $\qreg{R}_1$ to $\qA_1$ and register $\qreg{R}_2$ to $\qA_2$.
\item $\theta$ is then sent to both $\qA_1$ and $\qA_2$.
$\qA_1$ and $\qA_2$ return respectively $x_1^\prime$ and $x_2^\prime$.
\end{enumerate}
Let $\MoEBBLeak(\qA,\secp)$ be a random variable which takes the value $1$ if $x_1^\prime=x_2^\prime=x$, and takes the value $0$ otherwise. Then, there exists an exponential function $\expo$ such that for any adversary $\qA=(\qA_0,\qA_1,\qA_2)$, it holds that
\begin{align}
\Pr[\MoEBBLeak(\qA,\secp)=1]\leq 2^\ell/\exp(n).
\end{align}
Especially, if $\ell$ is independent of $n$, the right hand side is negligible in $\secp$ by setting $n$ appropriately.
\end{theorem}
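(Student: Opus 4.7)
The plan is to reduce the leakage-monogamy game $\MoEBBLeak$ to the leakage-free monogamy game $\MoEBB$ of \cref{thm:MoE_BB84} by guessing the leakage. Given any adversary $\qA=(\qA_0,\qA_1,\qA_2)$ for $\MoEBBLeak$, together with its choice of leakage function $\Leak : \bit^n \to \bit^\ell$, I will build an adversary $\qA'=(\qA_0',\qA_1',\qA_2')$ for $\MoEBB$ as follows. Upon receiving $\ket{x^\theta}$, $\qA_0'$ samples a uniformly random string $y' \chosen \bit^\ell$, and runs $\qA_0$ on input $(\ket{x^\theta}, y')$ to produce a bipartite state $\qstateq$, which it hands off as in $\MoEBB$. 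The later stages $\qA_1',\qA_2'$ forward $\theta$ and run $\qA_1,\qA_2$ unchanged. Note $\qA'$ is QPT whenever $\qA$ and $\Leak$ are.

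The main step is a probability calculation showing the reduction loses a factor of exactly $2^\ell$. For each fixing of $(x,\theta,y)$, let
\[
r(x,\theta,y) \seteq \Pr[\qA_1(\qstateq[\qreg{R}_1],\theta)=\qA_2(\qstateq[\qreg{R}_2],\theta)=x \mid \qstateq \la \qA_0(\ket{x^\theta},y)].
\]
Then by construction
\[
\Pr[\MoEBBLeak(\qA,\secp)=1] = \sum_{x,\theta} 2^{-2n}\, r(x,\theta,\Leak(x)),
\]
\[
\Pr[\MoEBB(\qA',\secp)=1] = \sum_{x,\theta,y} 2^{-2n-\ell}\, r(x,\theta,y).
\]
Since $r(x,\theta,y)\ge 0$, we have $\sum_{y \in \bit^\ell} r(x,\theta,y) \ge r(x,\theta,\Leak(x))$ for every $(x,\theta)$, which gives
\[
\Pr[\MoEBB(\qA',\secp)=1] \ge 2^{-\ell}\cdot\Pr[\MoEBBLeak(\qA,\secp)=1].
\]
Applying \cref{thm:MoE_BB84} to $\qA'$ yields $\Pr[\MoEBBLeak(\qA,\secp)=1] \le 2^\ell/\exp(n)$, as claimed. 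The second assertion of the theorem (negligibility when $\ell$ is independent of $n$) then follows by choosing $n=\omega(\ell+\log\secp)$, for instance $n=\secp\cdot\ell$.

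The step that needs a bit of care is the bookkeeping that the simulated leakage $y'$ in the reduction is independent of $x$, whereas in the real leakage game $\Leak(x)$ is correlated with $x$; the inequality above handles this cleanly without needing to condition on the (possibly non-uniform) distribution of $\Leak(x)$. Everything else is a direct appeal to the known monogamy bound for BB84 states, so I do not anticipate any genuine obstacle.
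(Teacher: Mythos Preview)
Your proposal is correct and follows exactly the approach the paper indicates: reduce to \cref{thm:MoE_BB84} by guessing the $\ell$-bit leakage uniformly at random, incurring a multiplicative loss of $2^\ell$. The paper states this in a single sentence without spelling out the probability calculation; your write-up simply makes that calculation explicit, and the inequality $\sum_y r(x,\theta,y)\ge r(x,\theta,\Leak(x))$ is the right way to handle the correlation issue you flagged.
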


We can reduce \cref{thm:MoE_BB84_leakage} to \cref{thm:MoE_BB84} by guessing $\Leak(x)$ with probability $1/2^\ell$.

\subsection{Construction}\label{sec:construction_sk_SDE}

We use a CPFE scheme with succinct key property $\CPFE=(\Setup,\Enc,\Dec)$ to construct a secret key SDE scheme $\SDE=(\qSDKG,\SDEnc,\qSDDec)$.
The description of $\SDE$ is as follows. The plaintext space of $\SDE$ is $\bit^\msglen$.
\begin{description}

 \item[$\qSDKG(1^\secp)$:] $ $
 \begin{itemize}
\item Generate $x,\theta\la\bit^n$.
\item Generate $\ket{x^\theta}=H^{\theta[1]}\ket{x[1]}\tensor\ldots H^{\theta[n]}\ket{x[n]}$.
\item Generate $\MPK\la\Setup(1^\secp,x)$.
\item Output $\ek\seteq(\theta,\MPK)$ and $\qdk\seteq\ket{x^\theta}$.
 \end{itemize}
 \item[$\SDEnc(\ek,m)$:] $ $
 \begin{itemize}
 \item Parse $\ek = (\theta,\MPK)$.
\item Let $C[m]$ be a constant circuit that outputs $m$ on any input. $C$ is padded so that it has the same size as $C^*$ appeared in the security proof.
\item Compute $\ct\la\CPFE.\Enc(\MPK,C[m])$.
 \item Output $\sdct \seteq(\theta,\ct)$.
 \end{itemize}
\item[$\qSDDec(\qdk,\sdct)$:] $ $
\begin{itemize}
\item Parse $\qdk =\ket{x^\theta}$ and $\sdct = (\theta,\ct)$.
\item Compute $x$ from $\ket{x^\theta}$ and $\theta$.
\item Output $m\la\Dec(x,\ct)$.
\end{itemize}
\end{description}

\begin{theorem}\label{thm:SDE_from_MOE}
If $\CPFE$ is a CPFE scheme that satisfies succinct key property and $1$-bounded security,  $\SDE$ is a one-time strong anti-piracy secure single-decryptor SKE.
\end{theorem}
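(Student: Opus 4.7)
The plan is to reduce one-time strong anti-piracy of $\SDE$ to the monogamy of entanglement property of BB84 states with leakage (\cref{thm:MoE_BB84_leakage}). The succinct-key property of $\CPFE$ is crucial here: it lets the simulator build $\MPK=(\hk,\Hash(\hk,x))$ from only an $\secp$-bit leakage on $x$. Since this leakage length is independent of the BB84 length $n$, choosing $n=\secp^2$ makes the monogamy bound $2^\secp/\exp(n)$ negligible, so any polynomial-time attacker on $\SDE$ must therefore achieve only negligible advantage.

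Given an adversary $\qA=(\qA_0,\qA_1,\qA_2)$ that wins the strong anti-piracy game with inverse-polynomial probability at some inverse-polynomial threshold $\gamma$, the reduction builds a monogamy attacker $\qB=(\qB_0,\qB_1,\qB_2)$ as follows. $\qB_0$ samples $\hk\gets\HKG(1^\secp,1^n)$, declares the leakage function $\Leak:=\Hash(\hk,\cdot)$, receives $\ket{x^\theta}$ and $h=\Hash(\hk,x)$, sets $\MPK:=(\hk,h)$ and $\qdk:=\ket{x^\theta}$, invokes $\qA_0(\qdk)$, and obtains the bipartite state together with $(m_0,m_1)$. It routes each half---augmented with the classical data $(\MPK,m_0,m_1)$---to $\qB_\alpha$ for $\alpha\in\{1,2\}$. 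Upon receiving $\theta$, $\qB_\alpha$ simulates $\qA_\alpha$ to produce a decryptor $\qD_\alpha=(\rho_\alpha,\mat{U}_\alpha)$ and then runs a bit-wise extraction using $\theta$, $\MPK$, and $(m_0,m_1)$ to recover a candidate $x'_\alpha$, which it outputs.

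The extraction follows the Kitagawa--Nishimaki technique~\cite{EC:KitNis22}. For each $i\in[n]$, let $\tilde{C}[b,m_0,m_1,i]$ denote the circuit mapping $x^{*}\mapsto m_{b\oplus x^{*}[i]}$, padded to match $C[\cdot]$, and let $D_i$ be the distribution that samples $b\chosen\bit$ and outputs $(b,(\theta,\CPFE.\Enc(\MPK,\tilde{C}[b,m_0,m_1,i])))$. Apply $\API_{\mat{P}_\alpha,D_i}^{\epsilon,\delta}$ to $\rho_\alpha$ with $\epsilon,\delta$ polynomially small compared to $\gamma/n$. Since $C[m_b](x)=m_b$ and $\tilde{C}[b,m_0,m_1,i](x)=m_{b\oplus x[i]}$, the $1$-bounded security of $\CPFE$ implies that $D_i$ is computationally indistinguishable from the honest challenge distribution $D$ when $x[i]=0$, and from its $m_0$-$m_1$-swapped variant when $x[i]=1$. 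By \cref{cor:cind_sample_api}, the returned value therefore lies within shift-distance $3\epsilon+\negl$ of $p_\alpha>1/2+\gamma$ if $x[i]=0$ and of $1-p_\alpha<1/2-\gamma$ if $x[i]=1$, so thresholding at $1/2$ recovers $x[i]$.

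The main obstacle is arguing that this sequential extraction preserves the liveness of each decryptor so that all $n$ bit tests succeed on the same state. This is handled by (i) the $(\epsilon,\delta)$-almost projectivity of $\API$, which keeps the measured value stable across successive applications to $\rho_\alpha$; (ii) the reverse-almost-projective property, which connects the first threshold test to the original liveness condition; and (iii) the fact that $\rho_1$ and $\rho_2$ live on disjoint registers, so the two extractions commute. A union bound over the $2n$ bit tests then shows that $\qB$ outputs $(x'_1,x'_2)=(x,x)$ with probability at least the strong anti-piracy advantage of $\qA$ minus $\negl(\secp)$, which contradicts \cref{thm:MoE_BB84_leakage} unless that advantage is itself negligible.
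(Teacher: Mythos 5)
Your proposal matches the paper's proof essentially step for step: the same reduction to the BB84 monogamy-of-entanglement game with an $\secp$-bit leakage $\Hash(\hk,\cdot)$ made affordable by key succinctness, the same Kitagawa--Nishimaki $\API$-based bit extraction against the distributions $D_i$ built from the padded circuits $\tilde{C}[b,m_0,m_1,i]$, and the same pair of indistinguishability lemmas ($D_i\approx D$ when $x[i]=0$, $D_i\approx D^{\mathtt{rev}}$ when $x[i]=1$) underpinned by $1$-bounded CPFE security, \cref{cor:cind_sample_api}, and the (reverse-)almost-projectivity of $\API$. The only detail you gloss over is the precise drift bookkeeping: the paper uses the degrading thresholds $1/2\pm(\epsilon-4(i+1)\epsilon')$ with $\epsilon'=\epsilon/4(n+1)$ rather than a bare cut at $1/2$, but your remark that $\epsilon,\delta$ are chosen polynomially small in $\gamma/n$ shows you have the right parameter regime in mind, so this is a presentation gap rather than a mathematical one.
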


From~\cref{thm:SDE_from_MOE,thm:single_key_CPFE_succinct_key,thm:om_santi_piracy_implies_om_ind-ue,thm:santi-piracy_implies_om_anti-piracy}, we immediately obtain the following corollary.
\begin{corollary}
If the LWE assumption holds, there exists one-out-of-many unclonable-indistinguishable secure unclonable encryption.
\end{corollary}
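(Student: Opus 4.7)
The plan is to derive the corollary by chaining together the theorems already established in the excerpt, with LWE entering only at a single point. First I would invoke \cref{thm:single_key_CPFE_succinct_key} to obtain, from the LWE assumption, a single-key CPFE scheme $\CPFE$ that simultaneously enjoys the succinct-key property (\cref{def:succinct_key}) and $1$-bounded security (\cref{def:CPFE_security}). This is the only place in the proof where a computational assumption is consumed; everything that follows is a generic implication.

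Next, I would plug this $\CPFE$ into the construction of \cref{sec:construction_sk_SDE} and apply \cref{thm:SDE_from_MOE} to conclude that the resulting secret-key single-decryptor encryption scheme $\SDE=(\qSDKG,\SDEnc,\qSDDec)$ is one-time \emph{strong} anti-piracy secure in the sense of \cref{def:otsanti_piracy}. I would then apply \cref{thm:santi-piracy_implies_om_anti-piracy} to upgrade this to one-out-of-many one-time anti-piracy security in the sense of \cref{def:om_anti_piracy}; note that this implication is unconditional and relies only on the fact that, under strong anti-piracy, at most one of the $\numa$ decryptors output by the splitting adversary can be ``live,'' which forces the overall success probability to be bounded by $\tfrac{1}{\numa}\cdot 1+\tfrac{\numa-1}{\numa}\cdot(\tfrac{1}{2}+\gamma)+\negl(\secp)$ for every inverse polynomial $\gamma$.

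Finally, I would feed this one-out-of-many one-time anti-piracy secure $\SDE$ into the generic Georgiou--Zhandry-style transformation given in \cref{sec:SDSKE_to_OTindSKUE}, and invoke \cref{thm:om_santi_piracy_implies_om_ind-ue} to conclude that the resulting scheme $\UE=(\UE.\KG,\UE.\qEnc,\UE.\qDec)$ satisfies one-out-of-many one-time unclonable-indistinguishability in the sense of \cref{def:ek_os_ot_unclone_ind}. Since each step in the chain is a black-box implication with no additional assumptions beyond what was already needed at step one, the composite implication ``LWE $\Rightarrow$ one-out-of-many unclonable-indistinguishable SKUE'' follows immediately.

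Since every ingredient is already proved, there is no real obstacle in this corollary; the only thing to double check is that the form of SKUE produced by the transformation in \cref{sec:SDSKE_to_OTindSKUE} literally matches \cref{def:unclonable_ske}, and that the message spaces line up (the SDE plaintext space $\bit^\msglen$ and the SKUE plaintext space are identified via the one-time pad mask $\uk\in\zo{\ell}$ in the construction). Once this bookkeeping is verified, the corollary statement follows by direct composition of \cref{thm:single_key_CPFE_succinct_key}, \cref{thm:SDE_from_MOE}, \cref{thm:santi-piracy_implies_om_anti-piracy}, and \cref{thm:om_santi_piracy_implies_om_ind-ue}, as indicated in the excerpt.
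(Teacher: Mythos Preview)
Your proposal is correct and follows exactly the same approach as the paper: the paper states that the corollary follows immediately from \cref{thm:single_key_CPFE_succinct_key}, \cref{thm:SDE_from_MOE}, \cref{thm:santi-piracy_implies_om_anti-piracy}, and \cref{thm:om_santi_piracy_implies_om_ind-ue}, which is precisely the chain of implications you spell out.
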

\begin{proof}[Proof of~\cref{thm:SDE_from_MOE}]
Let $\gamma\in[0,1]$ and $\qA=(\qA_0,\qA_1,\qA_2)$ be any QPT adversary.
$\qA_1$ and $\qA_2$ respectively output $\pirateD_1=(\qreg{R}_1,\mat{U}_1)$ and $\pirateD_2=(\qreg{R}_2,\mat{U}_2)$, where $\qreg{R}_1$ and $\qreg{R}_2$ have a possibly entangled quantum state $\rho$, and $\mat{U}_\alpha=(\mat{U}_{\alpha,\ct})_\ct$ for $\alpha\in[2]$.
We define the collection of binary projective measurements $\mat{P}_{\alpha}=(\mat{P}_{\alpha,b,\ct},\mat{I}-\mat{P}_{\alpha,b,\ct})_{b,\ct}$ for every $\alpha\in[2]$, the distribution $D$, and the mixture of $\mat{P}_{\alpha}$ with respect to $D$ $\mat{P}_{\alpha,D}$ for every $\alpha\in[2]$ in the same way as \cref{def:otsanti_piracy}.
Then, we can write
\begin{align}
\advc{\SDE,\qA}{ot}{santi}{piracy}(\secp,\gamma)=\Pr[p_1>\frac{1}{2}+\gamma\land p_2>\frac{1}{2}+\gamma],
\end{align}
where $p_\alpha$ is the result of applying $\projimp(\mat{P}_{\alpha,D})$ to $\qreg{R}_\alpha$.
 
We assume that $\advc{\SDE,\qA}{ot}{santi}{piracy}(\secp,\gamma)=\eta$ for some inverse polynomial $\eta$.
Using $\qA$, we construct $\qB=(\qB_0,\qB_1,\qB_2)$ that attacks the monogamy property of BB84 states with leakage.
Recall that since $\CPFE$ satisfies succinct key property, there exist two algorithms $\HKG$ and $\Hash$ such that $\Setup(1^\secp,x)$ runs $\hk\la\HKG(1^\secp,1^\abs{x})$, compute $h\la\Hash(\hk,x)$, and outputs $\MPK:=(\hk,h)$, where $h$ is a $\secp$-bit string. (See~\cref{def:succinct_key}.)
\begin{enumerate}
\item $\qB_0$ generates $\hk\la\HKG(1^\secp,1^n)$ and sends a function $\Leak(\cdot):=\Hash(\hk,\cdot)$ to the challenger.
\item $\qB_0$ is given $\ket{x^\theta}$ and $\Leak(x)=h$. $\qB_0$ sets $\MPK:=(\hk,h)$.
\item By setting $\qdk\seteq\ket{x^\theta}$, $\qB_0$ simulates $\expc{\SDE,\qA}{ot}{santi}{piracy}(\secp,\gamma)$ for $\qA_0$ and obtains $(m_0,m_1)$ and a quantum state $\qstateq$ over registers $\qreg{R}_1$ and $\qreg{R}_2$. Then, $\qB_0$ sends $(\gamma,\MPK,m_0,m_1,\qreg{R}_1)$ to $\qB_1$ and $(\gamma,\MPK,m_0,m_1,\qreg{R}_2)$ to $\qB_2$.
\item $\theta$ is then sent to both $\qB_1$ and $\qB_2$. $\qB_1$ and $\qB_2$ behave as follows.
\begin{itemize}
\item $\qB_\alpha$ first sets $\ek=(\theta,\MPK)$.
$\qB_\alpha$ simulates $\expc{\SDE,\qA}{ot}{santi}{piracy}(\secp,\gamma)$ for $\qA_\alpha$ and obtains $\qD_\alpha=(\qstateq_\alpha,\mat{U}_\alpha)$.
$\qB_\alpha$ computes $x_\alpha^\prime\la \qExtract(\MPK,m_0^*,m_1^*,\qD_\alpha,\gamma)$ and outputs $x_\alpha^\prime$, where the algorithm $\qExtract$ is described below.
\end{itemize}
\end{enumerate}

\begin{description}
\item[$\qExtract(\MPK,\msg_0^*,\msg_1^*,\qD_\alpha,\epsilon)$:]$ $
\begin{itemize}
\item Let $\epsilon'=\epsilon/4(n+1)$ and $\delta'=2^{-\lambda}$.
\item Parse $(\qstateq_\alpha,\mat{U}_\alpha)\gets\qD_\alpha$.
Let $D_{i}$ be the following distribution for every $i\in[\secp]$.
\begin{description}
\item[$D_{i}$:] Generate $b\la\bit$. Generate $\ct\la\CPFE.\Enc(\MPK,C^*[b,m_0,m_1,i])$, where $C^*[b,m_0,m_1,i]$ is a circuit that takes $x$ as input and outputs $m_{ b\oplus x[i]}$. Output $(b,\ct)$.
\end{description}
\item Compute $\tlp_{\alpha,0} \gets \API_{\cP,D}^{\epsilon' ,\delta'}(\qstateq_\alpha)$. If $\tlp_{\alpha,0}<\frac{1}{2}+\epsilon-4\epsilon'$, return $\bot$. Otherwise, letting $\qstateq_{\alpha,0}$ be the post-measurement state, go to the next step.
\item For all $i \in [\secp]$, do the following.
\begin{enumerate}
\item Compute $\tlp_{\alpha,i} \gets \API_{\cP,D_{i}}^{\epsilon' ,\delta'}(\qstateq_{\alpha,i-1})$. Let $\qstateq_{\alpha,i}$ be the post-measurement state.
\item If $\tlp_{\alpha,i}>\frac{1}{2}+\epsilon-4(i+1)\epsilon'$, set $x_\alpha^\prime[i]=0$. If $\tlp_{\alpha,i}<\frac{1}{2}-\epsilon+4(i+1)\epsilon'$, set $x_\alpha^\prime[i]=1$. Otherwise, exit the loop and output $\bot$.
\end{enumerate}
\item Output $x_\alpha^\prime=x_\alpha^\prime[1] \concat \cdots \concat x_\alpha^\prime[\secp]$.
\end{itemize}
\end{description}

We will estimate $\Pr[\MoEBBLeak(\qA,\secp)=1]$.
We define the events $\BadDec_\alpha$, and $\BadExt_{\alpha,i}$ for every $i\in[\secp]$.
\begin{description}
\item[$\BadDec_\alpha$:]When $\qB_\alpha$ runs $\qExtract(\MPK,m_0^*,m_1^*,\qD_\alpha,\epsilon)$, $\tlp_{\alpha,0}<\frac{1}{2}+\epsilon-4\epsilon'$ holds.
\item[$\BadExt_{\alpha,i}$:]When $\qB_\alpha$ runs $\qExtract(\MPK,m_0^*,m_1^*,\qD_\alpha,\epsilon)$, the following conditions hold.
\begin{itemize}
\item $\tlp_{\alpha,0}\geq\frac{1}{2}+\epsilon-4\epsilon'$ holds.
\item $x_\alpha^\prime[j]=x[j]$ holds for every $j\in[i-1]$.
\item $x_\alpha^\prime[i]\neq x[i]$ holds.
\end{itemize}
\end{description}

From the assumption that $\advc{\SDE,\qA}{ek}{santi}{piracy}(\secp,\gamma)=\eta$, for $\tlp_{\alpha,0}$ computed in $\qExtract$, $\tlp_{\alpha,0}\ge\frac{1}{2}+\epsilon-\epsilon^\prime$ holds with probability $\eta-\negl(\secp)$ for $\alpha\in[2]$, due to the first item of \cref{thm:api_property}.
This means that $\Pr[\BadDec_{1}\lor\BadDec_{2}]=1-\eta+\negl(\secp)$.
Then, we have
\ifnum\llncs=0
\begin{align}
\Pr[\MoEBBLeak(\qA,\secp)=1]&\ge1-\left(\Pr[\BadDec_1\lor\BadDec_2]+\sum_{i\in[\secp]}\Pr[\BadExt_{1,i}]+\sum_{i\in[\secp]}\Pr[\BadExt_{2,i}]\right)\\
&=\eta - \negl(\secp) -\left(\sum_{i\in[\secp]}\Pr[\BadExt_{1,i}]+\sum_{i\in[\secp]}\Pr[\BadExt_{2,i}]\right)
\end{align}
\else
\begin{align}
&\Pr[\MoEBBLeak(\qA,\secp)=1]\\
&\ge1-\left(\Pr[\BadDec_1\lor\BadDec_2]+\sum_{i\in[\secp]}\Pr[\BadExt_{1,i}]+\sum_{i\in[\secp]}\Pr[\BadExt_{2,i}]\right)\\
&=\eta - \negl(\secp) -\left(\sum_{i\in[\secp]}\Pr[\BadExt_{1,i}]+\sum_{i\in[\secp]}\Pr[\BadExt_{2,i}]\right)
\end{align}
\fi

\paragraph{Estimation of $\Pr[\BadExt_{\alpha,i}]$ for every $\alpha\in[2]$ and $i\in[\secp]$.}
We first estimate $\Pr[\BadExt_{\alpha,1}]$.
We first consider the case of $x[1]=0$.
From the first item of the event, we have $\tlp_{\alpha,0}>\frac{1}{2}+\epsilon-4\epsilon'$.
Let $\tlp'_{\alpha,0} \gets \API_{\cP,D}^{\epsilon' ,\delta'}(\qstateq_{\alpha,0})$.
From, the almost-projective property of $\API$, we have
\begin{align}
\Pr[\tlp'_{\alpha,0}>\frac{1}{2}+\epsilon-4\epsilon'-\epsilon']\geq1-\delta'.
\end{align}
\begin{lemma}\label{lem:ind_cpfe_0}
When $x[1]=0$, $D_{1}$ is computationally indistinguishable from $D$.
\end{lemma}
\begin{proof}
The difference between $D_{1}$ and $D$ is that $\ct$ is generated as  $\ct\la\CPFE.\Enc\allowbreak(\MPK,C^*[b,m_0,m_1,1])$ in $D_1$ and it is generated as  $\ct\la\CPFE.\Enc(\MPK,C[\msg_{b}])$ in $D$.
From the condition that $x[1]=0$, we have $C^*[b,m_0,m_1,1](x)=C[m_{b}](x)=m_{b}$.
Thus, from the $1$-bounded security of $\CPFE$, $D_{1}$ and $D$ are computationally indistinguishable when $x[1]=0$.
\end{proof}
Thus, from \cref{cor:cind_sample_api} and \cref{lem:ind_cpfe_0}, we have
\begin{align}
1-\delta'\leq
\Pr[\tlp'_{\alpha,0}>\frac{1}{2}+\epsilon-5\epsilon']
\leq\Pr[\tlp_{\alpha,1}>\frac{1}{2}+\epsilon-8\epsilon']+\negl(\secp).
\end{align}
This means that $\Pr[\BadExt_{\alpha,1}]=\negl(\secp)$ when $x[1]=0$.
We next consider the case of $x[1]=1$.
We define the following distribution $\Drev$.
\begin{description}
\item[$\Drev$:] Generate $(b,\ct)\gets D$. Output $(1\oplus b,\ct)$.
\end{description}
That is, the first bit of the output is flipped from $D$.
Then, for any random coin $r$, we have $(\mat{P}_{\Drev(r)},\mat{Q}_{\Drev(r)})=(\mat{Q}_{D(r)},\mat{P}_{D(r)})$.
This is because we have $\mat{Q}_{b,\ct}=\mat{I}-\mat{P}_{b,\ct}=\mat{P}_{1\oplus b,\ct}$ for any tuple $(b,\ct)$.
Therefore, $ \API_{\cP,\Drev}^{\epsilon' ,\delta'}$ is exactly the same process as  $\API_{\cPrev,D}^{\epsilon' ,\delta'}$, where $\cPrev=(\mat{Q}_{b,\ct},\mat{P}_{b,\ct})_{b,\ct}$.
Let $\tlp'_{\alpha,0} \gets \API_{\cP,\Drev}^{\epsilon' ,\delta'}(\qstateq_{\alpha,0})$.
From, the reverse-almost-projective property of $\API$, we have
\begin{align}
\Pr[\tlp'_{\alpha,0}<\frac{1}{2}-\epsilon+4\epsilon'+\epsilon']\geq1-\delta'.
\end{align}
\begin{lemma}\label{lem:ind_cpfe_1}
When $x[1]=1$, $D_{1}$ is computationally indistinguishable from $\Drev$.
\end{lemma}
\begin{proof}
We see that $\Drev$ is identical to the following distribution.
\begin{itemize}
\item Generate $b\la\bit$ and $\ct\la\Enc(\ek,\msg_{1\oplus b})$. Output $(b,\ct)$.
\end{itemize}
Then, the difference between $D_{1}$ and $\Drev$ is that $\ct$ is generated as  $\ct\la\CPFE.\Enc(\MPK,C^*[b,m_0,m_1,1])$ in $D$ and it is generated as $\ct\la\CPFE.\Enc(\MPK,C[\msg_{1\oplus b}])$ in $\Drev$.
From the condition that $x[1]=1$, we have $C^*[b,m_0,m_1,1](x)=C[m_{1\oplus b}](x)=\msg_{1\oplus b}$.
Thus, from the $1$-bounded security of $\CPFE$, $D_{1}$ and $\Drev$ are computationally indistinguishable when $x[1]=1$.
\end{proof}
Thus, from \cref{cor:cind_sample_api} and \cref{lem:ind_cpfe_1}, we have
\begin{align}
1-\delta'\leq
\Pr[\tlp'_{\alpha,0}<\frac{1}{2}-\epsilon+5\epsilon']
\leq\Pr[\tlp_{\alpha,1}<\frac{1}{2}-\epsilon+8\epsilon']+\negl(\secp).
\end{align}
This means that $\Pr[\BadExt_{\alpha,1}]=\negl(\secp)$ when $x[1]=1$.

Overall, $\Pr[\BadExt_{\alpha,1}]=\negl(\secp)$ regardless of the value of $x[1]$.
We can similarly show that $\Pr[\BadExt_{\alpha,i}]=\negl(\secp)$ for $i\in\{2,\cdots,\secp\}$ using the fact that $D_{i}$ is computationally indistinguishable from $D$ if $x[i]=0$ and it is computationally indistinguishable form $\Drev$ if $x[i]=1$.
We omit the details.

\medskip

From the above discussion, we have $\Pr[\MoEBBLeak(\qA,\secp)=1]=\eta-\negl(\secp)$ for some inverse polynomial $\eta$, which contradicts to the monogamy property of BB84 states with leakage.
This completes the proof of \cref{thm:SDE_from_MOE}.
\end{proof}


\section{Quantum Copy-Protection from Unclonable Encryption}\label{sec:CP}

We introduce one-out-of-many copy protection security for single-bit output point functions and present how to achieve it using one-out-of-many secure unclonable encyption in this section.

\subsection{Definition}\label{sec:def_copy_protection}

\begin{definition}[Copy-Protection (Syntax)]
A copy-protection scheme $\CP$ for a family of circuits $\calC$ consists of two algorithms $(\qCopyProtect,\qEval)$.
\begin{description}
\item[$\qCopyProtect(1^\secp,C)\ra\rho$:]The copy-protection algorithm takes as input the security parameter $1^\secp$, a circuit $C\in\calC$, and outputs a quantum state $\rho$.
\item[$\qEval(\rho,x)$:]The evaluation algorithm takes as input a quantum state $\rho$ and an input $x$, and outputs $y$.
\item[Evaluation Correctness:]For every circuit $C$ and input $x$, we have
\begin{align}
\Pr[
\qEval(\rho,x)=C(x) \mid \rho\la\qCopyProtect(1^\secp,C)
]=1-\negl(\secp).
\end{align}
\end{description}
\end{definition}

\begin{remark}\label{rem:reusability}
We can assume without loss of generality that a copy protected program $\rho$ output by $\qCopyProtect$ is reusable, that is, it can be reused polynomially many times.
This is because the output of $\qEval$ on input $\rho$ and any input $x$ is almost deterministic by correctness, and thus such an operation can be done without almost disturbing $\rho$ by the gentle measurement lemma~\cite{TransIT:Winter99}.
\end{remark}

We focus on copy protection scheme for a family of single-bit output point functions that we denote $\PFs$.
We also define a family of single-bit output point functions $\PFone$ as $\PFone=\{f_{y}\}_{y\in\bit^{\pfinplen}}$, where $f_{y}$ outputs $1$ if the input is $y\in\bit^{\pfinplen}$ and $0$ otherwise.

We review the widely used copy-protection security for $\PFs$ originally introduced by Coladangelo et al.~\cite{ARXIV:ColMajPor20}.

\begin{definition}[Copy-Protection Security for $\PFs$]\label{def:cp_pf1}
Let $\CP$ be a copy protection scheme for $\PFone$.
Let $D_Y$ be a distribution over $\bit^{\pfinplen}$.
Let $D_X(\cdot)$ be a distribution over $\bit^{\pfinplen}\times\bit^{\pfinplen}$, where $D_X(\cdot)$ takes as input $y^\prime\in\bit^{\pfinplen}$.
We consider the following security experiment $\expb{\CP,D_Y,D_X,\qA}{cp}{pf1}(\secp)$ for $\qA=(\qA_0,\qA_1,\qA_2)$.

\begin{enumerate}
\item The challenger generates $y\la D_Y$. The challenger generates $\rho\la\qCopyProtect(1^\secp,f_{y})$ and sends $\rho$ to $\qA_0$.
    \item $\qA_0$ creates a bipartite state $\qstateq$ over registers $\qreg{R}_1$ and $\qreg{R}_2$. $\qA_0$ sends $\qstateq[\qreg{R}_1]$ and $\qstateq[\qreg{R}_2]$ to $\qA_1$ and $\qA_2$, respectively.
\item The challenger generates $(x_1,x_2)\la D_X(y)$, and sends $x_1$ and $x_2$ to $\qA_1$ and $\qA_2$, respectively.
\item $\qA_1$ and $\qA_2$ respectively output $b_1$ and $b_2$.
If $b_i = f_{y}(x_i)$ for $i\in \setbk{1,2}$, the challenger outputs $1$, otherwise outputs $0$.
\end{enumerate}

We define $\ptriv=\max_{i\in\setbk{1,2},\qS}p_{i,\qS}$, where
\begin{align}
p_{i,\qS}=\Pr\left[
b_i=f_y(x_i)
\ \middle |
\begin{array}{ll}
y\la D_Y,(x_1,x_2)\la D_X(y)\\
b_i\la\qS(x_i)
\end{array}
\right] 
\end{align}
and the maximization is done by all possibly computationally unbounded algorithm $\qS$.

We say that $\CP$ satisfies copy-protection security for $\PFs$ with respect to $D_Y$ and $D_X$ if for any QPT $\qA$, it holds that
\begin{align}
\advb{\CP,D_Y,D_X,\qA}{cp}{pf1}(\secp)\seteq \Pr[ \expb{\CP,D_Y,D_X,\qA}{cp}{pf1}(\secp)=1] 
\le \ptriv+\negl(\secp).
\end{align}

\end{definition}

The following definition is a natural adaptation of \cref{def:cp_pf1} into one-out-of-many setting.

\begin{definition}[One-out-of-Many Copy-Protection Security for $\PFs$]\label{def:cp_pf1_om}
Let $\CP$ be a copy protection scheme for $\PFone$.
Let $D_Y$ and $D_X(\cdot)$ be distributions over $\bit^{\pfinplen}$, where $D_X(\cdot)$ takes as input $y^\prime\in\bit^{\pfinplen}$.
We consider the following security experiment $\expc{\CP,D_Y,D_X,\qA}{cp}{pf1}{om}(\secp,\numa)$ for $\qA=(\qA_0,\qA_1,\cdots,\qA_\numa)$.

\begin{enumerate}
    \item The challenger generates $y\la D_Y$. The challenger generates $\rho\la\qCopyProtect(1^\secp,f_{y})$ and sends $\rho$ to $\qA_0$.
    \item $\qA_0$ creates a quantum state $\qstateq$ over $\numa$ registers $\qreg{R}_1,\cdots,\qreg{R}_\numa$. $\qA_0$ sends $\qstateq[\qreg{R}_i]$ to $\qA_i$ for every $i\in[\numa]$.
    \item The challenger generates $\alpha\la[\numa]$. The challenger generates $x\la D_X(y)$ and sends $x$ to $\qA_\alpha$.
$\qA_\alpha$ outputs $b_\alpha$.
If $b_\alpha=f_y(x)$, the challenger outputs $1$, otherwise outputs $0$.

\end{enumerate}

We define $\ptriv=\max_{\qS}p_{\qS}$, where
\begin{align}
p_{\qS}=\Pr\left[
b=f_y(x)
\ \middle |
\begin{array}{ll}
y\la D_Y,x\la D_X(y)\\
b\la\qS(x)
\end{array}
\right] 
\end{align}
and the maximization is done by all possibly computationally unbounded algorithm $\qS$.

We say that $\CP$ satisfies one-out-of-many copy-protection security for $\PFs$ with respect to $D_Y$ and $D_X$ if for any polynomial $\numa=\numa(\secp)$ and QPT $\qA$, it holds that
\begin{align}
\advc{\CP,D_Y,D_X,\qA}{cp}{pf1}{om}(\secp,\numa)\seteq \Pr[ \expc{\CP,D_Y,D_X,\qA}{cp}{pf1}{om}(\secp,\numa)=1] 
\le \frac{1}{\numa}\cdot 1+\frac{\numa-1}{\numa}\cdot\ptriv+\negl(\secp).
\end{align}
\end{definition}

\subsection{Construction}\label{sec:construction_copy_protection_from_ue}

We construct a copy-protection scheme for single-bit output point functions $\PFone$, where $\pfinplen$ is specified later.
We use the following tools:
\begin{itemize}
\item SKUE $\UE=(\UE.\KG,\UE.\qEnc,\UE.\qDec)$. Suppose the plaintext space of $\UE$ is $\bit^\secp$ and the key length is $\uekeylen$.
\item Injective commitment scheme with equivocal mode $\Com=(\Setup,\Commit,\EqSetup,\Open)$. Suppose the message space of $\Com$ is $\bit^{\uekeylen}$ and the random coin space is $\bit^{\comrandlen}$.
\item Compute-and-compare obfuscation $\CCObf$ with the simulator $\CC.\Sim$. In this section, the message feed to $\CCObf$ is fixed to $1$. Thus, we omit to write it from the input.
\item QFHE with classical ciphertexts $\QFHE=(\QFHE.\KG,\QFHE.\Enc,\QFHE.\qEval,\QFHE.\Dec)$.
\end{itemize}
We set $\pfinplen=\uekeylen+\comrandlen$.
The description of $\CP$ is as follows.
\begin{description}

 \item[$\qCopyProtect(1^\secp,f_y)$:] $ $
 \begin{itemize}
 \item Generate $(\pk,\sk)\la\QFHE.\KG(1^\secp)$, $\uk\la\UE.\KG(1^\secp)$, and $\ck\la\Setup(1^\secp)$.
 \item Generate $\lock\la\bit^\secp$.
 \item Parse $\maskval\|\comrand\la y$ and generate $\com\la\Commit(\ck,\maskval;\comrand)$.
\item Generate $\maskeduk\la\maskval\oplus\uk$.
\item Generate $\qfhe.\ct\la\QFHE.\Enc(\pk,(\com,\maskeduk))$.
\item Generate $\ue.\qct\la\UE.\qEnc(\uk,\lock)$.
\item Generate $\tlD\la\CCObf(1^\secp,\QFHE.\Dec(\sk,\cdot),\lock)$.
\item Output $\rho=(\ck,\qfhe.\ct,\ue.\qct,\tlD)$.
 \end{itemize}
 \item[$\qEval(\rho,x)$:] $ $
 \begin{itemize}
 \item Parse $\rho = (\ck,\qfhe.\ct,\ue.\qct,\tlD)$.
 \item Compute $\evalct\la\QFHE.\qEval(C[\ck,x],\ue.\qct,\qfhe.\ct)$, where the circuit $C[\ck,x]$ is described in \cref{fig:c_cp}.
 \item Output $y\la\tlD(\evalct)$.
 \end{itemize}
\end{description}

\protocol
{Quantum circuit $C[\ck,x]$}
{The description of $C[\ck,x]$}
{fig:c_cp}
{
\begin{description}
\setlength{\parskip}{0.3mm} 
\setlength{\itemsep}{0.3mm} 
\item[Constants:]  Strings $\ck$ and $x$.
\item[Input:] A quantum state $\ue.\qct$ and strings $\com$ and $\maskeduk$.
\end{description}
\begin{enumerate}
\item Parse $x_{\mathsf{mask}}\| x_{\mathsf{comr}}\la x$. 
\item If $\com\ne\Commit(\ck,x_{\mathsf{mask}};x_{\mathsf{comr}})$, output $0^\secp$. Otherwise, go to the next step.
\item Compute $\uk^\prime \la x_{\mathsf{mask}}\oplus \maskeduk$.
\item Output $\lock^\prime \la\UE.\qDec(\uk^\prime,\ue.\qct)$.
\end{enumerate}
}

\paragraph{Evaluation Correctness.}
It is easy to see that $\CP$ satisfies evaluation correctness from the correctness of $\CCObf$, $\QFHE$, and $\UE$, and injectivity of $\Com$.

\paragraph{Security.}
For security, we have the following theorems.

\begin{theorem}\label{thm:CP_from_lattice_security_onesided}
Let $0\le w\le 1$.
We define the distributions $U_{\pfinplen}$ and $\Dresamp{w}(\cdot)$ as follows.
\begin{itemize}
\item $U_{\pfinplen}$ is the uniform distribution over $\bit^{\pfinplen}$.
\item $\Dresamp{w}(\cdot)$ is a distribution such that $\Dresamp{w}(y)$ outputs $y$ with probability $1-w$ and outputs a resampled value $z\la U_{\pfinplen}$ with probability $w$.
\end{itemize}
Let $D=\{D_\secp\}$ be a family of distributions where each $D_\secp$ outputs $(\QFHE.\Dec(\sk,\cdot),\lock,\qaux:=\pk)$ generated as those in $\qCopyProtect$.
If $\CCObf$ is secure with respect to $D$, $\QFHE$ satisfies semantic security, $\UE$ satisfies one-out-of-many one-time unclonable-indistinguishable security, and $\Com$ satisfies trapdoor equivocality, then $\CP$ satisfies one-out-of-many copy protection security for $\PFs$ with respect to the distributions $D_Y=U_{\pfinplen}$ and $D_X(\cdot)=\Dresamp{w}(\cdot)$.
\end{theorem}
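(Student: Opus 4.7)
The plan is to transform, through a sequence of computationally indistinguishable hybrid experiments, the copy-protected program into one whose components are sampled without any reference to $y$; in this final hybrid, $\qA_\alpha$'s winning probability is bounded by the classical trivial bound $\ptriv$, which in turn satisfies $\ptriv\le\tfrac{1}{\numa}+\tfrac{\numa-1}{\numa}\ptriv$ and thus implies the claimed bound.

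Starting from Hybrid~0 (the real game), I introduce four hybrids. In Hybrid~1, switch to the equivocation mode of the commitment: sample $(\ck^*,\com^*,\td)\gets\EqSetup(1^\secp)$, pick $\maskval\gets\bit^{\uekeylen}$, set $r^*\gets\Open(\td,\maskval,\com^*)$ and $y\coloneqq\maskval\|r^*$, and use $(\ck^*,\com^*)$ in place of $(\ck,\com)$. This is indistinguishable from Hybrid~0 by trapdoor equivocality of $\Com$. In Hybrid~2, replace $\ue.\qct$ by $\UE.\qEnc(\uk,0^\secp)$; since $\uk$ never appears in $\qA$'s view, this is indistinguishable by one-time IND-CPA security of $\UE$, which follows from its one-out-of-many unclonable-indistinguishable security via~\cref{thm:omind_to_ind}. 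The reduction samples $\maskeduk$ uniformly even though it does not know $\uk$; this produces the correct marginal distribution, and the joint with $\uk$ is hidden inside the UE ciphertext. In Hybrid~3, replace $\tlD$ by $\CC.\Sim(1^\secp,\pp,1)$; this is indistinguishable by $\CCObf$ distributional indistinguishability, since the other parts of the view can be built from $\qaux=\pk$ (crucially, $\ue.\qct$ no longer needs $\lock$ after Hybrid~2). In Hybrid~4, replace $\qfhe.\ct$ by $\QFHE.\Enc(\pk,(0,0))$; this is indistinguishable by QFHE semantic security, which applies because $\sk$ is no longer anywhere in $\qA$'s view after $\tlD$ has been simulated.

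In Hybrid~4, every component of $\rho=(\ck^*,\qfhe.\ct,\ue.\qct,\tlD^*)$ is generated without reference to $\maskval$, so the $\numa$-partite state that $\qA_0$ distributes to $(\qA_1,\dots,\qA_\numa)$ is independent of $\maskval$. Moreover, trapdoor equivocality makes the triple $(\ck^*,\com^*,r^*)$ computationally indistinguishable from one in which $r^*$ is uniform and independent of $(\ck^*,\com^*)$, so $y=\maskval\|r^*$ is computationally indistinguishable from a uniform element of $\bit^{\pfinplen}$ independent of $\qA_0$'s view. Consequently $\qA_\alpha$ sees only a $y$-independent quantum register plus the input $x\gets\Dresamp{w}(y)$, and its winning probability is bounded by $\ptriv+\negl(\secp)$ by the very definition of $\ptriv$ as the optimum over unbounded classical strategies acting on $x$.

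The main technical obstacle is the Hybrid~1$\to$Hybrid~2 transition, where the reduction to UE IND-CPA must construct the view without access to $\uk$ even though the real scheme sets $\maskeduk=\maskval\oplus\uk$. The fix is to sample $\maskeduk$ uniformly: this yields the correct marginal, and the only way a QPT distinguisher could detect the missing joint correlation with $\uk$ would be to open $\qfhe.\ct$ (which hides $\maskeduk$ and requires $\sk$, protected by $\CCObf$ and QFHE in the subsequent hybrids) or to recover $\uk$ from $\ue.\qct$ (forbidden by UE security). Chaining all hybrid indistinguishabilities with the elementary inequality $\ptriv\le\tfrac{1}{\numa}+\tfrac{\numa-1}{\numa}\ptriv$ (valid for $\ptriv\in[0,1]$) completes the proof.
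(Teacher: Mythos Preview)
Your Hybrid~1$\to$Hybrid~2 step is where the argument breaks, and the error is not a technicality: the two hybrids differ by a non-negligible amount, so no reduction to IND-CPA can work.

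Concretely, in Hybrid~1 the program $\rho$ is still fully functional: evaluating it on $x=y$ recovers $\uk$ inside the QFHE, decrypts $\ue.\qct$ to $\lock$, and $\tlD$ outputs $1$. In Hybrid~2 the same evaluation decrypts to $0^\secp\ne\lock$ and $\tlD$ outputs $0$. The adversary that has $\qA_0$ forward $\rho$ to a single $\qA_i$ and has $\qA_i$ evaluate on $x$ therefore distinguishes the two hybrids with advantage about $\tfrac{1-w}{\numa}$. So your claim of negligible closeness is false, and your final claim that the real winning probability is at most $\ptriv+\negl(\secp)$ is too strong: this very adversary already beats $\ptriv$ by $\Omega(\tfrac{1}{\numa})$.

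Your proposed fix---sample $\maskeduk$ uniformly in the reduction and argue the broken correlation with $\maskval$ is hidden by QFHE and $\CCObf$---is circular. At this point in the sequence $\tlD$ still hardwires $\sk$, so the QFHE ciphertext is functionally accessible through program evaluation; the distinguisher above exploits exactly that. QFHE and $\CCObf$ security are only invoked in your Hybrids~3 and~4, \emph{after} you have already (illegitimately) crossed to Hybrid~2.

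The paper handles this as follows. First it conditions on the branch $x=y$. Then it performs a renaming $\maskval\mapsto\maskval\oplus\uk$, after which the QFHE plaintext becomes $(\com,\maskval)$ (no $\uk$) while $\uk$ migrates into the challenge input $x=(\maskval\oplus\uk)\|\comrand$. Now it invokes the \emph{one-out-of-many} unclonable-indistinguishability of $\UE$---not IND-CPA---because in that game the selected second-stage adversary $\qB_\alpha$ is handed $\uk$, which is precisely what the reduction needs to compute $x$ for $\qA_\alpha$. This step loses $\tfrac{1}{\numa}$, not $\negl(\secp)$, and that $\tfrac{1}{\numa}$ is the source of the gap between $\ptriv$ and $\tfrac{1}{\numa}+\tfrac{\numa-1}{\numa}\ptriv$ in the final bound. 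Only afterwards, on the resampled branch where $\uk$ is truly absent from the view, does the paper use plain IND-CPA of $\UE$ to undo the changes.
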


\begin{theorem}\label{thm:CP_from_lattice_security}
We define $\Dresamp{w}^2(\cdot)$ as follows.
\begin{itemize}
\item $\Dresamp{w}^2(\cdot)$ is a distribution such that $\Dresamp{w}^2(y)$ outputs $(y,y)$ with probability $1-w$ and outputs $(z,z)$ for a resampled value $z\la U_{\pfinplen}$ with probability $w$.
\end{itemize}
In~\cref{thm:CP_from_lattice_security_onesided}, if we use one-time unclonable-indistinguishable secure $\UE$, $\CP$ satisfies copy protection security for $\PFs$ with respect to the distributions $D_Y=U_{\pfinplen}$ and $D_X(\cdot)=\Dresamp{w}^2(\cdot)$.
\end{theorem}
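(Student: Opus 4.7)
The plan is to mimic the proof of \cref{thm:CP_from_lattice_security_onesided} but in the standard three-adversary setting with distribution $D_X(\cdot) = \Dresamp{w}^2(\cdot)$. First I would define a sequence of hybrids $\mathsf{Hyb}_0,\ldots,\mathsf{Hyb}_4$ designed to make the copy-protected program $\rho$ computationally independent of $y$ in the first phase. Starting from the real game $\mathsf{Hyb}_0$, in $\mathsf{Hyb}_1$ I would replace $\ue.\qct = \UE.\qEnc(\uk,\lock)$ with $\UE.\qEnc(\uk,\lock')$ for a fresh independent $\lock'\in\bit^\secp$, invoking the one-time IND-CPA security of $\UE$ (which follows from the assumed one-time unclonable-indistinguishable security by an argument analogous to \cref{thm:omind_to_ind} adapted to the three-adversary game). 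In $\mathsf{Hyb}_2$, since $\lock$ is now information-theoretically independent of the remaining view components, I would apply the distributional indistinguishability of $\CCObf$ with respect to the distribution $D_\secp$ specified in the theorem to replace $\tlD$ with $\CC.\Sim(1^\secp,\pp)$. In $\mathsf{Hyb}_3$, since $\sk$ no longer appears anywhere in the view (the simulator $\CC.\Sim$ does not take $\sk$ as input), I would apply semantic security of $\QFHE$ to replace $\qfhe.\ct$ with $\QFHE.\Enc(\pk,0)$. Finally, in $\mathsf{Hyb}_4$, using trapdoor equivocality of $\Com$, I would generate $(\ck^*,\com^*,\td)\gets\EqSetup(1^\secp)$ and $\comrand^*\gets\Open(\td,\maskval,\com^*)$, and set $y := \maskval\|\comrand^*$.

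In $\mathsf{Hyb}_4$, the first-phase view $(\ck^*,\QFHE.\Enc(\pk,0),\UE.\qEnc(\uk,\lock'),\CC.\Sim(1^\secp,\pp))$ is independent of $\maskval$, hence of $y$, so the bipartite state handed to $\qA_1$ and $\qA_2$ is independent of $y$. In the second phase both adversaries receive the same input $x$, which equals $y$ with probability $1-w$ and a fresh uniform $z$ with probability $w$. Writing $P_{b,b'} := \Pr_{x\in U_{\pfinplen}}[\qA_1 \text{ outputs } b \text{ and } \qA_2 \text{ outputs } b' \text{ on input } x]$, the winning probability in $\mathsf{Hyb}_4$ equals $(1-w)P_{1,1} + wP_{0,0}$, which is bounded by $\max(w,1-w)\cdot(P_{1,1}+P_{0,0}) \le \max(w,1-w) = \ptriv$ using $P_{0,0}+P_{0,1}+P_{1,0}+P_{1,1}=1$. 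Combining with the negligible hybrid losses gives the final bound $\ptriv + \negl(\secp)$.

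The main obstacle will be justifying the transition $\mathsf{Hyb}_0 \to \mathsf{Hyb}_1$: invoking $\UE$'s IND-CPA security requires $\uk$ to be computationally hidden from the CP adversary, yet $\uk$ does appear inside $\qfhe.\ct$ via $\maskeduk = \maskval\oplus\uk$, together with the QFHE secret key $\sk$ embedded in obfuscated form inside $\tlD$. The reduction must ensure that despite being able to homomorphically evaluate on $\qfhe.\ct$ and feed results through $\tlD$, the adversary cannot extract $\uk$: since $\tlD$ only tests equality with the hidden lock value and does not leak $\sk$ in a usable form, any significant distinguishing advantage would itself violate either $\CCObf$ or $\UE$ security. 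Apart from this subtlety, the remaining hybrids follow standard patterns, and the overall structure parallels that of \cref{thm:CP_from_lattice_security_onesided} with the standard three-adversary unclonable-indistinguishable security of $\UE$ substituted for its one-out-of-many variant throughout.
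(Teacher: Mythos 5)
Your plan goes wrong at the very first hybrid, and more fundamentally, it never actually invokes the unclonable-indistinguishable security of $\UE$ that the theorem assumes; only its (implied) one-time IND-CPA security appears in your argument. That cannot be right in principle: IND-CPA security alone is perfectly compatible with a $\UE$ whose ciphertexts are freely cloneable, and such a $\UE$ would immediately break copy-protection of $\CP$ (the adversary clones $\ue.\qct$, forwards the full tuple $(\ck,\qfhe.\ct,\ue.\qct,\tlD)$ to both parties, and both evaluate honestly on $x$). Somewhere, the unclonability of $\UE$ has to be used, and your hybrid chain has no step where it could be.

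Concretely, the transition $\mathsf{Hyb}_0\to\mathsf{Hyb}_1$ is not an indistinguishable change and cannot be repaired by appealing to $\CCObf$ or $\UE$ security. In $\mathsf{Hyb}_0$ the values $\lock$ inside $\ue.\qct$ and inside $\tlD=\CCObf(1^\secp,\QFHE.\Dec(\sk,\cdot),\lock)$ agree; in your $\mathsf{Hyb}_1$ they do not. An adversary $\qA_\alpha$ who receives $x=y$ in phase 2 (the non-resampled branch, which happens with probability $1-w$) simply runs the \emph{honest} procedure $\qEval(\rho,x)$: the commitment check passes, QFHE homomorphically recovers $\lock$ (resp.\ $\lock'$), and $\tlD$ fires (resp.\ does not fire, except with probability $2^{-\secp}$). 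So $\qEval(\rho,y)$ outputs $1$ in $\mathsf{Hyb}_0$ and $0$ in $\mathsf{Hyb}_1$, giving a non-negligible distinguishing gap. The subtlety you flag in your last paragraph — "the adversary cannot extract $\uk$" — is a red herring: no extraction is needed, only honest evaluation on the phase-2 input.

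What your proposal is missing is the paper's change-of-variables step, performed \emph{before} touching $\UE$. The actual proof first moves to an equivocal commitment $(\ck,\com,\td)\gets\EqSetup$ and then rewrites $\maskval\mapsto\maskval\oplus\uk$, so that $\qfhe.\ct$ now encrypts $(\com,\maskval)$ (no $\uk$), while the phase-2 input becomes $x=(\maskval\oplus\uk)\|\Open(\td,\maskval\oplus\uk,\com)$. After this rewrite, $\uk$ appears \emph{only} as a quantity revealed in phase 2 together with $\ue.\qct$ sitting in phase 1 — exactly the shape of the one-time unclonable-indistinguishable game for $\UE$. The reduction $\qB_0$ then simulates $\rho$ without knowing $\uk$ (receiving $\ue.\qct$ from the $\UE$ challenger), and $\qB_1,\qB_2$ receive $\uk$ in phase 2, form $x$ via the equivocal opening, and relay $\qA_1,\qA_2$'s answers. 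That is the step at which $\ue.\qct$ can be replaced by an encryption of $0^\secp$. You placed the equivocal-commitment step at the end of your chain (your $\mathsf{Hyb}_4$), where it no longer enables the change-of-variables and hence serves no purpose. The subsequent simulation of $\tlD$, replacement of $\qfhe.\ct$, and the final information-theoretic counting ($(1-w)P_{1,1}+wP_{0,0}\le\ptriv$) are fine, but they rest on a chain whose first link is broken.
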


\begin{remark}[On instantiations]\label{rem:instantiation_QROM_UE}
When we instantiate $\CP$ based on \cref{thm:CP_from_lattice_security}, we need to be careful about the fact that the existing one-time unclonable-indistinguishable secure SKUE scheme uses QROM.
The construction of $\CP$ evaluates the decryption circuit of $\UE$ by $\QFHE$.
Thus, to use the QROM based SKUE scheme as the building block of $\CP$, we have to assume that it is secure when we replace QRO with real hash functions so that the decryption algorithm has a concrete description that $\QFHE$ can evaluate. Note that we always need this assumption to use QROM-based SKUE schemes in the real world.

When we instantiate $\CP$ based on \cref{thm:CP_from_lattice_security_onesided}, there is no such issue and we can obtain a construction secure in the standard model, since we have one-out-of-many one-time unclonable-indistinguishable secure SKUE based on the LWE assumption in the standard model.
\end{remark}

\ifnum\llncs=0

We below prove \cref{thm:CP_from_lattice_security_onesided} and omit the proof of~\cref{thm:CP_from_lattice_security}. It is easy to see that we can similarly prove~\cref{thm:CP_from_lattice_security} by using one-time unclonable-indistinguishable security at the transition from $\hybi{3}$ to $\hybi{4}$ in the proof of~\cref{thm:CP_from_lattice_security_onesided}.



\ifnum\llncs=0
\begin{proof}[Proof of~\cref{thm:CP_from_lattice_security_onesided}]
\else
\section{Proof of~\cref{thm:CP_from_lattice_security_onesided}}\label{sec:proof_CP}
\fi
Let $\numa$ be any polynomial of $\secp$ and $\qA=(\qA_0,\qA_1,\cdots,\qA_\numa)$ any efficient adversary.
We consider the case where $1-w\le w$.
The proof when $w\le 1-w$ is similar.
In this case, $\ptriv=w$ and our goal is to show that
\begin{align}
\advc{\CP,D_Y,D_X,\qA}{cp}{pf1}{om}(\secp)
&=\frac{1}{\numa}\cdot 1+\frac{\numa-1}{\numa}\cdot w +\negl(\secp).
\end{align}
We prove it by using the following sequence of experiments.

\begin{description}
\item[$\hybi{1}$:] This is $\expc{\CP,D_Y,D_X,\qA}{cp}{pf1}{om}(\secp,\numa)$ where $D_X(y)=\Dresamp{w}(y)$ outputs $x=y$ without doing re-sampling, except that the output of the experiment is set to the adversary's output.
\begin{enumerate}
\item The challenger generates $y\la U_{\pfinplen}$. The challenger sends $\rho$ generated as follows to $\qA_0$.
 \begin{itemize}
 \item Generate $(\pk,\sk)\la\QFHE.\KG(1^\secp)$, $\uk\la\UE.\KG(1^\secp)$, and $\ck\la\Setup(1^\secp)$.
 \item Generate $\lock\la\bit^\secp$.
 \item Parse $\maskval\|\comrand\la y$ and generate $\com\la\Commit(\ck,\maskval;\comrand)$.
\item Generate $\maskeduk\la\maskval\oplus\uk$.
\item Generate $\qfhe.\ct\la\QFHE.\Enc(\pk,(\com,\maskeduk))$.
\item Generate $\ue.\qct\la\UE.\qEnc(\uk,\lock)$.
\item Generate $\tlD\la\CCObf(1^\secp,\QFHE.\Dec(\sk,\cdot),\lock)$.
\item Set $\rho=(\ck,\qfhe.\ct,\ue.\qct,\tlD)$.
 \end{itemize}
     \item $\qA_0$ creates a quantum state $\qstateq$ over $\numa$ registers $\qreg{R}_1,\cdots,\qreg{R}_\numa$. $\qA_0$ sends $\qstateq[\qreg{R}_i]$ to $\qA_i$ for every $i\in[\numa]$.
\item The challenger generates $\alpha\la[\numa]$. The challenger sends $x=y$ to $\qA_\alpha$.
$\qA_\alpha$ outputs $b_\alpha$.
The output of the experiment is $b_\alpha$.
\end{enumerate}
\end{description}

\begin{description}
\item[$\hybi{2}$:]This is the same as $\hybi{1}$ except that the tuple $(\ck,\com,\comrand)$ is generated as $(\ck,\com,\td)\la\EqSetup(1^\secp)$ and $\comrand\la\Open(\td,\maskval,\com)$.
\end{description}

From the trapdoor equivocation property of $\Com$, we have $\abs{\Pr[\hybi{1}=1]-\Pr[\hybi{2}=1]}=\negl(\secp)$.

\begin{description}
\item[$\hybi{3}$:]This is the same as $\hybi{2}$ except that $\maskval$ is replaced with $\maskval\oplus\uk$. By this change, $\qfhe.\ct$ and $\comrand$ are generated as $\qfhe.\ct\la\QFHE.\Enc(\pk,(\com,\maskval))$ and $\comrand\la\Open(\td,\maskval\oplus\uk,\com)$. Moreover, $\qA_\alpha$ is given $x=(\maskval\oplus\uk)\|\comrand$.
\end{description}

We have $\Pr[\hybi{2}=1]=\Pr[\hybi{3}=1]$.

\begin{description}
\item[$\hybi{4}$:]This is the same as $\hybi{3}$ except that $\ue.\qct$ is generated as $\ue.\qct\la\UE.\qEnc(\uk,0^\secp)$.
\end{description}

We consider the following adversary $\qB=(\qB_0,\qB_1,\cdots,\qB_\numa)$ that attacks the one-out-of-many one-time unclonable-indistinguishable security of $\UE$. $\qB_0$ behaves as follows.
\begin{enumerate}
\item $\qB_0$ sends $(\lock,0^\secp)$ to the challenger, where $\lock\la\bit^\secp$ and obtains $\ue.\qct$ from the challenger.
Then $\qB_0$ sends $\rho$ generated as follows to $\qA_0$.
    \begin{itemize}
    \item Generate $(\pk,\sk)\la\QFHE.\KG(1^\secp)$ and $(\ck,\com,\td)\la\EqSetup(1^\secp)$.
    \item $\maskval\la\bit^{\uekeylen}$.
\item Generate $\qfhe.\ct\la\QFHE.\Enc(\pk,(\com,\maskval))$.
\item Generate $\tlD\la\CCObf(1^\secp,\QFHE.\Dec(\sk,\cdot),\lock)$.
\item Set $\rho=(\ck,\qfhe.\ct,\ue.\qct,\tlD)$.
    \end{itemize}
\item When $\qA_0$ creates a quantum state $\qstateq$ over $\numa$ registers $\qreg{R}_1,\cdots,\qreg{R}_\numa$, $\qB_0$ sends $(\com,\td,\maskval,\qstateq[\qreg{R}_i])$ to $\qB_i$ for every $i\in[\numa]$.
\end{enumerate}
$\qB_\alpha$ behaves as follows, where $\alpha\la[\numa]$ is chosen by the challenger.
\begin{enumerate}
\item $\qB_\alpha$ send $\qstateq[\qreg{R}_\alpha]$ to $\qA_\alpha$.
\item When $\qB_\alpha$ obtains $\uk$ from the challenger, it computes $\comrand\la\Open(\td,\maskval\oplus\uk,\com)$ and sends $x=\maskval\oplus\uk\|\comrand$ to $\qA_\alpha$.
\item When $\qA_\alpha$ outputs $b_\alpha$, it outputs $\coin_\alpha=b_\alpha$.
\end{enumerate}

Let the challenge bit in the security game played by $\qB$ be $\coin$.
If $\coin=0$, $\qB$ perfectly simulates $\hybi{3}$ to $\qA$.
If $\coin=1$, $\qB$ perfectly simulates $\hybi{4}$ to $\qA$.
Also, $\qB$ outputs $\qA$'s output.
Then, we have 
\begin{align}
\Pr[\coin_\alpha=\coin]-\frac{1}{2}
&=\frac{1}{2}(\Pr[\coin_\alpha=1 \mid \coin=0]-\Pr[\coin_\alpha=1 \mid \coin=1])\\
&=\frac{1}{2}(\Pr[\hybi{3}=1]-\Pr[\hybi{4}=1]).
\end{align}
Thus, from the one-out-of-many one-time unclonable-indistinguishable security of $\UE$, we have $\frac{1}{2}(\Pr[\hybi{3}=1]-\Pr[\hybi{4}=1])\le\frac{1}{2\numa}+\negl(\secp)$.

\begin{description}
\item[$\hybi{5}$:]This is the same as $\hybi{4}$ except that $\tlD$ is generated as $\tlD\la\CC.\Sim(1^\secp,\pp_{\QFHE.\Dec})$, where $\pp_{\QFHE.\Dec}$ is the parameters of $\QFHE.\Dec$.
\end{description}

From the security of $\CCObf$, we have $\abs{\Pr[\hybi{4}=1]-\Pr[\hybi{5}=1]}=\negl(\secp)$.

\begin{description}
\item[$\hybi{6}$:]This is the same as $\hybi{5}$ except that $\qfhe.\ct$ is generated as $\qfhe.\ct\la\QFHE.\Enc(\pk,0^L)$, where $L=\abs{\com}+\uekeylen$.
\end{description}

From the security of $\QFHE$, we have $\abs{\Pr[\hybi{5}=1]-\Pr[\hybi{6}=1]}=\negl(\secp)$.

\begin{description}
\item[$\hybi{7}$:]This is the same as $\hybi{6}$ except that $\maskval$ is replaced with $\maskval\oplus\uk$. By this change, $\comrand$ is generated as $\comrand\la\Open(\ck,\maskval,\com)$. Moreover, $\qA_\alpha$ is given $y:=\maskval\|\comrand$.
\end{description}

We have $\Pr[\hybi{6}=1]=\Pr[\hybi{7}=1]$.

\begin{description}
\item[$\hybi{8}$:]This is the same as $\hybi{7}$ except that $\ck$ is generated as $\ck\la\Setup(1^\secp)$ and $\comrand$ is generated uniformly at random.
\end{description}

From the trapdoor equivocation property of $\Com$, we have $\abs{\Pr[\hybi{7}=1]-\Pr[\hybi{8}=1]}=\negl(\secp)$.

\begin{description}
\item[$\hybi{9}$:] This is the same as $\hybi{8}$ except that a re-sampled value $x\la U_{\pfinplen}$ is given to $\qA_\alpha$ instead of $y=\maskval\|\comrand$.
\end{description}

In $\hybi{8}$, $\rho$ given to $\qA_0$ is independent of $y=\maskval\|\comrand$ and $y$ is uniformly at random, and thus, we have $\Pr[\hybi{8}=1]=\Pr[\hybi{9}=1]$.

\begin{description}
\item[$\hybi{10}$:]This is the same as $\hybi{9}$ except that we generate $\rho$ in the same way as $\hybi{1}$.
\end{description}

We can prove $\abs{\Pr[\hybi{9}=1]-\Pr[\hybi{10}=1]}=\negl(\secp)$ by using the security of $\CCObf$, $\QFHE$, $\UE$, and $\Com$, essentially undoing the changes between $\hybi{1}$ and $\hybi{9}$.
To make this change, we can rely on one-time IND-CPA security of $\UE$, not one-out-of-many one-time unclonable-indistinguishable security for the following reason.
In $\hybi{9}$ and $\hybi{10}$, $\qA_\alpha$ is given a re-sample value $x$ and not $y=\maskval\|\comrand$.
Then, we can ensure that the information of $\uk$ is not given to $\qA$ except $\ue.\qct$ in this transition, which allows us to use one-time IND-CPA security of $\UE$.
Note that the one-time IND-CPA security of $\UE$ is implied by the one-out-of-many one-time unclonable-indistinguishable security of $\UE$ as proven in \cref{thm:omind_to_ind}.

$\hybi{10}$ is $\expc{\CP,D_Y,D_X,\qA}{cp}{pf1}{om}(\secp,\numa)$ where $D_X(y)=\Dresamp{w}(y)$ outputs a resampled $x\la U_{\pfinplen}$ and the output is set to $\qA_\alpha$'s output.
Let $\Resamp$ be the event that $D_X(y)=\Dresamp{w}(y)$ does re-sampling.
Then, we have 
\begin{align}
&\advc{\CP,\qA}{cp}{pf1}{om}(\secp)-w\\
&\le (1-w) \cdot \Pr[b_\alpha=1\mid \lnot\Resamp]+w\cdot\Pr[b_\alpha=0\mid \Resamp]+\negl(\secp)-w\\
&=(1-w)\cdot\Pr[b_\alpha=1\mid \lnot\Resamp]-w\cdot(1-\Pr[b_\alpha=0\mid \Resamp])+\negl(\secp)\\
&\le(1-w)\cdot\Pr[b_\alpha=1\mid \lnot\Resamp]-(1-w)\cdot\Pr[b_\alpha=1\mid \Resamp]+\negl(\secp)\\
&\le (1-w)\cdot (\Pr[\hybi{1}=1]-\Pr[\hybi{10}=1])+\negl(\secp).
\end{align}
The third inequality uses $1-w \le w$.
From the above discussions, we have $\Pr[\hybi{1}]-\Pr[\hybi{10}]\le\frac{1}{\numa}+\negl(\secp)$.
Therefore, we have
\begin{align}
\advc{\CP,\qA}{cp}{pf1}{om}(\secp)
&\le w+(1-w)\cdot\frac{1}{\numa}+\negl(\secp)\\
&=\frac{1}{\numa}\cdot 1+\frac{\numa-1}{\numa}\cdot w +\negl(\secp).
\end{align}
This completes the proof.
\ifnum\llncs=0
\end{proof}
\else\fi

\else

We provide the proof of \cref{thm:CP_from_lattice_security_onesided} in \cref{sec:proof_CP}, and omit the proof of~\cref{thm:CP_from_lattice_security}.

\fi


\newcommand{\tlDqfhe}{\tlD_{\mathsf{qfhe}}}

\section{Unclonable Predicate Encryption}\label{sec:UPE}
We introduce unclonable predicate encryption (PE) and present how to achieve it in this section.

\subsection{Definition}
The definition of unclonable PE is a natural extension of PE to an unclonable variant.
\begin{definition}[Unclonable Predicate Encryption (Syntax)]\label{def:unclonable_pe}
An unclonable predicate encryption scheme is a tuple of quantum algorithms $(\Setup,\KG,\allowbreak\qencrypt,\qdecrypt)$ with plaintext space $\Ms$.
\begin{description}
\item [$\Setup(1^\secp)\ra (\mpk,\msk)$:] The setup algorithm takes as input the security 
    parameter $1^\secp$ and outputs a key pair $(\mpk,\msk)$.
    \item [$\KG(\msk,x)\ra \sk_x$:] The key generation algorithm takes as input the master secret key $\msk$ and an attribute $x$, and outputs a decryption key $\sk_x$.
    \item [$\qEnc(\mpk,C,\msg)\ra \qct$:] The encryption algorithm takes as input $\mpk$, predicate $C$, and a plaintext $\msg\in\cM$, and outputs a ciphertext $\qct$.
    \item [$\qDec(\sk_x,\qct)\ra \msg^\prime \mbox{ or }\bot$:] The decryption algorithm takes as input $\sk_x$ and $\qct$ and outputs a plaintext $\msg^\prime$ or $\bot$.

\item[Decryption Correctness:] There exists a negligible function $\negl$ such that for any $\msg\in\Ms$, $C$, and $x$ such that $C(x)=1$, we have
\begin{align}
\Pr\left[
\qDec(\sk_x,\qct)= \msg
\ \middle |
\begin{array}{ll}
(\mpk,\msk)\la\Setup(1^\secp)\\
\sk_x\gets \KG(\msk,x)\\
\qct \gets \qEnc(\mpk,C,\msg)
\end{array}
\right] 
=1-\negl(\secp).
\end{align}
\end{description}
\end{definition}

We define simulation-based security for unclonable PE, and then discuss its validity.

\begin{definition}[Adaptive Unclonable-Simulation Security for PE]\label{def:adp_unclonable_sim_pe}
Let $\UPE=(\Setup,\KG, \qEnc, \qDec)$ be an unclonable predicate encryption scheme.
We consider the following security experiment $\expc{\UPE,\qSim,\qA}{ada}{sim}{clone}(\secp)$, where $\qSim$ is a QPT simulation algorithm and $\qA=(\qA_0,\qA_1,\qA_2)$.

\begin{enumerate}
    \item The challenger generates $(\mpk,\msk)\gets \Setup(1^\secp)$ and sends $\mpk$ to $\qA_0$.
    \item $\qA_0$ can get access to the following oracle.
            \begin{description}
            \item[$\Oracle{\KG,1}(x)$:] Given $x$, it returns $\sk_x\la\KG(\msk,x)$. 
            \end{description}
    \item $\qA_0$ sends $C$ and $\msg \in \Ms$ to thfe challenger, where $C$ satisfies $C(x)=0$ for all $x$ queried by $\qA_0$ in the previous step. The challenger picks $\coin\la\bit$ and does the following.
    \begin{itemize}
    \item If $\coin =0$, the challenger generates $\qct \gets \qEnc(\mpk,C,\msg)$ and returns $\qct$ to $\qA_0$.
    \item If $\coin=1$, the challenger generates $\qct \gets \qSim(1^\secp,\abs{C},\abs{\msg})$ and returns $\qct$ to $\qA_0$.
    \end{itemize}
    Hereafter, $\qA_0$ is not allowed to query $x$ such that $C(x)=1$ to $\Oracle{\KG,1}$.
    \item $\qA_0$ creates a bipartite state $\qstateq$ over registers $\qreg{R}_1$ and $\qreg{R}_2$. $\qA_0$ sends $\qstateq[\qreg{R}_1]$ and $\qstateq[\qreg{R}_2]$ to $\qA_1$ and $\qA_2$, respectively.
\item $\qA_1$ and $\qA_2$ can get access to the following oracle.
    \begin{description}
            \item[$\Oracle{\KG,2}(x)$:] Given $x$, it returns $\sk_x\la\KG(\msk,x)$. Note that $\Oracle{\KG,2}$ accepts a query $x$ such that $C(x)=1$.
            \end{description}
            \item $\qA_1$ and $\qA_2$ respectively output $\coin_1^\prime$ and $\coin_2^\prime$.
If $\coin_i^\prime = \coin$ for $i\in \setbk{1,2}$, the challenger outputs $1$, otherwise outputs $0$.    
\end{enumerate}
We say that $\UPE$ is unclonable-simulation secure if there exists QPT $\qSim$ such that for any QPT $\qA$, it holds that
\begin{align}
\advc{\UPE,\qSim,\qA}{ada}{sim}{clone}(\secp)\seteq \Pr[ \expc{\UPE,\qA}{ada}{sim}{clone}(\secp)=1] \le \frac{1}{2} + \negl(\secp).
\end{align}
\end{definition}
\begin{remark}
We select a simulation-based security definition since it intuitively captures security of encryption and stronger than indistinguishability-based one.
There are impossibility results of simulation-based secure FE~\cite{TCC:BonSahWat11,C:AGVW13}.
However, those are not applied to our setting since we consider the single challenge ciphertext setting of PE, where the message (a.k.a payload) part is recovered by a secret key.
\end{remark}

\begin{remark}[On the validity of unclonable-simulation security]
We claim that our unclonable-simulation security for PE captures unclonability of both the payload part and the predicate part.
To this end, we first argue that by using both the payload part and the predicate part, we can realize SKUE satisfying unclonable-simulation security where $\qA_0$ is given a real or simulated challenge ciphertext in the security game.
We then discuss about the validity of unclonable-simulation security for SKUE.

It is rather clear that we can achieve unclonable-simulation secure SKUE using the payload part. 
Then, we show how to construct an unclonable-simulation secure SKUE scheme for the message space $\setbk{1,...N}$ using the predicate part, where $N$ is a polynomial in $\secp$.
Let $C[i]$ be a predicate that takes as input $j$ and output $1$ if and only if $i=j$.
In this construction, when we encrypt a message $i^\ast$, we generate a ciphertext of the payload $\msg=1$  with the predicate $C[i^\ast]$ by the unclonable PE scheme.
The decryption key of this construction is $(\sk_1,\ldots,\sk_N)$, where $\sk_j$ is the decryption key for the attribute $j$.
Decryption is done by testing if a ciphertext can be decrypted by $\sk_j$ for every $j$.
The unclonable-simulation security of this SKUE scheme is reduced to that of the unclonable PE scheme.

We now discuss the validity of unclonable-simulation security of SKUE.
Although we do not see the formal implication from unclonable-simulation security to unclonable-indistinguishable security, we can use any SKUE scheme with the former security notion as that with the latter security notion whose message space is $\bit$.
This is done by encoding $1$-bit messages using real ciphertext and simulated ciphertext.
Also, we can formally prove that unclonable-simulation security is strictly stronger than one-wayness-based unclonability, similarly to unclonable-indistinguishability.
In the proof of the implication from unclonable-indistinguishability to one-wayness-based unclonability, the argument goes though by using the former to switch a real ciphertext into a junk ciphertext (such as a ciphertext of $0$).
A similar argument can be done by using the power of simulation-based security that is used to switch a real ciphertext into a simulated one.
In general, we can use unclonable-simulation security as a drop-in replacement of unclonable-indistinguishable security, if the latter is used to switch a real ciphertext into a junk one.
For example, this is the case in our construction of copy-protection for single-bit output point functions presented in \cref{sec:CP}, which proves the usefulness of unclonable-simulation security.
\end{remark}

We propose one-out-of-many variant of unclonable-simulation security for PE.

\begin{definition}[One-out-of-Many Adaptive Unclonable-Simulation Security for PE]\label{def:om_adp_unclonable_sim_pe}
Let $\UPE=(\Setup,\KG, \qEnc, \qDec)$ be an unclonable PE scheme.
We consider the following security experiment $\expd{\UPE,\qSim,\qA}{om}{ada}{sim}{clone}(\secp,\numa)$, where $\qSim$ is a QPT simulation algorithm and $\qA=(\qA_0,\qA_1,\cdots,\qA_\numa)$.

\begin{enumerate}
    \item The challenger generates $(\mpk,\msk)\gets \Setup(1^\secp)$ and sends $\mpk$ to $\qA_0$.
    \item $\qA_0$ can get access to the following oracle.
            \begin{description}
            \item[$\Oracle{\KG,1}(x)$:] Given $x$, it returns $\sk_x\la\KG(\msk,x)$. 
            \end{description}
    \item $\qA_0$ sends $C$ and $\msg \in \Ms$ to thfe challenger, where $C$ satisfies $C(x)=0$ for all $x$ queried by $\qA_0$ in the previous step. The challenger picks $\coin\la\bit$ and does the following.
    \begin{itemize}
    \item If $\coin =0$, the challenger generates $\qct \gets \qEnc(\mpk,C,\msg)$ and returns $\qct$ to $\qA_0$.
    \item If $\coin=1$, the challenger generates $\qct \gets \qSim(1^\secp,\abs{C},\abs{\msg})$ and returns $\qct$ to $\qA_0$.
    \end{itemize}
    Hereafter, $\qA_0$ is not allowed to query $x$ such that $C(x)=1$ to $\Oracle{\KG,1}$.
    \item $\qA_0$ creates a quantum state $\qstateq$ over $\numa$ registers $\qreg{R}_1,\cdots,\qreg{R}_\numa$. $\qA_0$ sends $\qstateq[\qreg{R}_i]$ to $\qA_i$ for every $i\in[\numa]$.
\item The challenger generates $\alpha\chosen [\numa]$. $\qA_\alpha$ can access the following oracle.
    \begin{description}
            \item[$\Oracle{\KG,2}(x)$:] Given $x$, it returns $\sk_x\la\KG(\msk,x)$. Note that $\Oracle{\KG,2}$ accepts a query $x$ such that $C(x)=1$.
            \end{description}
            \item $\qA_\alpha$ outputs $\coin^\prime$.
If $\coin^\prime = \coin$, the challenger outputs $1$, otherwise outputs $0$.
\end{enumerate}
We say that $\UPE$ is one-out-of-many unclonable-simulation secure if there exists QPT $\qSim$ such that for any polynomial $\numa=\numa(\secp)$ and QPT $\qA=(\qA_0,\qA_1,\cdots,\qA_\numa)$, it holds that
\begin{align}
\advd{\UPE,\qSim,\qA}{om}{ada}{sim}{clone}(\secp,\numa)\seteq \Pr[ \expd{\UPE,\qA}{om}{ada}{sim}{clone}(\secp,\numa)=1]
\le \frac{1}{2}+\frac{1}{2\numa}+\negl(\secp).
\end{align}
\end{definition}

\begin{remark}
We can consider selective variants of~\cref{def:om_adp_unclonable_sim_pe,def:adp_unclonable_sim_pe}, where $\qA_0$ declares $C$ at the beginning of the games. We call selective unclonable-simulation security and one-out-of-many selective unclonable-simulation security, respectively.
\end{remark}

\subsection{Construction}\label{sec:construction_unclonable-PE_from_copy-protection}

We construct an unclonable PE scheme $\UPE$ using the following tools.
\begin{itemize}
\item SKUE $\UE=(\UE.\KG,\UE.\qEnc,\UE.\qDec)$. Suppose the plaintext space of $\UE$ is $\bit^\secp$.
\item Compute-and-compare obfuscation $\CCObf$ with the simulator $\CC.\Sim$.
\item QFHE with classical ciphertexts $\QFHE=(\QFHE.\KG,\QFHE.\Enc,\QFHE.\qEval,\QFHE.\Dec)$.
\item AD-SIM secure CP-ABE $\CPABE=(\Setup,\KG,\Enc,\Dec)$ with a QPT simulator $\qABESim=(\qSimEnc,\qSimKG)$.
\end{itemize}
The description of $\UPE$ is as follows.

\begin{description}
 \item[$\UPE.\Setup(1^\secp)$:] $ $
 \begin{itemize}
\item Output $(\mpk,\msk)\la\Setup(1^\secp)$.
 \end{itemize}
 \item[$\UPE.\KG(\msk,x)$:] $ $
\begin{itemize}
\item Output $\sk_x\la\KG(\msk,x)$.
\end{itemize}
 \item[$\UPE.\Enc(\mpk,C,\msg)$:] $ $
 \begin{itemize}
\item Generate $(\pk,\sk)\la\QFHE.\KG(1^\secp)$ and $\uk\la\UE.\KG(1^\secp)$.
\item Generate $\lock\la\bit^\secp$.
\item Generate $\ct\la\Enc(\mpk,C,\uk)$.
\item Let $D(\cdot)$ be the decryption circuit $\Dec(\cdot,\ct)$ of $\CPABE$ that has $\ct$ hardwired.
\item Generate $\qfhe.\ct\la\QFHE.\Enc(\pk,D)$.
\item Generate $\ue.\qct\la\UE.\qEnc(\uk,\lock)$.
\item Generate $\tlDqfhe\la\CCObf(1^\secp,\QFHE.\Dec(\sk,\cdot),\lock,\msg)$.
\item Output $\upe.\qct\la(\qfhe.\ct,\ue.\qct,\tlDqfhe)$.
\end{itemize}
\item[$\UPE.\Dec(\sk_x,\upe.\qct)$:] $ $
\begin{itemize}
 \item Parse $\rho = (\qfhe.\ct,\ue.\qct,\tlDqfhe)$.
 \item Compute $\evalct\la\QFHE.\qEval(C[\sk_x],\ue.\qct,\qfhe.\ct)$, where the circuit $C[\sk_x]$ is described in \cref{fig:c_pe}.
 \item Output $\msg\la\tlDqfhe(\evalct)$.
\end{itemize}
\end{description}

\protocol
{Quantum circuit $C[\sk_x]$}
{The description of $C[\sk_x]$}
{fig:c_pe}
{
\begin{description}
\setlength{\parskip}{0.3mm} 
\setlength{\itemsep}{0.3mm} 
\item[Constants:]  A string $\sk_x$.
\item[Input:] A quantum state $\ue.\qct$ and a circuit description $D$.
\end{description}
\begin{enumerate}
\item Compute $\uk^\prime \la D(\sk_x)$. 
\item Output $\lock^\prime\la\UE.\qDec(\uk^\prime,\ue.\qct)$.
\end{enumerate}
}

The correctness of $\UPE$ immediately follows from the correctness of the building blocks.

\paragraph{Security.}
We have the following theorems.

\begin{theorem}\label{thm:UPE_from_copy-protection_onesided}
Let $D=\{D_\secp\}$ be a family of distributions where each $D_\secp$ outputs $(\QFHE.\Dec(\sk,\cdot),\lock,\qaux:=\pk)$ generated as those in $\UPE.\Enc$.
If $\CCObf$ is secure with respect to $D$, $\QFHE$ satisfies semantic security, $\UE$ satisfies one-out-of-many one-time unclonable-indistinguishable security, and $\CPABE$ is AD-SIM secure, then $\UPE$ satisfies one-out-of-many adaptive unclonable-simulation security.
\end{theorem}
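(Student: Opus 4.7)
The plan is to define a simulator $\qSim$ and prove via a sequence of hybrids that the real experiment ($\coin=0$) and the simulated experiment ($\coin=1$) are indistinguishable to the randomly chosen $\qA_\alpha$ up to an $O(1/\numa)$ gap, which translates into the desired $1/2+1/(2\numa)+\negl(\secp)$ bound on the adversary's winning probability. On input $(1^\secp,|C|,|\msg|)$, $\qSim$ generates $(\pk,\sk)\la\QFHE.\KG(1^\secp)$ and $\uk\la\UE.\KG(1^\secp)$, and outputs $(\qfhe.\ct^*,\ue.\qct^*,\tlDqfhe^*)$, where $\qfhe.\ct^*\la\QFHE.\Enc(\pk,0^L)$ for the appropriate length $L$ determined by $|C|$ and $|\msg|$, $\ue.\qct^*\la\UE.\qEnc(\uk,0^\secp)$, and $\tlDqfhe^*\la\CC.\Sim(1^\secp,\pp_{\QFHE.\Dec},|\msg|)$. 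Crucially, $\qSim$ depends on neither $\mpk$ nor $C$, because the ABE layer (the only place where these would enter) is buried inside a QFHE encryption of zero and can be effectively erased.

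The hybrid sequence, modeled on the copy-protection proof of \cref{thm:CP_from_lattice_security_onesided}, proceeds as follows. Starting from $\hybi{1}$, the real UPE game with $\coin=0$: $\hybi{2}$ replaces the ABE ciphertext $\Enc(\mpk,C,\uk)$ with the output of $\qSimEnc(\mpk,C)$ and answers post-challenge queries for $x$ with $C(x)=1$ via $\qSimKG(\msk,\state,x,\uk)$, reducing to ABE AD-SIM security; $\hybi{3}$ replaces $\tlDqfhe$ with $\CC.\Sim(1^\secp,\pp_{\QFHE.\Dec},|\msg|)$ by compute-and-compare obfuscation security on the distribution $D$ with auxiliary input $\qaux=\pk$; $\hybi{4}$ replaces the QFHE ciphertext with $\QFHE.\Enc(\pk,0^L)$ via semantic security (note that $\sk$ is no longer needed in the simulated $\tlDqfhe$); and finally $\hybi{5}$ replaces $\ue.\qct$ with $\UE.\qEnc(\uk,0^\secp)$, matching the simulated experiment.

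The main obstacle will be the transition $\hybi{4}\to\hybi{5}$, which is where the $1/\numa$ loss arises. The reduction $\qB=(\qB_0,\qB_1,\ldots,\qB_\numa)$ to one-out-of-many one-time unclonable-indistinguishability of $\UE$ embeds the UE challenge $\ue.\qct$ into the UPE ciphertext given to $\qA_0$ and must simulate the entire UPE game without knowing $\uk$ in the first phase. This is possible because in $\hybi{4}$/$\hybi{5}$ the UE key $\uk$ appears only as input to $\qSimKG(\msk,\state,x,\uk)$ when answering a post-challenge query $x$ with $C(x)=1$; and in the one-out-of-many UPE game only $\qA_\alpha$ has access to $\Oracle{\KG,2}$, so only $\qB_\alpha$ ever needs $\uk$, which it receives from the UE challenger precisely when $\alpha$ is chosen. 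All other data required by the reduction ($\mpk,\msk,\pk,\state$) is generated by $\qB_0$ and forwarded through the registers. The UE bound $1/2+1/(2\numa)+\negl(\secp)$ then yields $|\Pr[\qA_\alpha=1\mid\hybi{4}]-\Pr[\qA_\alpha=1\mid\hybi{5}]|\le 1/\numa+\negl(\secp)$. Combining this with the negligible gaps from the three earlier transitions gives an overall distinguishing gap of at most $1/\numa+\negl(\secp)$, and since $\Pr[\qA_\alpha=\coin]=\frac{1}{2}+\frac{1}{2}(\Pr[\qA_\alpha=1\mid\coin=1]-\Pr[\qA_\alpha=1\mid\coin=0])$, this gives exactly the required $1/2+1/(2\numa)+\negl(\secp)$ bound.
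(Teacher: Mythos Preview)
Your overall strategy and the simulator $\qSim$ match the paper's, and most of your hybrid argument is fine, but the order in which you invoke the primitives' security is wrong in one crucial place. In your transition $\hybi{2}\to\hybi{3}$ you invoke compute-and-compare obfuscation security with respect to the distribution $D$ whose auxiliary input is $\qaux=\pk$. At that point, however, the adversary is still being handed $\ue.\qct\la\UE.\qEnc(\uk,\lock)$, which depends on $\lock$. A reduction to $\CCObf$ security receives only $\pk$ and the (real or simulated) obfuscation; it does \emph{not} learn $\lock$ and therefore cannot produce $\ue.\qct=\UE.\qEnc(\uk,\lock)$ to complete the simulation of $\hybi{2}$ for $\qA$. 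So this hybrid step cannot be justified under the stated assumption on $\CCObf$.

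The paper avoids this by swapping the order: after the AD-SIM step ($\hybi{2}$), it \emph{first} uses the one-out-of-many security of $\UE$ to replace $\ue.\qct$ by $\UE.\qEnc(\uk,0^\secp)$ ($\hybi{3}$, where the $1/\numa$ loss occurs), and only \emph{then} invokes $\CCObf$ security ($\hybi{4}$) and QFHE semantic security ($\hybi{5}$). After the UE step, $\lock$ appears nowhere in the experiment except inside $\tlDqfhe$ itself, so the $\CCObf$ reduction can simulate everything else from $\pk$ and fresh randomness, exactly matching the distribution $D$ in the theorem statement. Your reduction $\qB$ for the UE step is correct and essentially the same as the paper's; you just need to place it immediately after the ABE AD-SIM hybrid, not at the end.
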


\begin{theorem}\label{thm:UPE_from_copy-protection_ABE}
In \cref{thm:UPE_from_copy-protection_onesided}, if we use a one-time unclonable-indistinguishable secure SKFE scheme $\UE$, then $\UPE$ satisfies adaptive unclonable-simulation security.
\end{theorem}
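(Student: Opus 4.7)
My plan is to follow the exact hybrid template used for \cref{thm:UPE_from_copy-protection_onesided}, with the SKUE hop now invoking the standard two-adversary one-time unclonable-indistinguishable security of $\UE$ rather than its $(n{+}1)$-adversary one-out-of-many relaxation. Concretely, let $\qA = (\qA_0, \qA_1, \qA_2)$ be any QPT adversary against the unclonable-simulation security of $\UPE$. I take the simulator $\qSim(1^\secp, |C|, |\msg|)$ to sample a fresh QFHE keypair $(\pk,\sk) \la \QFHE.\KG(1^\secp)$ and fresh $\uk \la \UE.\KG(1^\secp)$, and output the triple $(\QFHE.\Enc(\pk, 0^L), \UE.\qEnc(\uk, 0^\secp), \CC.\Sim(1^\secp, \pp_{\QFHE.\Dec}, |\msg|))$, where the length $L$ of the plaintext $D$ is fixed by $|C|$, $|\uk|$, and the $\CPABE$ parameters. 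Throughout, queries to $\Oracle{\KG,1}$ and $\Oracle{\KG,2}$ in the simulated experiment are answered by the real $\KG$, as mandated by \cref{def:adp_unclonable_sim_pe}.

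The hybrid sequence runs as follows. $\hybi{1}$ is the real experiment. $\hybi{2}$ switches the CPABE ciphertext $\ct$ to $\qSimEnc(\mpk, C)$ and answers every post-challenge query $x$ with $C(x) = 1$ by $\qSimKG(\msk, \state, x, \uk)$; indistinguishability from $\hybi{1}$ follows from AD-SIM security of $\CPABE$. $\hybi{3}$ switches $\ue.\qct$ from $\UE.\qEnc(\uk, \lock)$ to $\UE.\qEnc(\uk, 0^\secp)$; indistinguishability from $\hybi{2}$ follows from one-time unclonable-indistinguishable security of $\UE$. $\hybi{4}$ replaces $\tlDqfhe$ with $\CC.\Sim(1^\secp, \pp_{\QFHE.\Dec}, |\msg|)$; indistinguishability from $\hybi{3}$ follows from the security of $\CCObf$ with $\qaux := \pk$, since $\lock$ now appears only inside $\tlDqfhe$ and is uniformly random. $\hybi{5}$ replaces $\qfhe.\ct$ with $\QFHE.\Enc(\pk, 0^L)$, indistinguishable from $\hybi{4}$ by semantic security of $\QFHE$ once $\sk$ has been purged from the view. $\hybi{6}$ reverts the CPABE side: $\ct \la \Enc(\mpk, C, \uk)$ is produced internally by the reduction (never exposed to $\qA$) and post-challenge $\Oracle{\KG,2}$ queries with $C(x) = 1$ are answered by real $\KG$; indistinguishability from $\hybi{5}$ is AD-SIM of $\CPABE$ applied in reverse. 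Finally, $\hybi{6}$ is exactly the experiment in which the challenger hands $\qSim(1^\secp, |C|, |\msg|)$ to $\qA_0$ and answers key queries with real $\KG$, so the standard telescoping yields the claimed bound.

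The main obstacle I expect is the $\hybi{2} \to \hybi{3}$ step: the SKUE reduction must assemble the UPE challenge ciphertext and answer $\qA_0$'s pre-split $\Oracle{\KG,1}$ queries without knowing $\uk$, because in the SKUE game $\uk$ is delivered only to $\qA_1, \qA_2$ after the split. This is possible precisely because, after $\hybi{2}$, the CPABE ciphertext $\ct = \qSimEnc(\mpk, C)$ and every other component of the UPE ciphertext except $\ue.\qct$ (which the reduction imports directly from the SKUE challenger) is $\uk$-independent; the reduction's post-split components $\qB_1, \qB_2$ then each receive $\uk$ and use it to answer their $\qA_\alpha$'s $\Oracle{\KG,2}$ queries on $x$ with $C(x) = 1$ via $\qSimKG(\msk, \state, x, \uk)$, which matches the two-adversary structure of the one-time unclonable-indistinguishable game. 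A smaller subtlety is the reverse AD-SIM step $\hybi{5} \to \hybi{6}$: although the CPABE ciphertext is never exposed to the UPE adversary, the reduction is still valid because it may receive $\ct$ from the $\CPABE$ challenger and discard it while forwarding all key-oracle queries, thereby swapping $\qSimKG$ for real $\KG$. Because the SKUE hop relies on the standard two-adversary variant, the cumulative distinguishing advantage across all six hops is $\negl(\secp)$, yielding the required $\frac{1}{2} + \negl(\secp)$ bound for adaptive unclonable-simulation security.
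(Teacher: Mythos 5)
Your proposal follows the paper's hybrid template for \cref{thm:UPE_from_copy-protection_onesided}, with the SKUE hop invoking the two-adversary one-time unclonable-indistinguishable security at the $\hybi{2}\to\hybi{3}$ transition, as the paper prescribes. Two observations.

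Your $\hybi{6}$, undoing the CPABE AD-SIM switch once the CPABE ciphertext is no longer used anywhere, is a step the paper's written proof glosses over: the paper identifies its $\hybi{5}$ with the simulated experiment, but in $\hybi{5}$ the oracle $\Oracle{\KG,2}$ still answers $C(x)=1$ queries with $\qSimKG(\msk,\state,x,\uk)$, whereas the simulated experiment of \cref{def:adp_unclonable_sim_pe} always uses real $\KG$. Your reverse-AD-SIM hop with the challenge ciphertext discarded is exactly the missing bridge, and including it makes the argument precise.

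The closing accounting, however, is not quite right. The one-time unclonable-indistinguishable game does not bound a distinguishing advantage between $\hybi{2}$ and $\hybi{3}$, so the phrase ``the cumulative distinguishing advantage across all six hops is $\negl(\secp)$'' does not describe what actually happens. For instance, if $\qA_0$ forwards the entire residual state to $\qA_1$ and nothing to $\qA_2$, with $\qA_2$ always outputting $0$, then $\qA_1$ can decrypt using a $C(x)=1$ key and test whether it recovers $\msg$; this makes $\Pr_{\hybi{2}}[\coin_1^\prime=\coin_2^\prime=0]\approx 1$ while $\Pr_{\hybi{3}}[\coin_1^\prime=\coin_2^\prime=0]\approx 0$, a constant gap, even though the SKUE bound is respected since $\Pr[\text{both }\qB_1,\qB_2\text{ correct}]=\tfrac{1}{2}$. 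What the two-adversary SKUE game actually provides, with $\qB_\alpha$ forwarding $\coin_\alpha^\prime$ as its guess of the SKUE bit, is the cross-hybrid inequality
\begin{align}
\tfrac{1}{2}\Pr_{\hybi{2}}[\coin_1^\prime=\coin_2^\prime=0]+\tfrac{1}{2}\Pr_{\hybi{3}}[\coin_1^\prime=\coin_2^\prime=1]\le\tfrac{1}{2}+\negl(\secp).
\end{align}
The AD-SIM hop $\hybi{1}\to\hybi{2}$ preserves $\Pr[\coin_1^\prime=\coin_2^\prime=0]$ up to $\negl(\secp)$, the hops $\hybi{3}\to\cdots\to\hybi{6}$ preserve $\Pr[\coin_1^\prime=\coin_2^\prime=1]$ up to $\negl(\secp)$ each, and the final accounting is
\begin{align}
\advc{\UPE,\qSim,\qA}{ada}{sim}{clone}(\secp)=\tfrac{1}{2}\Pr_{\hybi{1}}[\coin_1^\prime=\coin_2^\prime=0]+\tfrac{1}{2}\Pr_{\hybi{6}}[\coin_1^\prime=\coin_2^\prime=1]\le\tfrac{1}{2}+\negl(\secp).
\end{align}
Your reduction is set up correctly; only this two-branch accounting is needed in place of a standard distinguishing telescope.
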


Similarly to our copy-protection scheme presented in \cref{sec:CP}, when we instantiate $\UPE$ based on \cref{thm:UPE_from_copy-protection_ABE}, we need to be careful about the use of QRO by the existing SKUE schemes.
See \cref{rem:instantiation_QROM_UE}.

If our goal is selective unclonable-simulation security or one-out-of-many selective unclonable-simulation security, we can use SEL-SIM secure CP-ABE instead of AD-SIM secure CP-ABE in~\cref{thm:UPE_from_copy-protection_ABE,thm:UPE_from_copy-protection_onesided}.

\ifnum\llncs=0
We below prove \cref{thm:UPE_from_copy-protection_onesided} and omit the proof of~\cref{thm:UPE_from_copy-protection_ABE}. It is easy to see that we can similarly prove~\cref{thm:UPE_from_copy-protection_ABE} by using one-time unclonable-indistinguishable security at the transition from $\hybi{2}$ to $\hybi{3}$ in the proof of~\cref{thm:UPE_from_copy-protection_onesided}.



\ifnum\llncs=0
\begin{proof}[Proof of~\cref{thm:UPE_from_copy-protection_onesided}]
\else
\section{Proof of~\cref{thm:UPE_from_copy-protection_onesided}}\label{sec:proof_UPE}
\fi
Let $\numa$ be any polynomial of $\secp$ and $\qA=(\qA_0,\ldots,\qA_\numa)$ any efficient adversary. Also, let $\UPE.\qSim$ be the following algorithm.
\begin{description}
 \item[$\UPE.\qSim(1^\secp,\abs{C},\abs{\msg})$:] $ $
 \begin{itemize}
\item Generate $(\pk,\sk)\la\QFHE.\KG(1^\secp)$ and $\uk\la\UE.\KG(1^\secp)$.
\item Generate $\lock\la\bit^\secp$.
\item Generate $\qfhe.\ct\la\QFHE.\Enc(\pk,0^L)$, where $L$ is the size of $\CPABE$'s decryption circuit $\Dec(\cdot,\ct)$ that has hardwired ciphertext of a $\secp$-bit message with a $\abs{C}$-bit predicate.
\item Generate $\ue.\qct\la\qEnc(\uk,0^\secp)$.
\item Generate $\tlDqfhe\la\CC.\Sim(1^\secp,\pp_{\QFHE.\Dec},\abs{\msg})$, where $\pp_{\QFHE.\Dec}$ is the parameters of $\QFHE.\Dec$.
\item Output $\upe.\qct\la(\qfhe.\ct,\ue.\qct,\tlDqfhe)$.
 \end{itemize}
\end{description}
We prove $\advd{\UPE,\UPE.\qSim,\qA}{om}{ada}{sim}{clone}(\secp,\numa)\le \frac{1}{2}+\frac{1}{2\numa}+\negl(\secp)$ for any polynomial $\numa$ and QPT $\qA=(\qA_0,\qA_1,\cdots,\qA_\numa)$ using the following sequence of experiments.
\begin{description}
\item[$\hybi{1}$:]This is $\expd{\UPE,\UPE.\qSim,\qA}{om}{ada}{sim}{clone}(\secp,\numa)$ where $\coin=0$ and the output of the experiment is set to the adversary's output. The detailed description is as follows.
\begin{enumerate}
    \item The challenger generates $(\mpk,\msk)\gets \Setup(1^\secp)$ and sends $\mpk$ to $\qA_0$.
    \item $\qA_0$ can get access to the following oracle.
            \begin{description}
            \item[$\Oracle{\KG,1}(x)$:] Given $x$, it returns $\sk_x\la\KG(\msk,x)$. 
            \end{description}
    \item $\qA_0$ sends $C$ and $\msg \in \Ms$ to the challenger, where $C$ satisfies $C(x)=0$ for all $x$ queried by $\qA_0$ in the previous step. The challenger returns $\upe.\qct$ generated as follows.
     \begin{itemize}
\item Generate $(\pk,\sk)\la\QFHE.\KG(1^\secp)$ and $\uk\la\UE.\KG(1^\secp)$.
\item Generate $\lock\la\bit^\secp$.
\item Generate $\ct\la\Enc(\mpk,C,\uk)$.
\item Let $D(\cdot)$ be the decryption circuit $\Dec(\cdot,\ct)$ of $\CPABE$ that has $\ct$ hardwired.
\item Generate $\qfhe.\ct\la\QFHE.\Enc(\pk,D)$.
\item Generate $\ue.\qct\la\qEnc(\uk,\lock)$.
\item Generate $\tlDqfhe\la\CCObf(1^\secp,\QFHE.\Dec(\sk,\cdot),\lock,\msg)$.
\item set $\upe.\qct\la(\qfhe.\ct,\ue.\qct,\tlDqfhe)$.
\end{itemize}
    Hereafter, $\qA_0$ is not allowed to query $x$ such that $C(x)=1$ to $\Oracle{\KG,1}$.
    \item $\qA_0$ creates a quantum state $\qstateq$ over $n$ registers $\qreg{R}_1,\ldots,\qreg{R}_\numa$. $\qA_0$ sends $\qstateq[\qreg{R}_i]$ to $\qA_i$ for every $i\in [\numa]$.
\item The challenger generates $\alpha\chosen [\numa]$. $\qA_\alpha$ can get access to the following oracle.
    \begin{description}
            \item[$\Oracle{\KG,2}(x)$:] Given $x$, it returns $\sk_x\la\KG(\msk,x)$. Note that $\Oracle{\KG,2}$ accepts a query $x$ such that $C(x)=1$.
            \end{description}
            \item $\qA_\alpha$ outputs $\coin_\alpha^\prime$. The final output of the experiment is $\coin_\alpha^\prime$.    
\end{enumerate}
\end{description}

\begin{description}
\item[$\hybi{2}$:]This is the same as $\hybi{1}$ except the following changes.
\begin{itemize}
\item $\ct$ is generated as $(\ct,\state)\la\qSimEnc(\mpk,C)$ instead of $\ct \la \Enc(\mpk,C,\uk)$.
\item $\Oracle{\KG,2}$ returns $\sk_x\la\qSimKG(\msk,\state,x,\uk)$ given $x$ if $C(x)=1$.
\end{itemize}
\end{description}
From the AD-SIM security of $\CPABE$, we have $\abs{\Pr[\hybi{1}=1]-\Pr[\hybi{2}=1]}=\negl(\secp)$.

\begin{description}
\item[$\hybi{3}$:]This is the same as $\hybi{2}$ except that $\ue.\qct$ is generated as $\ue.\qct\la\UE.\qEnc(\uk,0^\secp)$.
\end{description}

We consider the following adversary $\qB=(\qB_0,\qB_1,\cdots,\qB_\numa)$ that attacks the one-out-of-many one-time unclonable-indistinguishable security of $\UE$. $\qB_0$ behaves as follows.
\begin{enumerate}
    \item $\qB_0$ generates $(\mpk,\msk)\gets \Setup(1^\secp)$ and sends $\mpk$ to $\qA_0$.
    \item $\qB_0$ simulates $\Oracle{\KG,1}(x)$ for $\qA_0$ using $\msk$.
    \item When $\qA_0$ outputs $C$ and $\msg \in \Ms$, where $C$ satisfies $C(x)=0$ for all $x$ queried by $\qA_0$ in the previous step, $\qB_0$ outputs $(\lock,0^\secp)$, obtains $\ue.\qct$, and returns $\upe.\qct$ generated as follows, where $\lock\la\bit^\secp$.
     \begin{itemize}
\item Generate $(\pk,\sk)\la\QFHE.\KG(1^\secp)$ and $\uk\la\UE.\KG(1^\secp)$.
\item Generate $\lock\la\bit^\secp$.
\item Generate $(\ct,\state)\la\qSimEnc(\mpk,C)$.
\item Let $D(\cdot)$ be the decryption circuit $\Dec(\cdot,\ct)$ of $\CPABE$ that has $\ct$ hardwired.
\item Generate $\qfhe.\ct\la\QFHE.\Enc(\pk,D)$.
\item Generate $\tlDqfhe\la\CCObf(1^\secp,\QFHE.\Dec(\sk,\cdot),\lock,\msg)$.
\item set $\upe.\qct\la(\qfhe.\ct,\ue.\qct,\tlDqfhe)$.
\end{itemize}
    Hereafter, $\qA_0$ is not allowed to query $x$ such that $C(x)=1$ to $\Oracle{\KG,1}$.
    \item When $\qA_0$ outputs a quantum state $\qstateq$ over $n$ registers $\qreg{R}_1,\ldots,\qreg{R}_\numa$, $\qB_0$ sends $(\msk,\state,\qstateq[\qreg{R}_i])$ to $\qB_i$ for every $i\in [\numa]$.
    \end{enumerate}
    
$\qB_\alpha$ behaves as follows, where $\alpha\la[\numa]$ is chosen by the challenger.
\begin{enumerate}
\item Given $\uk$ as an input, $\qB_\alpha$ first send $\qstateq[\qreg{R}_\alpha]$ to $\qA_\alpha$.
\item $\qB_\alpha$ simulates $\Oracle{\KG,2}(x)$ for $\qA_\alpha$ as follows.
\begin{description}
\item[$\Oracle{\KG,2}(x)$:] Given $x$, $\qB_\alpha$ returns $\sk_x\la\qSimKG(\msk,\state,x,\uk)$.
\end{description}
\item When $\qA_\alpha$ outputs $\coin_\alpha^\prime$, $\qB_\alpha$ outputs $b^\prime=\coin_\alpha^\prime$.
\end{enumerate}

Let the challenge bit in the security game played by $\qB$ be $b$.
If $b=0$, $\qB$ perfectly simulates $\hybi{2}$ to $\qA$.
If $b=1$, $\qB$ perfectly simulates $\hybi{3}$ to $\qA$.
Also, $\qB$ outputs $\qA$'s output.
Then, we have 
\begin{align}
\Pr[b^\prime=b]-\frac{1}{2}
&=\frac{1}{2}(\Pr[b^\prime=1|b=0]-\Pr[b^\prime=1|b=1])\\
&=\frac{1}{2}(\Pr[\hybi{2}=1]-\Pr[\hybi{3}=1]).
\end{align}
Thus, from the one-out-of-many one-time unclonable-indistinguishable security of $\UE$, we have $\frac{1}{2}(\Pr[\hybi{2}=1]-\Pr[\hybi{3}=1])\le\frac{1}{2\numa}+\negl(\secp)$.

\begin{description}
\item[$\hybi{4}$:]This is the same as $\hybi{3}$ except that $\tlDqfhe$ is generated as $\tlDqfhe\la\CC.\Sim(1^\secp,\pp_{\QFHE.\Dec},\abs{\msg})$, where $\pp_{\QFHE.\Dec}$ is the parameters of $\QFHE.\Dec$.
\end{description}

From the security of $\CCObf$, we have $\abs{\Pr[\hybi{3}=1]-\Pr[\hybi{4}=1]}=\negl(\secp)$.

\begin{description}
\item[$\hybi{5}$:]This is the same as $\hybi{4}$ except that $\qfhe.\ct$ is generated as $\qfhe.\ct\la\QFHE.\Enc(\pk,0^L)$, where $L$ is the size of $\CPABE$'s decryption circuit $\Dec(\cdot,\ct)$ that has hardwired ciphertext of a $\secp$-bit message with a $\abs{C}$-bit predicate.
\end{description}

From the security of $\QFHE$, we have $\abs{\Pr[\hybi{4}=1]-\Pr[\hybi{5}=1]}=\negl(\secp)$.

$\hybi{5}$ is $\expd{\UPE,\UPE.\qSim,\qA}{om}{ada}{sim}{clone}(\secp,\numa)$ where $\coin=1$ and the output of the experiment is set to $\qA_\alpha$'s output.
Also, we have 
\begin{align}
&\advd{\UPE,\UPE.\qSim,\qA}{om}{ada}{sim}{clone}(\secp,\numa)-\frac{1}{2}\\
&=\frac{1}{2}(\Pr[\coin^\prime=1|\coin=0]-\Pr[\coin^\prime=1|\coin=1])\\
&=\frac{1}{2}(\Pr[\hybi{1}=1]-\Pr[\hybi{5}=1]).
\end{align}
From the above discussions, we have $\frac{1}{2}(\Pr[\hybi{1}=1]-\Pr[\hybi{5}=1])\le \frac{1}{2\numa}+\negl(\secp)$, which means that $\advd{\UPE,\UPE.\qSim,\qA}{om}{ada}{sim}{clone}(\secp,\numa)\le\frac{1}{2}+\frac{1}{2\numa}+\negl(\secp)$.
This completes the proof.
\ifnum\llncs=0
\end{proof}
\else\fi

\begin{remark}
We can also consider a setting where $\qA_\alpha$ receives the master secret key $\msk$ of unclonable PE. If we use IO, we can achieve the stronger definition. The non-committing ABE scheme based on IO by Hiroka et al.~\cite{AC:HMNY21} achieves stronger security where the adversary is given a master secret key after a challenge ciphertext is given. If we use their scheme instead of our simulation-based secure ABE scheme, we can achieve the stronger security for unclonable PE.
\end{remark}

\else

We provide the proof of \cref{thm:UPE_from_copy-protection_onesided} in \cref{sec:proof_UPE} and omit the proof of~\cref{thm:UPE_from_copy-protection_ABE}.

\fi

\ifnum\anonymous=1
\else
\fi

	\ifnum\llncs=1
\bibliographystyle{alpha}
\bibliography{../abbrev3,../crypto,../siamcomp_jacm,../other-bib}
	\else
\bibliographystyle{alpha} 
\bibliography{abbrev3,crypto,siamcomp_jacm,other-bib}
	\fi


\ifnum\cameraready=0
	\ifnum\llncs=0
	\appendix

\newcommand{\INDCPABE}{\mathsf{IND}\textrm{-}\mathsf{CP}\textrm{-}\mathsf{ABE}}
\renewcommand{\IND}{\mathsf{IND}}

\section{AD-SIM secure CP-ABE}\label{sec:ADSIM_CPABE}
We show how to transform any AD-IND secure CP-ABE scheme into AD-SIM secure one in this section.
\subsection{Additional Tool}
We introduce pseudorandom ciphertext secret key encryption (SKE).
\begin{definition}[Pseudorandom Ciphertext SKE]\label{def:ske}
A SKE scheme $\SKE$ is a two tuple $(\E, \D)$ of PPT algorithms. 
\begin{itemize}
\item The encryption algorithm $\E$, given a key $k \in \bin^\lambda$ and a plaintext $m \in \cM$, outputs a ciphertext $c$,
where $\cM$ is the plaintext space of $\SKE$.
\item The decryption algorithm $\D$, given a key $k$ and a ciphertext $c$, outputs a plaintext $\tilde{m} \in \{ \bot \} \cup \cM$.
This algorithm is deterministic.
\end{itemize}
\begin{description}
\item[Correctness:] We require $\D(k, \E(k, m)) = m$ for every $m \in \cM$ and key $k \in \bin^\lambda$.
\item[Pseudorandom Ciphertext Property:]
Let $\zo{\ell}$ be the ciphertext space of $\SKE$.
We define the following experiment $\expb{\SKE,\qA}{pr}{ct}(1^\secp,\coin)$ between the challenger and an adversary $\qA$.
\begin{enumerate}
\item The challenger generates $k \chosen \bin^\lambda$.
Then, the challenger sends $1^\lambda$ to $\qA$.
\item $\qA$ may make polynomially many encryption queries adaptively.
$\qA$ sends $m \in \cM $ to the challenger.
Then, the challenger returns $c \la \E(k, m)$ if $\coin=0$, otherwise $c \chosen \zo{\ell}$.
\item $\qA$ outputs $\coin' \in \bin$.
\end{enumerate}
We require that for any QPT adversary $\qA$, we have 
\[
\advb{\SKE, \qA}{pr}{ct}(\lambda) = \abs{\Pr[\expb{\SKE,\qA}{pr}{ct}(1^\secp,0)=1]-\Pr[\expb{\SKE,\qA}{pr}{ct}(1^\secp,1)=1]} \le \negl(\secp).
\]

\end{description}
\end{definition}

\begin{theorem}\label{thm:pseudorandom_ske}
If OWFs exist, there exists a pseudorandom-secure SKE scheme.
\end{theorem}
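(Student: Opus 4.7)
The plan is to use the standard PRF-based one-time-pad construction. Since one-way functions imply pseudorandom generators (Håstad--Impagliazzo--Levin--Luby) and pseudorandom generators imply pseudorandom functions (Goldreich--Goldwasser--Micali), from the assumed OWF we obtain a PRF family $\{F_k : \zo{n} \to \zo{\ell'}\}_{k \in \zo{\lambda}}$, where $\ell' = |m|$ for the plaintext length under consideration (or more generally any polynomially bounded output length). When the OWF is post-quantum secure, the resulting PRF is secure against QPT distinguishers (this follows from the standard analyses being oblivious to the classical/quantum nature of the adversary in the stated games).

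I would then define $\SKE = (\E,\D)$ as follows: $\E(k,m)$ samples $r \chosen \zo{n}$ and outputs $c \seteq (r, F_k(r) \oplus m)$; $\D(k,(r,s))$ outputs $s \oplus F_k(r)$. Correctness is immediate. The ciphertext space is $\zo{\ell}$ for $\ell = n + \ell'$, matching the definition's format.

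For the pseudorandom ciphertext property, I would proceed via a hybrid argument over the at most polynomially many encryption queries. In the first hybrid, replace $F_k$ with a truly random function $R: \zo{n} \to \zo{\ell'}$; this step loses at most the PRF distinguishing advantage (against QPT adversaries). In the second hybrid, observe that as long as the nonces $r$ sampled across queries are all distinct, each ciphertext $(r, R(r) \oplus m)$ is distributed as a uniformly random string in $\zo{\ell}$, exactly matching the $\coin = 1$ distribution. A collision among the polynomially many $r$'s occurs only with probability $\poly(\secp)/2^n$, which is negligible provided $n = \omega(\log\secp)$. Combining these steps gives $\advb{\SKE,\qA}{pr}{ct}(\secp) = \negl(\secp)$.

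The argument is entirely routine; the only thing worth flagging is that the reduction to PRF security must handle a QPT adversary making only classical encryption queries, so no quantum-query PRF is needed, and the OWF-to-PRF chain goes through without modification once post-quantum OWFs are assumed.
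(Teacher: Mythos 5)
The paper states \cref{thm:pseudorandom_ske} without proof, treating it as a standard fact, so there is no ``paper's proof'' to compare against. Your construction (sample a fresh nonce $r$, output $(r,\, F_k(r)\oplus m)$) with the routine two-step hybrid --- swap the PRF for a random function, then condition on no nonce collisions so that each ciphertext is genuinely uniform --- is precisely the canonical instantiation of this folklore claim, and your argument, including the remark that only classical encryption queries are made so ordinary post-quantum PRF security suffices, is correct.
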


\subsection{Construction}
Since the plaintext space of AD-SIM secure ABE can be expanded by parallel repetition using different instances, we focus on constructing a scheme with the plaintext space $\bit$.

We construct $\CPABE=(\Setup,\KG,\Enc,\Dec)$ using the following tools.
\begin{itemize}
\item Compute-and-compare obfuscation $\CCObf$ with the simulator $\CC.\Sim$.
\item Ciphertext-policy ABE $\INDCPABE=(\IND.\Setup,\IND.\KG,\IND.\Enc,\IND.\Dec)$.
\item Pseudorandom ciphertext SKE $\SKE=(\SKE.\E,\SKE.\D)$.
\end{itemize}
The description is as follows.

\begin{description}
 \item[$\Setup(1^\secp)$:] $ $
 \begin{itemize}
\item Generate $(\ind.\mpk,\ind.\msk)\la\IND.\Setup(1^\secp)$.
\item Generate $R\la\bit^\secp$.
\item Output $\mpk\seteq(R,\ind.\mpk)$ and $\msk\seteq(R,\ind.\msk)$.
 \end{itemize}
 \item[$\KG(\msk,x)$:] $ $
\begin{itemize}
\item Parse $\msk\seteq(R,\ind.\msk)$.
\item Generate $c\la\bit^\ctlen$.
\item Generate $\ind.\sk_{(x,c)}\la\IND.\KG(\ind.\msk,x\|c)$.
\item Output $\sk_x\seteq\ind.\sk_{(x,c)}$.
\end{itemize}
 \item[$\Enc(\mpk,C,x,\msg\in\bit)$:] $ $
 \begin{itemize}
 \item Parse $\mpk\seteq(R,\ind.\mpk)$.
\item If $\msg=1$, generate $\ct$ as follows.
 \begin{itemize}
 \item Generate $k\la\bit^\secp$ and $\lock\la\bit^\secp$.
 \item Generate $\ind.\ct\la\IND.\Enc(\ind.\mpk,G[C,k,R],\lock)$, where $G[C,k,R]$ takes as input $(x,c)$ and output $1$ if and only if $C(x)=1$ and $\SKE.\D(k,c)\ne R$.
 \item Generate $\ct\la\CCObf(1^\secp,\IND.\Dec(\cdot,\ind.\ct),\lock,R)$.
 \end{itemize}
\item If $\msg=0$, generate $\ct\la\CC.\Sim(1^\secp,\pp_{\IND.\Dec},\abs{R})$.
\item Output $\ct$.
\end{itemize}
\item[$\Dec(\sk_x,\ct)$:] $ $
Output $1$ if $\ct(\sk_x)=R$ and $0$ otherwise.
\end{description}

\begin{theorem}
Let $D=\{D_{\mpk,C}\}$ be a family of distributions where each $D_{\mpk,C}$ outputs $(\IND.\Dec(\cdot,\ind.\ct),\lock,\allowbreak\qaux \seteq k)$ generated as follows.
\begin{itemize}
\item Parse $\mpk \seteq (\ind.\mpk,R)$
 \item Generate $k\la\bit^\secp$ and $\lock\la\bit^\secp$.
 \item Generate $\ct\la\IND.\Enc(\ind.\mpk,G[C,k,R],0^\secp)$.
 \item Output $(\IND.\Dec(\cdot,\ind.\ct),\lock,\qaux \seteq k)$.
\end{itemize}
If $\CCObf$ is secure with respect to $D$, $\INDCPABE$ is AD-IND secure, and $\SKE$ is a pseudorandom ciphertext SKE scheme, then $\CPABE$ satisfies AD-SIM security.
\end{theorem}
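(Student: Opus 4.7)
The plan is to define a simulator $\qSim = (\qSimEnc, \qSimKG)$ and then argue indistinguishability via a hybrid argument that differs only in the case $m = 0$. Concretely, $\qSimEnc(\mpk, C)$ generates the challenge exactly as the real encryption for $m = 1$: sample $k, \lock \gets \bit^\secp$, produce $\ind.\ct \gets \IND.\Enc(\ind.\mpk, G[C, k, R], \lock)$, and return $\ct \gets \CCObf(1^\secp, \IND.\Dec(\cdot, \ind.\ct), \lock, R)$, while stashing $(k, \lock)$ as state. The key-generation simulator $\qSimKG(\msk, \state, x, m)$ outputs a normal ABE key with a uniformly random $c$ whenever $C(x) = 0$ or whenever $C(x) = 1$ and $m = 1$; in the single remaining case $C(x) = 1$ and $m = 0$, it samples $c \gets \SKE.\E(k, R)$ before handing back $\ind.\sk_{(x,c)}$. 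With this choice, a post-challenge key $\sk_{(x,c)}$ with $C(x) = 1$ decrypts the simulated ciphertext to the correct value: for $m = 1$ the random $c$ yields $\SKE.\D(k, c) \ne R$ with overwhelming probability, so $G[C, k, R](x, c) = 1$, the ABE decryption recovers $\lock$, and the CCObf outputs $R$; for $m = 0$ the chosen $c$ forces $G[C, k, R](x, c) = 0$, the ABE decryption misses $\lock$, and the CCObf returns $0$. For $m = 1$ the real and simulated joint distributions are literally identical, so the entire problem reduces to $m = 0$.

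For the $m = 0$ case I would walk through the following hybrids. $H_0$ is the real game with $\coin = 0, m = 0$: $\ct \gets \CC.\Sim(1^\secp, \pp_{\IND.\Dec}, |R|)$ and every key uses a fresh uniformly random $c$. In $H_1$ I introduce a fresh $k \gets \bit^\secp$ and, for every post-challenge query with $C(x) = 1$, replace $c$ by $c \gets \SKE.\E(k, R)$; indistinguishability from $H_0$ follows from the pseudorandom-ciphertext property of $\SKE$ via a reduction that relays each adversarial key query for which $C(x) = 1$ as an SKE encryption-oracle query on the (public) message $R$. In $H_2$ I swap the challenge for $\ct \gets \CCObf(\IND.\Dec(\cdot, \ind.\ct'), \lock, R)$ with $\ind.\ct' \gets \IND.\Enc(\ind.\mpk, G[C, k, R], 0^\secp)$ and $\lock \gets \bit^\secp$; indistinguishability from $H_1$ follows from CCObf distributional indistinguishability with respect to the family $D$, whose sample $(P, \lock, \aux = k)$ the reduction embeds into the outer game (notably using $\aux = k$ to keep producing $c \gets \SKE.\E(k, R)$ in subsequent key queries). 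Finally $H_3$ changes the inner ABE plaintext from $0^\secp$ to $\lock$, matching the simulated distribution; indistinguishability from $H_2$ is the AD-IND security of $\INDCPABE$, after verifying that the outer adversary never triggers a disqualifying key query: pre-challenge queries have $C(x) = 0$ so $G[C, k, R](x, c) = 0$; post-challenge queries with $C(x) = 0$ are immediate; and post-challenge queries with $C(x) = 1$ have $c$ satisfying $\SKE.\D(k, c) = R$ by the $H_1$ rewiring, again forcing $G = 0$.

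The main obstacle will be coordinating the ABE reduction at the $H_2 \to H_3$ step: the reduction must sample $k$ up front (so that post-challenge $c$ values can be formed as $\SKE.\E(k, R)$), embed the outer challenge policy $G[C, k, R]$ into the single encryption query it makes to the ABE challenger, and ensure that every $(x, c)$ it relays as an ABE key query is admissible because $G[C, k, R](x, c) = 0$. A secondary subtlety is that the CCObf hop must deliver $\aux = k$ to the reduction so that the subsequent SKE ciphertexts $c \gets \SKE.\E(k, R)$ can still be produced; this is precisely why the theorem defines the family $D$ so that each sample's auxiliary output is $k$. Once these reductions are set up carefully, the correctness of the simulated view on all queried keys follows directly from the functionalities of CCObf, $\IND.\Dec$, and $\SKE.\D$, and chaining $H_0 \approx H_1 \approx H_2 \approx H_3$ yields that the real and simulated games are computationally indistinguishable for every adversarial choice of $m$.
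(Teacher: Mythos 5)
Your proof is correct and takes essentially the same approach as the paper: identical simulator (with $\qSimEnc$ being the real $m=1$ encryption and $\qSimKG$ drawing $c \gets \SKE.\E(k,R)$ when $C(x)=1$ and $m=0$), identical hybrids, and identical reductions to SKE pseudorandomness, CCObf security, and AD-IND security of the inner ABE. The only cosmetic difference is the ordering: the paper walks from simulated ($\hybi{1}$) to real ($\hybi{4}$), while you walk from real ($H_0$) to simulated ($H_3$), which is the same argument run in reverse.
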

If our goal is SEL-SIM secure CP-ABE, we can use SEL-IND secure CP-ABE as a building block.
\begin{proof}
We first provide the construction of the simulator $\qSim=(\qSimEnc,\qSimKG)$.
We see that $\qSimEnc(\mpk,C)$ is exactly the same as $\Enc(\mpk,C,\msg=1)$.
\begin{description}
\item[$\qSimEnc(\mpk,C)$:] $ $
 \begin{itemize}
 \item Parse $\mpk\seteq (R,\ind.\mpk)$.
 \item Generate $k\la\bit^\secp$ and $\lock\la\bit^\secp$.
 \item Generate $\ct\la\IND.\Enc(\ind.\mpk,G[C,k,R],\lock)$, where $G[C,k,R]$ takes as input $(x,c)$ and output $1$ if and only if $C(x)=1$ and $\D(k,c)\ne R$.
 \item Generate $\ct\la\CCObf(1^\secp,\Dec(\cdot,\ct),\lock,R)$.
 \item Output $\ct$ and $\state\seteq k$.
 \end{itemize}
\item[$\qSimKG(\msk,\state, x,\msg\in\bit)$:] $ $
 \begin{itemize}
 \item Parse $\msk\seteq(R,\ind.\msk)$ and $\state\seteq k$.
 \item Generate $c\la\bit^\ctlen$ if $\msg=1$ and $c\la\SKE.\E(k,R)$ otherwise.
\item Generate $\sk_{(x,c)}\la\IND.\KG(\ind.\msk,x\|c)$.
\item Output $\sk_x\seteq\sk_{(x,c)}$.
 \end{itemize}
\end{description}

Let $\qA$ be any QPT adversary.
Conditioned that an adversary $\qA$ outputs $\msg=1$, $\expb{\CPABE,\qSim,\qA}{ad}{sim}(\secp, 0)$ and $\expb{\CPABE,\qSim,\qA}{ad}{sim}(\secp,1)$ are exactly the same experiment.
Thus, it is sufficient to prove $\Pr[1\la \expb{\CPABE,\qSim,\qA}{ad}{sim}(\secp, 0)]$ and $\Pr[1\la \expb{\CPABE,\qSim,\qA}{ad}{sim}(\secp, 1)]$ are negligibly close conditioned that $\msg=0$.
We prove this by using the following sequence of experiments.
\begin{description}
\item{$\hybi{1}$:}This is $\expb{\CPABE,\qSim,\qA}{ad}{sim}(\secp, 1)$ where $\qA$ outputs $\msg=0$. The detailed description is as follows.
\begin{enumerate}
    \item The challenger computes $(\ind.\mpk,\ind.\msk) \gets \IND.\Setup(1^\secp)$ and $R\la\bit^\secp$, and sends $\mpk\seteq(R,\ind.\mpk)$ to $\qA$.
    \item $\qA$ can get access to the following oracle.
            \begin{description}
            \item[$\Oracle{\KG,1}(x)$:] Given $x$, it generates $c\la\bit^\ctlen$ and returns $\sk_x\la\IND.\KG(\ind.\msk,x\|c)$. 
            \end{description}
    \item $\qA$ sends $C$ and $\msg=0$ to the challenger, where $C$ satisfies $C(x)=0$ for all $x$ queried by $\qA$ in the previous step. The challenger does the following.
 \begin{itemize}
 \item Generate $k\la\bit^\secp$ and $\lock\la\bit^\secp$.
 \item Generate $\ind.\ct\la\IND.\Enc(\ind.\mpk,G[C,k,R],\lock)$, where $G[C,k,R]$ takes as input $(x,c)$ and output $1$ if and only if $C(x)=1$ and $\SKE.\D(k,c)\ne R$.
 \item Return $\ct\la\CCObf(1^\secp,\IND.\Dec(\cdot,\ind.\ct),\lock,R)$.
 \end{itemize}
    \item$\qA$ can get access to the following oracle.
    \begin{description}
            \item[$\Oracle{\KG,2}(x)$:] Given $x$, if $C(x)=0$, it returns $\sk_x\la\IND.\KG(\ind.\msk,x\|c)$, where $c\la\bit^\ctlen$. Otherwise, it returns $\sk_x$ generated as follows.
                 \begin{itemize}
 \item Generate $c\la\SKE.\E(k,R)$.
\item Returns $\sk_x\la\IND.\KG(\ind.\msk,x\|c)$.
    \end{itemize}
            \end{description}
    \item $\qA$ outputs $\coin^\prime\in \bit$.
\end{enumerate}
\end{description}

\begin{description}
\item[$\hybi{2}$:]This is the same as $\hybi{1}$ except that $\ind.\ct$ is generated as $\ind.\ct\la\IND.\Enc(\ind.\mpk,G[C,k,R],0^\secp)$.
\end{description}

In $\hybi{1}$ and $\hybi{2}$, $\qA$ can query $x$ such that $C(x)=1$ to $\Oracle{\KG,2}$.
However, for such query $x$, $\Oracle{\KG,2}$ returns $\sk_x\la\IND.\KG(\ind.\msk,x\|c)$, where $c\la\SKE.\E(k,R)$.
We see that if $c\la\SKE.\E(k,R)$, $G[C,k,R](x\|c)=0$ since $\SKE.\D(k,c)=R$.
Thus, $\qA$ can obtain decryption keys only for an attribute $x\|c$ such that $G[C,k,R](x\|c)=0$ for the policy $G[C,k,R]$.
Then, from the AD-IND security of $\CPABE$, we have $\abs{\Pr[\hybi{1}=1]-\Pr[\hybi{2}=1]}=\negl(\secp)$.
\begin{description}
\item[$\hybi{3}$:]This is the same as $\hybi{2}$ except that $\ct$ is generated as $\ct\la\CC.\Sim(1^\secp,\pp_{\IND.\Dec},\abs{R})$.
\end{description}

From the security of $\CCObf$, we have $\abs{\Pr[\hybi{2}=1]-\Pr[\hybi{3}=1]}=\negl(\secp)$.

\begin{description}
\item[$\hybi{4}$:]This is the same as $\hybi{3}$ except that $\Oracle{\KG,2}$, given an input $x$, $c$ is generated as $c\la\bit^\ctlen$ even when $C(x)=1$.
\end{description}

From the security of $\SKE$, we have $\abs{\Pr[\hybi{3}=1]-\Pr[\hybi{4}=1]}=\negl(\secp)$.

$\hybi{4}$ is exactly $\expb{\CPABE,\qSim,\qA}{ad}{sim}(\secp, 0)$ where $\qA$ outputs $\msg=0$.
Thus, from the above discussions, we have  $\abs{\Pr[1\la \expb{\CPABE,\qSim,\qA}{ad}{sim}(\secp, 0)]-\Pr[1\la \expb{\CPABE,\qSim,\qA}{ad}{sim}(\secp, 1)]}=\negl(\secp)$ conditioned that $\qA$ outputs $\msg=0$.
This completes the proof.
\end{proof}


\section{Succinct CPFE}\label{sec:succinct_CPFE}

We review the definition of hash encryption introduced in \cite{PKC:DGHM18}.
\begin{definition}[Hash Encryption]\label{def-he}
A hash encryption scheme $\HE$ is a four tuple $(\HKG, \allowbreak \Hash, \allowbreak \HEnc, \allowbreak \HDec)$ of PPT algorithms.
\begin{itemize}
\item $\HKG$ is the key generation algorithm that takes as input a security parameter $1^\lambda$ and the input-length $1^\inplen$.
Then, it outputs a hash key $\hk$.

\item $\Hash$ is the (deterministic) hashing algorithm that takes a hash key $\hk$ and a string $x\in \bin^\inplen$ as input, and outputs a hash value $h \in \bin^\lambda$.

\item $\HEnc$ is the encryption algorithm that takes a hash key $\hk$, a triple $(h, j, \alpha) \in \bin^{\lambda} \times [\inplen] \times \bin$, and a message $m\in\bin^*$ as input, and outputs a ciphertext $\hct$.

\item $\HDec$ is the (deterministic) decryption algorithm that takes a hash key $\hk$, a string $x\in\bin^\inplen$, and a ciphertext $\ct$ as input, and outputs a message $m$ which could be the special invalid symbol $\bot$.
\end{itemize}

We require the following properties.

\begin{description}

\item[Correctness] Let $\hk\la\HKG(1^\lambda,1^\inplen)$.
We have $\HDec(\hk, x, \HEnc(\hk, (\Hash(\hk, x),j,x[j]),m)) = m$ for all strings $x = (x[1], \dots, x[\inplen]) \in \bin^\inplen$, positions $j\in[\inplen]$, and plaintext $m\in\bit^*$.
and all messages $m$.

\item[Security]
Consider the following security experiment $\expa{\HE,\qA}{he}(\secp,\coin)$ between a challenger and an adversary $\qA$.
\begin{enumerate}
	\item $\qA$ sends $x=(x[1], \dots, x[\inplen]) \in\bin^\inplen$ to the challenger.
\item The challenger generates $\hk \la \HKG(1^\lambda,1^\inplen)$ and sends $\hk$ to $\qA$.

\item $\qA$ sends a position $j\in[\inplen]$ and a pair of messages $(m_0,m_1)$ of the same length to the challenger.
The challenger computes $h \la \Hash(\hk,x)$ and $\hct \la \HE(\hk, (h,j,1 \oplus x_j),m_\coin)$,
and returns $\hct$ to $\qA$.
\item $\qA$ outputs $\coin' \in \bin$.
\end{enumerate}
For any QPT $\qA$, we have
\begin{align}
\adva{\HE, \qA}{he}(\lambda) \seteq \abs{\Pr[\expa{\HE,\qA}{he}(\secp,0)=1]-\Pr[\expa{\HE,\qA}{he}(\secp,1)=1]}= \negl(\secp).
\end{align}
\end{description}

\end{definition}

\begin{theorem}[\cite{PKC:DGHM18}]\label{thm:hash_encryption}
If the LWE or exponentially-hard LPN assumption holds, there exists a hash encryption.
\end{theorem}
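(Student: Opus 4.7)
The statement is cited directly from \cite{PKC:DGHM18}, so formally the proof is a pointer to that paper. Nevertheless, I will sketch the approach I would take to reprove it from scratch so that the construction is self-contained.

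The plan is to instantiate the hash key with an SIS-style compressing matrix. Concretely, I would set $\hk = \mat{A} \in \Z_q^{\secp \times \inplen}$ sampled uniformly at random and define $\Hash(\hk, x) = \lfloor \mat{A} x \rfloor_p$ for $x \in \bit^\inplen$, where $\lfloor \cdot \rfloor_p$ denotes LWR-style rounding to a small modulus $p$, so the digest $h$ lives in $\zo{\secp}$ (up to constants). Compression is immediate from the dimensions, and the output length is independent of $\inplen$, as demanded by the succinct-key requirement used in Theorem~\ref{thm:single_key_CPFE_succinct_key}.

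For the encryption side I would follow the DGHM blueprint: given $(h, j, \alpha)$, produce an LWE ciphertext under a public matrix $\mat{B}_{h, j, \alpha}$ that is designed so that any $x$ with $\Hash(\hk, x) = h$ and $x[j] = \alpha$ yields a short vector $\mat{t}_x$ satisfying $\mat{B}_{h, j, \alpha}^\top \mat{t}_x \approx 0$, and hence can be used to decrypt. The standard way to enforce this is to build $\mat{B}_{h, j, \alpha}$ as a GSW-style garbling of the linear equation ``the $j$-th coordinate of the preimage equals $\alpha$ and $\mat{A} x$ rounds to $h$.'' The decryption algorithm $\HDec(\hk, x, \hct)$ uses $x$ as the witness, applies the homomorphic evaluation to collapse the garbling, and recovers the plaintext via rounded inner product. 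Correctness reduces to bounding LWE error growth across the depth of the equation check, which is logarithmic in $\inplen$, so standard parameter choices suffice.

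Security is where the LWE (or exponentially hard LPN) assumption enters. For each challenge $(x, j, 1 \oplus x[j])$ the targeted bit constraint is \emph{not} satisfied by the honest preimage, and I would argue that the corresponding ciphertext is pseudorandom by a standard hybrid: first replace the LWE samples inside $\mat{B}_{h, j, 1 \oplus x[j]}$ by uniform, then invoke the leftover hash lemma / noise-flooding to hide the message. The main obstacle I anticipate is handling the circularity that shows up because $\hk$ is reused across $h$, $j$, $\alpha$; DGHM resolve this by a delicate partitioning / puncturing argument on the matrix $\mat{A}$, and that is the step I would expect to consume most of the proof effort. For the LPN variant, the compression trick requires the exponentially hard regime so that one can guess a short preimage block and still have negligible advantage, and I would adapt their block-decomposition strategy directly.
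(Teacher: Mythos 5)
The paper does not prove this theorem; it is cited wholesale from \cite{PKC:DGHM18}, and you correctly note this. Nevertheless, your reproof sketch diverges substantially from the actual construction and, I think, has genuine gaps rather than just being ``a different route.'' The DGHM hash encryption from LWE is a dual-Regev-style scheme with no homomorphic evaluation at all. Concretely, $\hk = \mat{A} = [\mat{a}_1 \mid \cdots \mid \mat{a}_\inplen] \in \Z_q^{\secp \times \inplen}$ is uniform and $\Hash(\hk, x) = \mat{A}x$ (with a polynomial modulus so the digest fits in $O(\secp)$ bits; no LWR rounding is needed). Encryption to $(h, j, \alpha)$ samples an LWE secret $\vec{s}$ and outputs $c_i = \vec{s}^\top \mat{a}_i + e_i$ for $i \neq j$ together with $c' = \vec{s}^\top(h - \alpha \mat{a}_j) + e' + m\lfloor q/2 \rfloor$. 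Decryption computes $c' - \sum_{i \neq j} x[i] c_i$, which is $\approx m\lfloor q/2\rfloor$ exactly when $\mat{A}x = h$ and $x[j] = \alpha$. This uses only linearity of the hash. The security reduction is one LWE hop: with $\alpha' = 1 \oplus x[j]$ one has $h - \alpha'\mat{a}_j - \sum_{i\neq j} x[i]\mat{a}_i = \pm \mat{a}_j$, so $c'$ contains an LWE sample for a column that is otherwise unused in the ciphertext, and once $(\mat{A}, \vec{s}^\top\mat{A}+\vec{e})$ is swapped for uniform, $c'$ becomes uniform. (Noise flooding enters only to make the simulated error distribution match; it is not what hides $m$.)

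By contrast, your proposal to build a public matrix $\mat{B}_{h,j,\alpha}$ as a ``GSW-style garbling'' of the constraint and decrypt via homomorphic evaluation is essentially the BGG$^+$-style machinery for ABE for circuits, which is vastly heavier than what is needed and not obviously correct as sketched: GSW homomorphism acts on \emph{ciphertexts}, not on the public matrix, so the hybrid step ``replace the LWE samples inside $\mat{B}_{h,j,1\oplus x[j]}$ by uniform'' does not type-check, and the leftover hash lemma is not what masks $m$ here. More importantly, the obstacle you identify --- a ``circularity'' because $\hk$ is reused across $(h,j,\alpha)$, requiring a ``delicate partitioning/puncturing argument on $\mat{A}$'' --- does not exist in the actual scheme: $\hk$ is a single random matrix with no associated secret, the security experiment in the paper's \cref{def-he} is single-challenge and selective in $x$ (the adversary commits to $x$ \emph{before} seeing $\hk$), and the reduction is direct, with no partitioning. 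So while the compressing-SIS starting point is right, the proposed route misidentifies both the mechanism of decryption and the source of security.
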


We present a CPFE scheme that satisfies 1-bounded security~\cref{def:CPFE_security} and the succinct key property~\cref{def:succinct_key}.
\paragraph{Building blocks.}
\begin{itemize}
\item Hash encryption $\HE=(\HKG,\Hash,\HEnc,\HDec)$.
\item Grabled circuit $(\GC.\Garble,\GC.\Eval,\GC.\Sim)$.
\end{itemize}

Our CPFE scheme $\CPFE$ is as follows.
\begin{description}
 \item[$\Setup(1^\secp,x)$:] $ $
 \begin{itemize}
\item Generate $\hk \gets \HKG(1^\secp,1^\inplen)$.
\item Compute $h_x \seteq \Hash(\hk,x)$
\item Output $\MPK \seteq (\hk,h_x)$ and $\sk_x \seteq x$.
 \end{itemize}
 \item[$\Enc(\MPK,C)$:] $ $
 \begin{itemize}
 \item Parse $\MPK = (\hk,h_x)$.
 \item Generate $(\tlC,\setbk{\lbl_{i,\beta}}_{i\in [\inplen], \beta\in\zo{}})\gets \GC.\Garble(1^\secp,C)$.
 \item Generate $\hct_{i,\beta} \gets \HEnc(\hk,(h_x,i,\beta),\lbl_{i,\beta})$ for $i\in [\inplen]$ and $\beta\in\zo{}$.
 \item Output $\ct \seteq (\tlC,\setbk{\hct_{i,\beta}}_{i\in[\inplen],\beta\in\zo{}})$.
 \end{itemize}
\item[$\Dec(\sk_x,\ct)$:] $ $
\begin{itemize}
\item Parse $\sk_x = x$ and $\ct = (\tlC,\setbk{\hct_{i,\beta}}_{i\in[\inplen],\beta\in\zo{}})$.
\item Compute $\lbl_{i,x[i]} \seteq \HDec(\hk,x,\hct_{i,x[i]})$ for $i\in[\inplen]$.
\item Output $y^\prime \seteq \GC.\Eval(\tlC,\setbk{\lbl_{i,x[i]}}_{i\in[\inplen]})$.
\end{itemize}
\end{description}

\begin{theorem}\label{thm:CPFE_from_HE_GC}
If $\HE$ is a secure hash encryption and $\GC$ is a secure garbling, then $\CPFE$ is 1-bounded secure.
\end{theorem}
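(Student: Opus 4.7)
The plan is to prove 1-bounded security by a standard hybrid argument that first uses hash encryption security to strip off the ``wrong'' labels from the ciphertext, and then uses garbled circuit simulation security to remove the dependence on the actual circuit $C_\coin$, leaving only a distribution that depends on $C_\coin(x)$. Since the adversary's constraint forces $C_0(x) = C_1(x)$, the two worlds become identical in the middle hybrid.

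More concretely, fix the adversary's choice of $x$ and $(C_0, C_1)$. I will define the following sequence of hybrids, starting from $\expt{\CPFE,\qA}{1\textrm{-}bounded}(\secp, \coin)$ with the real encryption of $C_\coin$.
\begin{itemize}
\item \emph{Hybrids $\hybij{1}{i}$ for $i = 0, 1, \dots, \inplen$.} In $\hybij{1}{i}$, for each $j \le i$, replace $\hct_{j, 1 \oplus x[j]} \gets \HEnc(\hk,(h_x, j, 1 \oplus x[j]), \lbl_{j, 1 \oplus x[j]})$ with $\HEnc(\hk, (h_x, j, 1 \oplus x[j]), 0^{|\lbl_{j, 1 \oplus x[j]}|})$. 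Indistinguishability of $\hybij{1}{i-1}$ and $\hybij{1}{i}$ follows directly from the security of $\HE$: the reduction sends $x$ to the $\HE$ challenger, receives $\hk$, generates the garbled circuit locally to know both labels $\lbl_{i, 0}$ and $\lbl_{i, 1}$, forwards $(i, \lbl_{i, 1 \oplus x[i]}, 0^{|\lbl|})$ as the challenge messages, and embeds the received ciphertext as $\hct_{i, 1 \oplus x[i]}$. The remaining $\hct_{j, \beta}$'s are generated honestly by the reduction. Note that the adversary is given only $\sk_x = x$, matching the shape of the $\HE$ challenge tag $(h_x, i, 1 \oplus x[i])$.
\item \emph{Hybrid $\hybi{2}$.} Replace $(\tlC, \{\lbl_{i, x[i]}\}_{i \in [\inplen]})$ with $\GC.\Sim(1^\secp, 1^{|C_\coin|}, C_\coin(x))$. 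Indistinguishability from $\hybij{1}{\inplen}$ follows from garbled circuit simulation security, since at this point the ciphertext exposes only the labels $\lbl_{i, x[i]}$ corresponding to the evaluation path on $x$ (the other labels have been replaced by $0$'s in the previous step).
\end{itemize}
Because $C_0(x) = C_1(x)$ and $|C_0| = |C_1|$ (w.l.o.g., by padding), the distribution of $\hybi{2}$ in the $\coin = 0$ and $\coin = 1$ worlds is identical. Reversing the hybrids on the $\coin = 1$ side completes the argument.

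The only subtle point is ensuring that the $\HE$ reduction can consistently simulate the decryption key $\sk_x = x$ and the master public key $\MPK = (\hk, h_x)$ together with the ciphertext: this works out because the $\HE$ security game lets the adversary declare $x$ first and then receive $\hk$, so the reduction can compute $h_x = \Hash(\hk, x)$ itself and hand $(\hk, h_x, x)$ to $\qA$ exactly as in $\CPFE$. The garbled-circuit hop is standard and uses the fact that all ``off-path'' label ciphertexts have already been severed from the actual labels, which is the main reason for doing the $\HE$ hops first. I do not expect a fundamental obstacle here; the argument is a direct transcription of the classical Sahai--Seyalioglu single-key FE proof with $\HE$ taking the role of PKE, and it carries over to QPT adversaries because both $\HE$ security and $\GC$ simulation security are assumed to hold against QPT distinguishers.
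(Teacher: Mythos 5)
Your proposal is correct and follows essentially the same route as the paper's own proof in \cref{sec:succinct_CPFE}: a sequence of $\HE$ hops replacing the off-path label ciphertexts $\hct_{i,1\oplus x[i]}$ with encryptions of zeros, followed by a single $\GC$ simulation hop, after which the challenge bit is information-theoretically hidden because $C_0(x)=C_1(x)$. The only cosmetic difference is that the paper runs one hybrid chain with a random $\coin$ and shows the final hybrid yields success probability $\frac{1}{2}$, whereas you run the chain once for each value of $\coin$ and meet in the middle; these are equivalent formulations of the same argument.
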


\begin{proof}
We define a sequence of games to prove the theorem.
\begin{description}
\item[$\hybi{0}$:] This is $\expa{\HE,\qA}{he}(\secp,\coin)$ where $\coin\la\bit$.
\item[$\hybij{0}{j}$:] This is the same as $\hybi{0}$ except that for $i \in [j]$ we generate $\hct_{i,1\xor x[i]} \gets \HEnc(\hk,(h_x,i,1\xor x[i]),0^{\abs{\lbl}})$ instead of generating $\hct_{i,1\xor x[i]} \gets \HEnc(\hk,(h_x,i,1\xor x[i]),\lbl_{i,1\xor x[i]})$.
\item[$\hybi{1}$:] This is the same as $\hybij{0}{n}$.
\item[$\hybi{2}$:] This is the same as $\hybi{1}$ except that we generate $(\tlC_{\coin},\setbk{\lbl_{i,x[i]}}_{i\in[\inplen]}) \gets \GC.\Sim(1^\secp,1^{\abs{C}},C_{\coin}(x))$ instead of generating $(\tlC_{\coin},\setbk{\lbl_{i,\beta}}_{i\in[\inplen],\beta\zo{}}) \gets \GC.\Garble(1^\secp,C_{\coin})$.
\end{description}
We define $\SUC_i$ (resp. $\SUC_i^j$) be the event that $\qA$ outputs $\coin^\prime=\coin$ in $\hybi{i}$ (resp. $\hybij{i}{j}$).

We have $\hybij{0}{0}=\hybi{0}$.
It holds that $\abs{\Pr[\SUC_0^{j-1}] - \Pr[\SUC_0^{j}]}=\negl(\secp)$ for all $j \in [n]$ due to the security of $\HE$.

Due to the security of $\GC$, it holds that $\abs{\Pr[\SUC_1]-\Pr[\SUC_2]}=\negl(\secp)$ since $\setbk{\lbl_{i,1\xor x[i]}}_{i\in [n]}$ is never used in $\hybi{1}$.

It trivially holds $\Pr[\SUC_2]=\frac{1}{2}$ since $C_0(x)=C_1(x)$ and $\tlC_{\coin}$ is generated by $\GC.\Sim(1^\secp,1^{\abs{C}},C_{\coin}(x))$ in $\hybi{2}$. This complete the proof.
\end{proof}

\begin{theorem}\label{thm:succinct_key_from_HE}
$\CPFE$ satisifes the succinct key property.
\end{theorem}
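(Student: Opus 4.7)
The plan is to verify the two conditions of \cref{def:succinct_key} by direct inspection of the construction of $\CPFE$. Both conditions follow immediately from the syntax of the hash encryption scheme $\HE$ used as a building block.

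First, I would identify the algorithms $\HKG$ and $\Hash$ required by \cref{def:succinct_key} with the key generation algorithm and the hashing algorithm of $\HE$, respectively. By unrolling the definition of $\CPFE.\Setup(1^\secp,x)$, we see that it runs $\hk \gets \HKG(1^\secp,1^{|x|})$, then computes $h_x \seteq \Hash(\hk,x)$, and finally outputs $\MPK \seteq (\hk,h_x)$ together with the decryption key $\sk_x \seteq x$. This is exactly the syntactic form required by the first bullet of \cref{def:succinct_key}.

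Second, I would verify the length condition on the hash output. By \cref{def-he}, the hashing algorithm $\Hash$ of any hash encryption scheme outputs a hash value $h \in \bin^\secp$, independently of the input length $\inplen$. Hence $|h_x| = \secp$ regardless of $|x|$, which establishes the second bullet of \cref{def:succinct_key}. There is no real obstacle here: the construction of $\CPFE$ was designed precisely so that $\MPK$ consists only of the (independent-of-$|x|$-length) hash key plus a $\secp$-bit hash, and the decryption key can be taken to be $x$ itself because hash encryption ciphertexts are decrypted using $x$ directly (rather than any auxiliary trapdoor). Thus the proof reduces to these two observations and no separate argument is needed.
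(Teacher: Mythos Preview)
Your proposal is correct and takes essentially the same approach as the paper. The paper's proof is a one-line observation that $\Setup$ runs $\HKG$ and $\Hash$ and outputs $\MPK = (\hk,h)$ and $\sk_x = x$; you simply unpack this in more detail and additionally make explicit the length condition $|h_x| = \secp$ from \cref{def-he}, which the paper leaves implicit.
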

\begin{proof}
It trivially holds since the setup algorithm $\Setup$ runs $\HKG(1^\secp,1^n)$ and $\Hash(\hk,x)$, and outputs $\MPK \seteq (\hk,h)$ and $\sk_x\seteq x$.
\end{proof}

We complete the proof of~\Cref{thm:single_key_CPFE_succinct_key} by~\cref{thm:hash_encryption,thm:CPFE_from_HE_GC,thm:succinct_key_from_HE}.


\section{Injective Commitment with Equivocal Mode}\label{sec:Naor_com}
We present a variant of Naor's commitment where we use a commitment key instead of receiver's first message.
We use injective PRG $\PRG: \zo{\secp} \ra \zo{3\secp}$, which can be constructed from injective OWF (with evaluation key generation algorithm).\footnote{See a remark by Kitagawa and Nishimaki~\cite[Section B.1 in the full version]{EC:KitNis22} for ``injective OWF with evaluation key generation algorithm''.}
Let $\cM\seteq \zo{\ell}$ and $\cR \seteq \zo{\secp\cdot \ell}$.
\begin{description}
\item[$\Setup(1^\secp)$:]$ $
\begin{itemize}
 \item Choose $s_i\chosen \zo{3\secp}$ for $i\in [\ell]$.
 \item Output $\ck \seteq (s_1,\ldots,s_\ell)$.
 \end{itemize}

\item[$\Commit(\ck,m\in \zo{\ell})$:] $ $
\begin{itemize}
    \item Parse $\ck = (s_1,\ldots,s_\ell)$.
\item Choose $r_i \chosen \zo{\secp}$ for $i\in [\ell]$.
\item Compute $x_i \seteq \PRG(r_i)$ for $i\in [\ell]$.
\item Set $y_i \seteq x_i$ if $m_i =0$, otherwise $y_i \seteq x_i \xor s_i$ for $i\in [\ell]$.
\item Output $\com \seteq (y_1,\ldots,y_\ell)$.
\end{itemize}

\item[$\EqSetup(1^\secp)$:] $ $
\begin{itemize}
 \item Choose $\tlr_{b,i} \chosen \zo{\secp}$ for $i\in [\ell]$ and $b\in\zo{}$.
 \item Set $s^\ast_i \seteq \PRG(\tlr_{0,i}) \xor \PRG(\tlr_{1,i})$ for $i\in[\ell]$.
 \item Set $y^\ast_i \seteq \PRG(\tlr_{0,i})$ for $i\in [\ell]$.
 \item Output $\ck^\ast \seteq (s^\ast_1,\ldots,s^\ast_\ell)$, $\com^\ast \seteq (y^\ast_1,\ldots,y^\ast_\ell)$, and $\td \seteq \setbk{\tlr_{b,i}}_{i\in[\ell],b\in\zo{}}$.
 \end{itemize}
\item[$\Open(\td,m\in\zo{\ell},\com^\ast)$:] $ $
\begin{itemize}
\item Parse $\td = \setbk{\tlr_{b,i}}_{i\in[\ell],b\in\zo{}}$ and $\com^\ast \seteq (y^\ast_1,\ldots,y^\ast_\ell)$.
\item Set $r^\ast_i \seteq \tlr_{0,i}$ if $m_i =0$, otherwise $r^\ast_i \seteq \tlr_{1,i}$.
\item Output $r^\ast \seteq (r^\ast_1,\ldots,r^\ast_\ell)$.
\end{itemize}
\end{description}

We can verify that a message $m$ and randomness $(r_1,\ldots,r_\ell)$ is a valid opening for a commitment $(y_1,\ldots,y_\ell)$ by checking
\[
y_i = \PRG(r_i) \xor m_i \cdot s_i
\]
for all $i\in [\ell]$.

This is the Naor's commtiement~\cite{JC:Naor91} and it has statistical binding and computational hiding properties.

It is easy to see that the construction above satisfies injectivity since $s_i$ is uniformly random for all $i\in[\ell]$ and $\PRG$ is injective.

It is also easy to see that the construction above satisifes trapdoor equivocality.
Due to the pseudorandomness of $\PRG$, $(\ck^\ast,\com^\ast)$ is computationally indistinguishable from $(\ck,\com)$.\footnote{We can use security of $\PRG(\tlr_{1,i})$ and $\PRG(\tlr_{0,i})$ if we open to $m_i =0$ and $m_i=1$, respectively.} In addition, it holds that $y_i^\ast  = \PRG(\tlr_{m_i,i}) \xor s_i^\ast \cdot m_i$ since $s_i^\ast = \PRG(\tlr_{0,i})\xor \PRG(\tlr_{1,i})$.
Thus, the trapdoor equivocality holds.

\else
\ifnum\submission=1
	\newpage
	 	\appendix
	 	\setcounter{page}{1}
 	{
	\noindent
 	\begin{center}
	{\Large SUPPLEMENTAL MATERIALS}
	\end{center}
 	}
	
	\setcounter{tocdepth}{2}
	 	\ifnum\noaux=1
 	\else
{\color{red}{We attached the full version of this paper as a separated file (auxiliary supplemental material) for readability. It is available from the program committee members.}}

\fi

\fi

	\setcounter{tocdepth}{2}
	\tableofcontents

	\fi
	\else
\fi

\end{document}